\newenvironment{keywords}{
       \list{}{\advance\topsep by0.35cm\relax\small
       \leftmargin=1cm
       \labelwidth=0.35cm
       \listparindent=0.35cm
       \itemindent\listparindent
       \rightmargin\leftmargin}\item[\hskip\labelsep
                                     \bfseries Keywords:]}
     {\endlist}
\newtheorem{Prop}{Proposition}
\newtheorem{Lem}{Lemma}
\newtheorem{Thm}{Theorem}
\newtheorem{Def}{Definition}
\newenvironment{Defprime}[1]
  {\renewcommand{\theDef}{\ref*{#1}$'$}%
   \addtocounter{Def}{-1}%
   \begin{Def}}
  {\end{Def}}
\newenvironment{Thmprime}[1]
  {%
   \addtocounter{Thm}{-1}%
   \begin{Thm}}
  {\end{Thm}}
\newenvironment{Lemprime}[1]
  {\renewcommand{\theLem}{\ref*{#1}$'$}%
   \addtocounter{Lem}{-1}%
   \begin{Lem}}
  {\end{Lem}}
\let\epsilon\varepsilon
\DeclareMathOperator*{\argmax}{arg\max}
\newcommand\kh[1]{\left(#1\right)}
\newcommand\fkh[1]{\left[#1\right]}
\newcommand\hkh[1]{\left\{#1\right\}}
\newcommand\given{\middle|}
\newcommand\eqn[1]{\begin{align}#1\end{align}}
\newcommand\eqns[1]{\begin{align*}#1\end{align*}}
\begin{document}
\setstretch{1.5}
\begin{titlepage}
\title{\textbf{Robust Contracts with Exploration}}
\author{Chang Liu\thanks{School of Economics, UNSW Business School; \href{mailto:chang.liu36@unsw.edu.au}{\nolinkurl{chang.liu36@unsw.edu.au}}.
I gratefully acknowledge funding from the National Science Foundation under grant DMS-1928930 and from the Alfred P. Sloan Foundation under grant G-2021-16778 during the Fall 2023 semester. I express my deep gratitude to my advisors Shengwu Li, Tomasz Strzalecki, Eric Maskin and Benjamin Golub for their guidance and support throughout the project, and over the years. An earlier version of this paper appeared as the first chapter of my Ph.D. dissertation at Harvard. I would also like to thank 
Sejal Aggarwal,
Ophir Averbuch,
Benjamin Brooks,
Shani Cohen,
Juan Dodyk,
Federico Echenique,
Yannai Gonczarowski,
Olivier Gossner,
Jerry Green,
Yingni Guo,
B{\aa}rd Harstad,
Oliver Hart,
Zo{\"e} Hitzig,
Michihiro Kandori,
Jacob Leshno,
Jonathan Libgober,
John Macke,
Erik Madsen,
Stephen Morris,
Giorgio Saponaro,
Brit Sharoni,
Cassidy Shubatt,
Kathryn Spier,
Haoqi Tong,
Alexander Wolitzky,
and especially Paul Milgrom for valuable comments and discussion. In addition, I thank the audiences at UNSW Sydney, Tsinghua University, Simons Laufer Mathematical Sciences Institute, the Asian Meeting of the Econometric Society, the Stony Brook International Conference on Game Theory, and the North American Winter Meeting of the Econometric Society for their insightful feedback.}}
\date{\begin{tabular}{ rl } 
First version:&November 30, 2022\\
This version:&February 2, 2024
\end{tabular}}
\maketitle
\thispagestyle{empty}
\begin{abstract}
We study a two-period moral hazard problem; there are two agents, with action sets that are unknown to the principal. The principal contracts with each agent sequentially, and seeks to maximize the worst-case discounted sum of payoffs, where the worst case is over the possible action sets. The principal observes the action chosen by the first agent, and then offers a new contract to the second agent based on this knowledge, thus having the opportunity to explore in the first period. We introduce and compare three different notions of dynamic worst-case considerations. Within each notion, we define a suitable rule of updating and characterize the principal's optimal payoff guarantee. We find that linear contracts are robustly optimal not only in static settings, but also in dynamic environments with exploration.
\end{abstract}
\begin{keywords}
Moral hazard, robustness, exploration, linear contracts, maxmin
\end{keywords}
\end{titlepage}

\setstretch{1.5}

\newpage\setcounter{page}{2}

\section{Introduction}
Moral hazard models, in which a principal designs a contract to incentivize an agent, have been extensively studied and widely applied. In many canonical moral hazard models, however, optimal contracts require precise knowledge of the environment: the set of all possible actions together with the (stochastic) mappings from actions to outcomes. This aspect raises practical concerns, because in reality the principal's knowledge is certainly not entirely correct. How should the principal design contracts that have robust guarantees even if some details are  incorrect? The emerging area of  robust contract design follows the \emph{Wilson Doctrine} \citep{WilsonDoc}, which advocates for realistic approaches that are detail free.  

The pioneer work by \cite{Carroll15} assumes that the principal knows only some of the actions available to the agent, and evaluates contracts based on their worst-case performance, over the unknown actions the agent might take. The results show that, very generally, the optimal contract is linear, which provides new foundations for the common use of  linear contracts in practice.

One suspicion, however, about the linear results in \cite{Carroll15} is how much they hinge on the principal's inability to explore the unknown, an opportunity that  arises naturally in models with multiple interactions.\footnote{One related but distinct criticism of the robust mechanism design literature is that most models are static in construction but assume commitment. We discuss this issue in the literature section. See also \cite{LM22} for a corresponding  perspective in the area of informationally robust mechanism design.} It is not even clear how to model (non-Bayesian) exploration in the robust paradigm. Specifically, if the principal can observe an agent's chosen action, then she can gain insights into actions that were initially unknown but might be subsequently undertaken. Furthermore, based on the agent's rationality, she may also exclude certain actions that were not chosen. In such environments, how should the principal design contracts to best utilize exploration opportunities? Specifically, what contracts respond best to new knowledge? Are linear contracts still robustly optimal with exploration? 

A suitable class of applications of robust models in contract design involves the principal hiring or consulting specialized agents that surpass her own expertise. This explains the principal's limited knowledge about all actions available to the agents and her lack of a prior belief regarding the unknown ones. For instance, consider an individual hiring gig workers from online platforms. While long-term contracts are typically not enforceable, she does have the opportunity to interact with a pool of workers. Given that the workers share similar professional training, the individual's knowledge about the capability of the pool from past experience is valuable for improving future interactions. Within this example, the main theoretical question of this paper is twofold: First, how should the individual structure contracts to best respond to new knowledge gained from exploration? Second, in anticipation of such opportunities, what contracts are optimal for acquiring new knowledge?

In the baseline model of this paper (Section \ref{sec:model}), we study a two-period moral hazard problem. There are two agents, whose action sets are unknown to the principal. The principal contracts with each agent sequentially to provide incentives. She observes the action chosen by the first agent, and then offers a new contract to the second agent based on this knowledge, thus having the opportunity to explore in the first period. The principal and agents are all risk neutral, and payments are constrained by limited liability.

The baseline model assumes that the principal knows only some available actions of the agents, but other unknown actions may also exist, and the principal does not even have a well-defined prior belief about these unknown actions. Faced with this nonquantifiable uncertainty, the principal seeks to maximize her worst-case discounted sum of payoffs, where the worst case is over the possible action sets. Consequently, it is crucial to articulate what actions the principal considers possible in each period, and to determine how the principal's beliefs about unknown actions are updated across periods.

The main result of this paper is that linear contracts are robustly optimal not just in static settings, but also in dynamic environments with exploration. In order to obtain this conclusion, we introduce and compare three distinct notions of dynamic worst-case considerations: \emph{independent technology}, \emph{advancing technology} and \emph{constant technology}. In the first period, the principal believes that the first agent's action set could be any set containing the known actions. After the principal offers a contract to the first agent and observes his response, a rule of updating must be specified to determine the actions the principal considers possible in the second period, and these three notions precisely vary based on the principal's updated beliefs about the subsequent action sets. To better understand the results and analysis, it is helpful to imagine there is an adversarial ``nature'' that selects the set of actions for the corresponding agent in each period to minimize the principal's payoff, and the three notions differ in the restrictions imposed on nature's available moves across periods. Within each notion, we define a suitable rule of updating and characterize the principal's optimal payoff guarantee, thereby concluding that linear contracts are robustly optimal.

We begin by considering the case of \emph{independent technology}, where the action sets of the two agents are not related; in other words, nature can select the action set for each agent independently. In this case, the choices made by the first agent do not provide the principal with information about what actions the second agent can take. Therefore, the learning aspect is essentially nullified, and the principal's overall payoff guarantee is maximized by adopting a straightforward approach: offering the optimal static contract identified by \cite{Carroll15} in both periods. Characterizing the case of independent technology creates a building block that enables us to further analyze the implications of dynamic environments with different levels of interdependence between agents' actions.

Next, we analyze the first restriction that facilitates meaningful exploration: the case of \emph{advancing technology} (Section \ref{sec:adv}). In this case, the action set may expand between periods, but cannot shrink. In other words, nature can only introduce new actions across periods, but is not allowed to delete old ones. The main result for the case of advancing technology is that linear contracts are robustly optimal \emph{period-by-period} (Theorem \ref{prop:1grow}). Toward this conclusion, we solve the principal's dynamic problem via backward induction. After the principal offers some first-period contract and observes the action chosen by the first agent, she learns that this action exists and may be taken again by the second agent. Moreover, this represents the best conjecture the principal can make in the second period, given that nature may introduce new actions that were not present in the first period. Therefore, the principal's second-period problem simplifies to a single-period problem in  \cite{Carroll15} with respect to the updated knowledge of the set of actions, and thus optimal second-period contracts are linear.

Going back to the first period, when the principal chooses a first-period contract to maximize her overall payoff guarantee, we establish the optimality of a linear first-period contract. The proof of this conclusion boils down to two steps. The first step shows that any nonlinear first-period contract can be improved into another linear contract, thereby (weakly) increasing the overall payoff guarantee (Lemma \ref{lem:affine}). The second step further shows that the maximum of the principal's first-period problem exists within the class of linear first-period contracts (Lemma \ref{lem:optlinear}). Combining these two steps, we show that, even with the opportunity to use any first-period contract for exploration, no other more complicated form of contracts provides a better payoff guarantee to the principal than linear ones. 

Moving on to an alternative notion with more restrictions, the case of \textit{constant technology}, we assume both agents share the same set of actions unknown to the principal (Section \ref{sec:const}). In other words, nature can neither introduce new actions across periods nor delete old ones. The main result for the case of constant technology is Theorem \ref{prop:1}, which shows that linear contracts are robustly optimal in both periods, although \textit{not period-by-period}. Specifically, the second-period analysis shows that, following \textit{nonlinear} first-period contracts, optimal second-period contracts may also be \textit{nonlinear} in some cases. Nonetheless, upon backward induction to the first period, it is robustly optimal to use linear first-period contracts, thereby ensuring optimal second-period contracts are also linear \textit{on the path}. 

The reason for obtaining different results compared to the previous case of advancing technology is a more subtle rule of updating. For simplicity of exposition, we assume the principal only knows one action available to the agents.\footnote{In Appendix \ref{sec:future}, we show that analogous results hold if the principal knows a general set of know actions.} After observing the action chosen by the first agent, she believes the action set could be any set that (i) contains the observed action in addition to the initially known action, and (ii) does not contain any action strictly better than the observed action under the first-period contract. We refer to such actions sets as \emph{compatible} (Definition \ref{def:comp}). Requirement (i) indicates that the principal learns the existence of the chosen action, and requirement (ii) captures the additional inference she can draw from the rationality of the first agent.

The primary distinction from the previous notion of advancing technology lies in the analysis of the second period. This is not a direct adaptation of the single-period problem in \cite{Carroll15}, precisely because the principal draws additional inferences from the rationality of the first agent, which excludes certain actions. Therefore, the analysis of the second period in the case of constant technology is a significant innovation point of this paper from a technical perspective. We fully characterize the principal's \textit{optimal second-period payoff guarantee}, and identify the contract that attains it in various cases.  The analysis reveals four ways the principal may respond to the knowledge gained from observing the chosen action (Lemma \ref{lem:second}). Specifically, the principal's optimal guarantee is achieved by offering the best among four contracts: (i) the first-period contract again, (ii) a modified version of the first-period contract with compensation for the second agent, and (iii) \& (iv) two linear contracts that correspond to the optimal static contracts in \cite{Carroll15}. As long as the first-period contract is nonlinear, and the observed action is such that one of the first two contracts is optimal, then the optimal guarantee is achieved by nonlinear contracts.

As concluding remarks of the paper, we discuss further results. First, we analyze the situation where the principal knows a set of actions available to the agents in the case of constant technology (Appendix \ref{sec:future}). We characterize the principal's optimal second-period payoff guarantee in closed form, and identify the contract that attains it in various cases (Lemma \ref{lem:secondprime}). In addition, as long as the set of known actions satisfies a condition called \textit{lower bound on marginal cost} (Definition \ref{def:lbmc}), linear contracts still outperform nonlinear ones (Theorem \ref{prop:1prime}). Next, we examine the structure of the optimal linear first-period contract in our dynamic model (Appendix \ref{subsec:opt}), and compare it with the optimal static contract identified by \cite{Carroll15}.

\paragraph{Related Literature}\label{subsec:lit}
Foundations for linear incentive contracts have received extensive research attention. The seminal work of  \cite{HM87} considers a dynamic framework where output is produced gradually over time, the agent is aware of his own progress, and the principal pays the agent at the end. Although the principal is allowed to use the entire history of output to determine the payment, the optimal contract depends only on the number of realizations of each output level, and is linear in these counts.  In a continuous time version of their problem where the agent controls the drift of a multidimensional Brownian motion, 
 the optimal contract can be expressed as a linear function that depends only on the endpoint.\footnote{Following \cite{HM87}, \cite{Sung95} further shows that the optimal contract can still be linear when the agent controls the variance; \cite{HS02} provide discrete time approximations of the continuous time model.} However, the stationary structure of their model is critical for this linearity result,\footnote{For example, \cite{SS93} show that a time-dependent technology makes the optimal contract nonlinear.} because linear contracts provide the agent with constant incentives to move forward independent of her past performance.  In our model, the principal offers multiple contracts during the process,  and exploration makes the principal's problem inherently non-stationary. Therefore, our paper considers a different form of foundation for linear contracts. Furthermore, \cite{Diamond98} and \cite{BGS18} provide arguments for linear contracts using static Bayesian frameworks. 

More recently, pioneered by \cite{Carroll15}, this issue has been investigated by a wave of research using robust models of contract design, which demands contract performance to be robust to limited knowledge of the environment. \cite{Carroll19}  provides a comprehensive review of this approach, as well as an overview of the evolving field of robust mechanism design that adopts many other notions of robustness. Most work in robust contract design, however, analyzes static or one-shot models, which precludes the opportunity for designers to better understand parts of the environment they do not know. While starting with nonquantifiable uncertainty, designers may still be able to gradually gain a better understanding of the environment in which they repeatedly engage through exploration. 
Our dynamic model provides the principal with the opportunity to explore the unknown,  in order to understand how the principal should design contracts that are robustly optimal given this exploration opportunity.

As stated by \cite{Carroll19}, ``another challenge is that trying to write dynamic models with non-Bayesian decision makers leads to well-known problems of dynamic inconsistency, except in special cases (e.g., \cite{ES03}). This may be one reason why there has been relatively little work to date on robust mechanism design in dynamic settings.'' Knowing the difficulty, we carefully specify the principal's ``beliefs'' in the second period of our two-period model to follow a recursive structure analogous to \cite{ES03}, in order to avoid dynamic inconsistency issues.

This paper is relevant to the recent research that examines robust contracting in different organizational environments. Specifically, \cite{DTFC} analyze moral hazard in teams,  \cite{MO19} study a common agency model, and \cite{CB22} investigate a model with double moral hazard. \cite{WCFC} provide a general framework that goes beyond simple bilateral relationships and allows for rich internal organizational structures. Our model analyzes a simple contracting environment, and aims to capture the main issue in terms of exploration. In particular, due to exploration, the analysis of our dynamic model cannot be directly derived using the conclusions in \cite{WCFC}.\footnote{We articulate the specific differences between our dynamic model and the general static framework in \cite{WCFC} in Subsection \ref{sec:period2}.}

The revealed preference reasoning in this paper is related to the recent work by \cite{BR23} and \cite{AG23}, who consider a static robust contracting problem with revealed preference data. In \cite{BR23} and \cite{AG23}, the principal's only knowledge is the agent's best responses to a finite number of given contracts, and she seeks to maximize her worst-case payoffs over all action sets that can rationalize the data. In the second period of our model, the principal's additional knowledge is exactly the first agent's best response to the first-period contract. Therefore, our second-period characterization contains a compensation component similar to their results. However, our model differs in that the principal also initially knows certain available action(s), so the structure of the optimal contracts is not exactly the same.\footnote{Another reason for similar but not identical results is due to the assumption on the observed actions:  \cite{BR23} and \cite{AG23} assume that the distribution of output (but not the effort cost) associated with the best response is observed. Instead, we assume that both the distribution and the cost are observed, as we believe this is more consistent with the assumption on the principal's initial knowledge.} More importantly, in their settings, the principal's revealed preference data are exogenously provided, whereas our model places a significant emphasis on endogenizing this aspect through the optimal exploration design in the first period.

From a broader perspective,  \cite{MO19} and \cite{CB22} are in a similar spirit to our work on how the designers' robust objectives interact with their policy choices. In \cite{MO19}, several principals compete to contract with a common agent. In \cite{CB22}, the principal  faces the choice of supplying input in the process of contracting with an agent. However, the maxmin objective in both studies is applied only once, whereas in our model it needs to be used in each of the two periods. In the area of informationally robust mechanism design, \cite{LM22} study durable good monopoly without commitment, and introduce  the notion of \emph{dynamically-consistent} worst-case information structure.

A number of other recent papers considering static models of robust contracts are related to our work, because  the principal is aware of some additional characteristics of the unknown actions in addition to the concern that they may exist. As with  \cite{Kambhampati22}, who studies performance evaluation of agents, although we do not place any restrictions on the possible action sets of an individual agent, we assume that the two agents have identical action sets. However, our assumption is for a different reason, in order to make the principal's observations of chosen actions valuable.  In addition, \cite{Antic21} assumes a lower bound on the productivity of all unknown actions of the principal. Furthermore, in \cite{DRT20}, the principal only knows the first moment of the distribution  over output induced by each possible action, but not the full distribution.
\\ \\ 
\indent The rest of the paper is organized as follows. Section \ref{sec:model} lays out the baseline model, and analyzes the case of independent technology. The first main part, Section \ref{sec:adv}, analyzes the case of advancing technology, and show that linear contracts are robustly optimal \textit{period-by-period}.  The second main part, Section \ref{sec:const}, then analyzes the case of constant technology and shows that, although optimal second-period contracts may be nonlinear in some cases following nonlinear first-period contracts,  linear first-period contracts maximize the overall payoff guarantee, ensuring that optimal second-period contracts remain linear \textit{on the path}. Section \ref{sec:concln} concludes. Appendix \ref{app:proof} contains the proofs of all results in the main text. Appendices \ref{sec:future} and \ref{subsec:opt} present further results.

\section{Model}\label{sec:model}
\subsection{Notation}
We denote by $\Delta\kh{ {X}}$ the set of (Borel) probability measures on a set $ {X}\subseteq \mathbb{R}$, equipped with the weak topology. For $x\in {X}$, we write $\delta_x$ for the degenerate distribution that puts probability one on $x$. 

\subsection{Setup}

The baseline model is a two-period moral hazard problem, consisting of a principal (she) and two agents (he). The principal contracts with each agent sequentially to provide incentives, and the reservation payoff of the agents is zero. All parties are assumed to be risk neutral. The principal's discount factor is $\beta\in\kh{0,{\infty}}$.

In each period ($t=1,2$), agent $t$ takes a costly action that results in a stochastic output. The realized output $y$ belongs to a set $ {Y}$ of possible output values. Assume $ {Y}$ is a compact subset of $\mathbb{R}$, either finite or infinite, and normalize the lowest possible output to zero: $\min\kh{ {Y}}=0$.

An \textit{action} of the agents, $a$,  is a modeled as a pair $a=\kh{F,c}\in \Delta\kh{ {Y}}\times\mathbb{R}^+$, with the interpretation that if an agent chooses action $a$, he incurs cost $c$, and output is drawn $y\sim F$. We equip $\Delta\kh{ {Y}}\times\mathbb{R}^+$ with the natural product topology. 

A \textit{techonology} is a (nonempty and) compact set of possible actions. Agent $t$ has technology $ {A_t}\subseteq \Delta\kh{ {Y}}\times\mathbb{R}^+$, which only they know but the principal does not.  The principal general compact set $A_0$ of available actions.  To ensure that the principal may benefit from
contracting with the agents, assume that there exists $\kh{F,c}\in A_0$ such that $\mathbb{E}_{F}\fkh{y}-c>0$.\footnote{Note that it is necessary for the principal to know at least one action that guarantees a strictly positive surplus, because otherwise it is always possible that the agents are not able to produce anything of value.}

To capture the idea of exploration, assume that the principal observes the action chosen by agent $1$, and then offers a new contract to agent $2$ based on this knowledge. The chosen action itself, however, is not contractible.\footnote{It is a strong assumption that the chosen action becomes observable to the principal, especially since $F$ represents a distribution. One interpretation is that each period summarizes (the ``average'' state of) a horizon for which the contract needs to remain fixed, while the agent is repeatedly taking action. During this process, the principal can keep observing him and figure out what action must be taken, in particular what $F$ and $c$ are. However, knowing that the action exists is still not the same as being able to write it into a contract. The action itself may be too complex to be accurately described in contract terms, or its inclusion into the contract may be directly prohibited by law.} Payments to the agents can only depend on the realized output, $y$. 

Assume that the agents have limited liability, so the payment to them can never be strictly negative. A  \textit{contract} is a continuous\footnote{The continuity assumption is made only to ensure the existence of best responses of the agents. This assumption becomes vacuous if $Y$ is a finite set, and can also be weakened to upper semicontinuity with additional verifications. See also \citet[footnote 1]{Carroll15}, \citet[footnote 3]{WCFC}, \citet[footnote 1]{CB22}.} function $w: {Y}\to \mathbb{R}^+$ such that $w\kh{0}=0$. One foundation for $w\kh{0}=0$ is \emph{two-sided limited liability},\footnote{See also \citet[Definition 6]{BR23}.} which also requires that the contracts never pay more than output: $0\le w\kh{y}\le y$ for all values of $y$. We do not explicitly impose two-sided limited liability, but only view it as a possible explanation for $w\kh{0}=0$.\footnote{Another foundation for $w\kh{0}=0$ is the standard \emph{free disposal condition}, plus a \emph{lowest support condition} on the agents' possible actions. We say a technology $A$ satisfies the \emph{lowest support condition} if, for all $\kh{F,c}\in A$, the lowest output $0$ is in the support of $F$. Under these two conditions, the principal will only offer contracts with $w\kh{y}\ge w\kh{0}$ for all  $y$, because otherwise the agent may discard output to receive more payments. Given limited liability, it is then without loss of generality to focus on contracts with $w\kh{0}=0$, since a constant shift does not affect the agent's incentives, but only increases the principal's payoff. That is, if $w\kh{0}>0$, let $\tilde{w}\kh{y}=w\kh{y}-w\kh{0}\ge 0$ be another valid contract. the agent's chosen action does not change if the principal instead offers $\tilde{w}$, but this increases the principal's payoff by $w\kh{0}$. }

The timing within each period $t$ is summarized as follows: 
\begin{enumerate}
\item The principal offers a contract $w_t$.
\item Agent $t$ chooses $a_t=\kh{F_t,c_t}\in {{A_t}}$, or quits the relationship (zero payoff for both parties). 
\item Output $y_t\sim F_t$ is realized.
\item Payoffs $y_t-w_t\kh{y_t}$ to the principal and $w_t\kh{y_t}-c_t$ to agent $t$.
\end{enumerate}

The principal's objective is to maximize her worst-case expected discounted sum of payoffs over {all possible technologies}.  Therefore, it is crucial to articulate what actions the principal considers possible in each period, and to determine how the principal's beliefs about unknown actions are updated across periods.  Addressing this critical gap in the existing literature, we introduce and compare three distinct notions of dynamic worst-case considerations: (i) \emph{independent technology} $A_1\perp A_2$, (ii) \emph{advancing technology} $A_1\subseteq A_2$,  and (iii) \emph{constant technology} $A_1=A_2$.

In the following sections, we define a suitable rule of updating within each notion and characterize the principal's optimal payoff guarantee. The conclusion is that linear contracts are robustly optimal in all three notions. To better understand the connections and distinctions among the three notions, it is helpful to imagine there is an adversarial ``nature'' that selects the technology for the corresponding agent in each period to minimize the principal's payoff. The three notions differ in the restrictions imposed on the moves available to nature across periods. 

\subsection{Independent Technology}
We begin by considering the case of \emph{independent technology} $A_1\perp A_2$, where the technology of the two agents $A_1$ and $A_2$ are not related; in other words, nature has the flexibility to select the technology for each agent independently. In this case, the choice made by agent $1$ does not yield any information for the principal regarding the potential actions agent $2$ might take. Therefore, the learning aspect is essentially nullified, and the principal's overall payoff guarantee is maximized by adopting a straightforward approach: offering the optimal static contract identified by \cite{Carroll15} in both periods.

We briefly recap the analysis in \cite{Carroll15}, as it lays the foundation for subsequent analyses. It is relatively straightforward to describe the behavior of the agents. In each period $t$, given contract $w$ and technology $ {A}$, agent $t$ chooses an action $\kh{F,c}\in A$ to maximize his expected utility, so the best response correspondence is given by
\eqns{ BR\kh{w\given {A}}\equiv \argmax_{\kh{F,c}\in {A}}\hkh{\mathbb{E}_F\fkh{w\kh{y}}-c}.}
The principal's single-period expected payoff under technology $ {A}$ is denoted by
$$V\kh{w\given {A}}\equiv \max_{\kh{F,c}\in BR\kh{w\given {A}}}\,\mathbb{E}_F\fkh{y-w\kh{y}},$$
where we assume ties are broken in the principal's favor if the agent is indifferent among several actions.\footnote{This tie-breaking assumption ensures the existence of optimal contracts, and minimizes the departure from standard models. Other tie-breaking rules will lead to essentially the same results, but may introduce technical complications. For example, the principal's optimal payoff guarantee may be approached, but not achieved, by linear contracts. See also \citet[Section D]{Carroll15}, \citet[footnote 4]{DTFC}, \cite{CB22}.} The principal's objective is to choose a contract $w$ to maximize her worst-case expected payoff $$V(w) \equiv \inf _{A \supseteq A_{0}} V\kh{w\given {A}}.$$

The key result of \cite{Carroll15} is that the principal's optimal single-period payoff guarantee, $\max _{w} V(w)$, is attained by a linear contract  $w\kh{y}=sy$. Specifically, the solution to the principal's static problem can be summarized as follows:
\begin{enumerate}
\item Maximize $\sqrt{\mathbb{E}_{F}[y]}-\sqrt{c}$ over ${\kh{F,c}\in A_0}$, with solution ${a^*=\kh{F^*,c^*}}$.
\item Set $s^*={\sqrt{c^*/\mathbb{E}_{F^*}\fkh{y}}}$ as the share, and offer {linear contract} $w\kh{y}=s^* y$.
\end{enumerate}
The resulting optimal guarantee is equal to $\left(\sqrt{\mathbb{E}_{F^{*}}[y]}-\sqrt{c^{*}}\right)^{2}$. Consequantly, in the case of independent technology, the principal's \emph{overall payoff guarantee} is maximized by offering $w_{1}(y)=w_{2}(y)=s^{*} y$, and is equal to $\kh{1+\beta}\left(\sqrt{\mathbb{E}_{F^{*}}[y]}-\sqrt{c^{*}}\right)^{2}$.

\section{Advancing Technology}\label{sec:adv}
The case of independent technology might be overly pessimistic, as it completely prevents the principal from learning about the technology through agent $1$'s actions. Essentially, with no restriction on nature's moves, the principal is hindered from learning through exploration. In this section, we analyze the first restriction that facilitates meaningful exploration: the case of \textit{advancing technology}  $A_1\subseteq A_2$. Here, the technology may advance between periods, but cannot downgrade. In other words, nature can only introduce new actions across periods, but is not allowed to delete old ones.

The main result for the case of advancing technology is Theorem \ref{prop:1grow}, which shows that linear contracts are robustly optimal \emph{period-by-period}.  That is,  linear contracts are also optimal in terms of utilizing the exploration opportunity, making them even more robust.

\subsection{Rule of Updating and Second Period Analysis}
As in the previous case of independent technology, the principal maximizes her worst-case expected discounted sum of payoffs over {all possible technologies}. In the first period, she believes that agent $1$'s technology $ {A}_1$ could be any technology such that $A_1 \supseteq A_0$. Taking into account possible technological advances after the first period, the principal's rule of updating is defined as follows:
\eqn{\begin{varwidth}{0.8\displaywidth}
After the principal offers contract $w_1$ and observes the action $a_1$ chosen by agent $1$,  she believes that agent $2$'s technology $ {A}_2$ could be any technology such that $A_2 \supseteq A_0\cup \hkh{a_1}$.\end{varwidth}\label{eqn:newupdate}}
That is, the principal learns that action $a_1$ exists in $A_1$ (in addition to the initially known set $A_0$), and believes that agent $2$ may also choose this action again (since $A_1\subseteq A_2$). Moreover, this represents the best conjecture the principal can make in the second period, given that nature may introduce new actions that were not present in the first period.

We solve the principal's dynamic problem via backward induction. With the update rule \eqref{eqn:newupdate}, the principal's second-period problem simplifies to a single-period problem in \cite{Carroll15}. Specifically, in the second period, the principal chooses a second-period contract $w_2$ to maximize her worst-case payoff $$V_{2}\left(w_{2} \given a_{1}\right) \equiv \inf _{A_{2} \supseteq A_{0} \cup\left\{a_{1}\right\}} V\left(w_{2} \given A_{2}\right).$$

Applying \cite{Carroll15}'s result to the updated knowledge on technology, we conclude that the optimal second-period contract is linear, and the resulting \textit{optimal second-period payoff guarantee} is ${V}_2^*\kh{a_1}=\Phi\kh{a_1}^2$, where \eqn{\Phi\left(a_{1}\right) \equiv \max _{a \in A_{0} \cup\left\{a_{1}\right\}}\left\{\sqrt{\mathbb{E}_{F_{a}}[y]}-\sqrt{c_{a}}\right\}.\label{eqn:Phi}}
Note, here and throughout the analysis below, we denote the output distribution and cost associated with any generic action $a$ by {$F_a$ and $c_a$}, respectively.

\subsection{First Period Analysis}
Going back to the first period, if the principal offers the first-period contract $w_1$ and agent $1$  chooses action $a_1=\kh{F_1,c_1}$, her \textit{interim payoff guarantee}, defined as her payoff in the first period plus the discounted optimal second-period payoff guarantee, is given by
$$U\kh{w_1\given   a_1}\equiv  \mathbb{E}_{F_1}\fkh{y-w_1\kh{y}}+\beta\cdot  V_2^*\kh{a_1}.$$ 
Since she believes that agent $1$'s true technology $ {A}_1$ could be any technology such that $A_1 \supseteq A_0$, her \textit{overall payoff guarantee}, defined as the worst-case interim payoff guarantee over all possible technologies ${A}_1$, is given by
$$U\kh{w_1}\equiv\inf_{  A_1 \supseteq A_0}\hkh{\max_{a_1\in BR\kh{w_1\given {A}_1} }U\kh{w_1\given  a_1}},$$
where, once again, we assume ties are broken in her favor.

The principal's first-period problem is to choose a first-period contract $w_1$ to maximize her overall payoff guarantee $U\kh{w_1}$. We are now ready to state the main result for this section,  Theorem \ref{prop:1grow}, which shows the maximum exists and is achieved by a linear contract.
\begin{Thm}\label{prop:1grow}
In the case of advancing technology, there exists a linear first-period contract $w_1$ that maximizes the principal's overall payoff guarantee ${U}\kh{w_1}$.
\end{Thm}
Even with the opportunity to use the first-period contract as a means of exploration, no other more complicated form of contracts provides the principal with a better payoff guarantee than linear ones. 

The proof of Theorem \ref{prop:1grow} boils down to  two steps. The first step, Lemma \ref{lem:affine}, shows that any nonlinear first-period contract is outperformed by some linear one. The second step, Lemma \ref{lem:optlinear}, further shows that the maximum of the principal's first-period problem exists within the class of linear first-period contracts. 

\subsubsection{Proof Step 1: Improving Nonlinear Contracts}

We start from any arbitrary first-period contract $w_1$, and construct another linear contract $\hat{w}_1$ that provides the principal with a weakly higher overall payoff guarantee. Thus, any nonlinear contract can be improved by a linear one. 

For any  first-period contract $w_1$, let $\left(F_{0}, c_{0}\right) \in A_{0}$ be agent 1's best response when his technology is $A_1$ is just the initially known $A_0$, and let $\hat{w}_1$ denote the following linear contract:
\eqn{\hat{w}_1\kh{y}=s_1 y\quad\text{with}\quad s_1=\frac{\mathbb{E}_{F_0}\fkh{w_{1}\kh{y}}}{\mathbb{E}_{F_0}\fkh{y}}\ge 0.\label{eqn:affine}}
The procedure of constructing the linear $\hat{w}_1$ is depicted in Figure \ref{fig:w1hat}.
\begin{figure}[!htbp]\centering
\includegraphics[height=0.35\textheight]{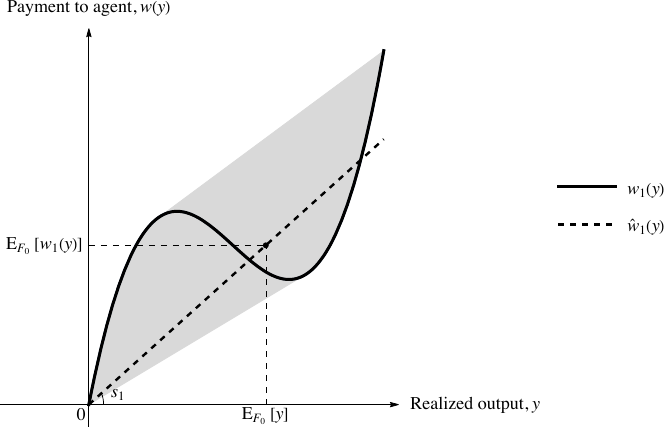}
\caption{The linear contract $\hat{w}_1$ constructed from $w_1$.}\label{fig:w1hat}
\end{figure}
The solid curve represents first-period contract $w_1$, which may be nonlinear and non-monotonic. Consider the point $\kh{\mathbb{E}_{F_0}\fkh{{y}},\mathbb{E}_{F_0}\fkh{{w}_1\kh{y}}}$, whose coordinates are the expected output and the  expected payment to agent $1$ if he takes action $a_0=\kh{F_0,c_0}$. This point must lie within the convex hull of the curve $w_1$, represented by the shaded area in the figure. The constructed linear contract  $\hat{w}_1$ is exactly the dashed line connecting the origin and this point, with a corresponding slope denoted by $s_1$.

Note that the linear contract $\hat{w}_1$ is chosen such that if agent 1 takes the action $a_0$, his payoff will be exactly equal under  $\hat{w}_1$ as under $w_{1}$:
$$\mathbb{E}_{F_0}\fkh{\hat{w}_1\kh{y}}-c_0=s_1 \mathbb{E}_{F_0}\fkh{y}-c_0=\mathbb{E}_{F_0}\fkh{w_{1}\kh{y}}-c_0.$$
We will show that the principal's overall payoff guarantee is at least as high under $\hat{w}_1$ as it is under $w_{1}$; that is, $U\kh{\hat{w}_1}\ge U\kh{w_{1}}$.\footnote{Unlike the main text of \cite{Carroll15}, which uses linear relations between the principal's and agent's payoffs to characterize the payoff guarantee of any contract, this is an adaptation of the alternative approach suggested by Lucas Maestri in  \citet[Appendix C]{Carroll15} to the two-period model.} 
\begin{Lem}\label{lem:affine}
Let $w_1$ be any first-period contract. The linear contract $\hat{w}_1$ defined by equation \eqref{eqn:affine} satisfies $U\kh{\hat{w}_1}\ge U\kh{w_{1}}$.
\end{Lem}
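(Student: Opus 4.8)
The plan is to show $U(\hat{w}_1) \ge U(w_1)$ by a ``coupling'' argument at the level of technologies: for every technology $A \ni a_0$ that the adversary might pose against $\hat{w}_1$, I will exhibit a technology $A'$ that is at least as bad for the principal against $w_1$, with the property that the principal's interim payoff guarantee under $A'$ facing $w_1$ is no larger than under $A$ facing $\hat{w}_1$. Since $U(\cdot)$ is an infimum over technologies of a (favorably tie-broken) best-response payoff, establishing such a map $A \mapsto A'$ that weakly lowers the objective for $w_1$ suffices to conclude $U(\hat{w}_1) \ge U(w_1)$. The footnote signals that the right machinery is the dual/linear-payoff-frontier approach (Maestri's alternative in Carroll's appendix) adapted to two periods, so I expect the construction of $A'$ to proceed action-by-action.

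First I would fix a technology $A$ and an action $\hat{a}_1 = (\hat F_1, \hat c_1) \in BR(\hat{w}_1 \mid A)$ that agent $1$ plays against the linear contract; this realizes some interim guarantee $U(\hat{w}_1 \mid \hat{a}_1)$. The key geometric fact is that $\hat{w}_1$ lies weakly below $w_1$ in expectation along the relevant frontier: by construction $s_1 = \mathbb{E}_{F_0}[w_1(y)]/\mathbb{E}_{F_0}[y]$ makes agent $1$'s payoff from $a_0$ identical under the two contracts, and because $(\mathbb{E}_{F_0}[y], \mathbb{E}_{F_0}[w_1(y)])$ sits in the convex hull of the graph of $w_1$, the line $\hat{w}_1$ cannot dominate $w_1$ everywhere. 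I would use this to construct, for the action $\hat{a}_1$, a corresponding first-period action $a_1$ available under $w_1$ that gives agent $1$ at least as high a payoff under $w_1$ as $\hat{a}_1$ gives under $\hat{w}_1$, while giving the principal a weakly lower first-period payoff $\mathbb{E}_{F_1}[y - w_1(y)] \le \mathbb{E}_{\hat F_1}[y - \hat{w}_1(y)]$. The introduction's remark that ``our construction is stronger than necessary,'' dominating the principal in \emph{both} periods, tells me the target action $a_1$ should be built so that it also weakens the second-period guarantee.

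The second-period comparison is where Lemma \ref{lem:second} does the heavy lifting: $V_2^*(w_1, a_1) = \Phi(w_1, a_1)^2$ is an explicit maximum of four terms, so once $a_1$ is pinned down I would bound each of the four terms in $\Phi(w_1, a_1)$ against the corresponding terms in $\Phi(\hat{w}_1, \hat{a}_1)$. The terms $\sqrt{\mathbb{E}_{F_0}[y]} - \sqrt{c_0}$ and $\sqrt{\mathbb{E}_{F_1}[y]} - \sqrt{c_1}$ depend on the primitive actions and the incentive-gap term $g$ scales with the first-period payments, so the contract-comparison inequalities from the previous paragraph should translate directly into $\Phi(w_1, a_1) \le \Phi(\hat{w}_1, \hat{a}_1)$ term by term. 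Combining the first-period and second-period inequalities yields $U(w_1 \mid a_1) \le U(\hat{w}_1 \mid \hat{a}_1)$, and then passing to the worst case over the adversary's choices gives the lemma.

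The main obstacle I anticipate is the construction of the companion action $a_1$ (and the ambient technology $A'$) for the nonlinear contract, together with verifying it is a genuine best response that the favorable tie-breaking actually selects. The difficulty is that $w_1$ may be non-monotonic and have kinks, so ``the action that mimics $\hat{a}_1$'' need not be a single point but may require a suitable mixture over outputs chosen to match agent $1$'s payoff while minimizing the principal's; ensuring such a mixture lies in $\Delta(Y)$, respects the incentive inequalities that keep $A'$ compatible in the second period, and simultaneously delivers the \emph{both-periods} domination is the delicate core of the argument. Everything else — the geometry of $\hat{w}_1$ via the convex hull, and the term-by-term bound on $\Phi$ — should be routine once this action is correctly specified.
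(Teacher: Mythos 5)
Your overall architecture is the paper's: fix an action $a_1=(F_1,c_1)$ that agent $1$ might take against $\hat{w}_1$, produce a companion action $a_1'$ that is incentive-feasible against $w_1$, and show the interim guarantee is weakly lower under $(w_1,a_1')$ period by period, using the closed form $V_2^*=\Phi^2$ from Lemma \ref{lem:second} for the second period. But the proposal stops exactly at the step you yourself flag as ``the delicate core'': the companion action is never constructed, and the route you sketch for finding it (a mixture over outputs tailored to the kinks and non-monotonicity of $w_1$, justified by the convex-hull picture) points away from what actually works. The construction never engages with the shape of $w_1$ at all. The only facts used are the scalar identity $\mathbb{E}_{F_0}[w_1(y)]=s_1\mathbb{E}_{F_0}[y]$ built into the definition of $s_1$, together with $w_1(0)=0$: for any $\lambda\in[0,1]$ the rescaled distribution $F'=\lambda F_0+(1-\lambda)\delta_0$ satisfies $\mathbb{E}_{F'}[w_1(y)]=\lambda\,\mathbb{E}_{F_0}[w_1(y)]=s_1\mathbb{E}_{F'}[y]$, i.e., under $w_1$ such actions are paid exactly as if the contract were linear with slope $s_1$. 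So when $\mathbb{E}_{F_1}[y]<\mathbb{E}_{F_0}[y]$ one takes $a_1'=(\lambda F_0+(1-\lambda)\delta_0,\,c_1)$ with $\lambda=\mathbb{E}_{F_1}[y]/\mathbb{E}_{F_0}[y]$; then the agent's payoff, the principal's first-period payoff, the incentive gap, and all four entries of the quadruple in $\Phi$ coincide \emph{exactly} with their counterparts under $(\hat{w}_1,a_1)$, so no term-by-term inequality is needed --- the interim guarantees are equal by identity.

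The rescaling device cannot produce expected output above $\mathbb{E}_{F_0}[y]$, so a case split is unavoidable: when $\mathbb{E}_{F_1}[y]\ge\mathbb{E}_{F_0}[y]$ one takes $a_1'=a_0$ itself, notes $g(w_1,a_0)=0$ so it is trivially feasible, bounds the first-period payoff by $(1-s_1)\mathbb{E}_{F_0}[y]\le(1-s_1)\mathbb{E}_{F_1}[y]$, and observes that the quadruple for $(w_1,a_0)$ degenerates to $\max\{\sqrt{\mathbb{E}_{F_0}[y-w_1(y)]},\,\sqrt{\mathbb{E}_{F_0}[y]}-\sqrt{c_0}\}$, each entry of which is dominated by an entry of $\Phi(\hat{w}_1,a_1)$. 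Your proposal contains no such case split and no mechanism for high-output deviations. A smaller imprecision: your criterion for the companion action (``gives agent $1$ at least as high a payoff under $w_1$ as $\hat{a}_1$ gives under $\hat{w}_1$'') is not the relevant feasibility condition; what is needed is $g(w_1,a_1')\ge 0$, i.e., weak preference over $a_0$ under $w_1$, which is what makes $\{a_0,a_1'\}$ a legitimate adversarial technology. In short: right skeleton, but the two ideas that make it walk --- rescaling $F_0$ toward $\delta_0$ to match expected output, and falling back on $a_0$ when matching is impossible --- are missing.
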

\begin{proof}
All proofs of the results in the main text are in Appendix \ref{app:proof}.
\end{proof}

Suppose the principal offers the linear first-period contract $\hat{w}_1$, and agent $1$ chooses action $a_1$ from the true technology ${A}_1$. We need to show that the principal's interim payoff guarantee, $U\kh{\hat{w}_1\given a_1}$, is at least $U\kh{w_{1}}$. If there exists another action $a_1'$, which may be  taken by agent $1$ under $w_1$ and some other technology $A_1'$, such that \eqn{U\kh{\hat{w}_1\given a_1}\ge U\kh{w_{1}\given  a_1'}\label{eqn:ineq4}}
holds, then $U\kh{\hat{w}_1\given a_1}\ge U\kh{w_{1}\given a_1'}\ge U\kh{w_{1}}$, and thus the desired conclusion is established. The proof of Lemma \ref{lem:affine} explicitly constructs such an alternative action $a_1'$ for each possible $a_1$. 

Specifically, the principal's interim payoff guarantee consists of two parts, her payoff in the first period, plus the discounted optimal second-period payoff guarantee. The characterization of the second part in the previous subsection is crucial for the construction of $a_1'$, enabling the desired inequality \eqref{eqn:ineq4} to hold period by period: under $\kh{\hat{w}_1\given a_1}$, the principal's payoff in the first period and her guarantee in the second period are both higher than under $\kh{{w}_1\given a_1'}$.

By establishing Lemma \ref{lem:affine}, we have shown that any nonlinear first-period contract can be improved by a linear one. To finalize the  proof of Theorem \ref{prop:1grow}, it suffices to show that, within the class of linear contracts, the maximum of $U\kh{w_1}$ exists. We will set up a program that characterizes the principal's overall payoff guarantee of an arbitrary linear first-period contract, and prove the existence of maximum through its continuity in the first-period share.

\subsubsection{Proof Step 2: Payoff Guarantee of a Linear Contract}\label{subsec:payoffg}
 To conclude the proof of Theorem \ref{prop:1grow}, we need to establish the following Lemma \ref{lem:optlinear}. 
\begin{Lem}\label{lem:optlinear}
Within the class of linear first-period contracts, there exists an optimal one for the principal.
\end{Lem}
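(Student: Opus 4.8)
The plan is to parametrize linear first-period contracts by their share $s_1 \in [0,1]$ (restricting to $s_1 \le 1$ is without loss, since a share exceeding $1$ guarantees the principal a nonpositive first-period payoff and is dominated), and to show that the overall payoff guarantee, viewed as a function $s_1 \mapsto U(s_1 y)$, is continuous on the compact set $[0,1]$. Continuity on a compact domain immediately yields the existence of a maximizer by the extreme value theorem, which is exactly Lemma \ref{lem:optlinear}. So the entire burden is continuity.

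\textbf{Setting up the program.}
First I would write $U(s_1 y)$ explicitly using the definitions from the preliminary analysis. For a linear first-period contract $w_1(y) = s_1 y$, the overall guarantee is
\eqns{U(s_1 y) = \inf_{A \ni a_0}\;\max_{a_1 \in BR(s_1 y \mid A)} \Big\{ \mathbb{E}_{F_1}[y - s_1 y] + \beta \cdot \Phi(s_1 y, a_1)^2 \Big\},}
where $\Phi$ is the quadruple maximum from Lemma \ref{lem:second}. The key simplification is that, because the first-period contract is now linear, every term inside $\Phi(s_1 y, a_1)$ and the incentive gap $g(s_1 y, a_1)$ depend on $a_1 = (F_1, c_1)$ only through the two scalars $\mu_1 \equiv \mathbb{E}_{F_1}[y]$ and $c_1$. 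Concretely $\mathbb{E}_{F_1}[y - s_1 y] = (1-s_1)\mu_1$, and $g(s_1 y, a_1) = (s_1 \mu_1 - c_1) - (s_1 \mu_0 - c_0)$ where $\mu_0 \equiv \mathbb{E}_{F_0}[y]$. Thus the whole optimization collapses to a problem over the scalar pair $(\mu_1, c_1)$ subject to the agent's incentive constraint $s_1 \mu_1 - c_1 \ge s_1 \mu_0 - c_0$ (so that $a_1$ is preferred to $a_0$) and $s_1\mu_1 - c_1 \ge 0$ (individual rationality). I would argue that the worst-case infimum over technologies reduces to a finite-dimensional program over $(\mu_1, c_1)$ in a compact feasible region, since extremal technologies can be taken to consist of the known action plus the single chosen action.

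\textbf{Proving continuity.}
The main obstacle is that $U(s_1 y)$ is an inf-over-technologies of a max-over-best-responses, a minimax structure where both the objective and the feasible best-response set vary with $s_1$; such maxmin values need not be continuous in general without a uniform-continuity or equicontinuity argument. The plan is to exploit the reduction above: each of the four terms defining $\Phi(s_1 y, a_1)^2$ is jointly continuous in $(s_1, \mu_1, c_1)$ (using the convention $\sqrt{x} = -\infty$ for $x<0$, which only reinforces lower semicontinuity where a term drops out), and the first-period payoff $(1-s_1)\mu_1$ is likewise jointly continuous. Since $\Phi^2$ is a finite maximum of jointly continuous functions it is jointly continuous, so the inner interim guarantee $U(s_1 y \mid a_1)$ is jointly continuous in $(s_1, \mu_1, c_1)$. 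I would then invoke a Berge-type maximum theorem: the best-response correspondence $A \mapsto BR(s_1 y \mid A)$, or equivalently the feasible set of $(\mu_1, c_1)$ pairs, varies continuously (upper and lower hemicontinuously) in $s_1$ because the defining inequalities are linear in $s_1$ with the feasible region being a compact set moving continuously. Applying Berge to the inner maximization yields continuity of the value as a function of $s_1$ and the technology, and then I would handle the outer infimum by showing the family is equicontinuous in $s_1$ uniformly over technologies — which follows because the $s_1$-dependence enters only through the Lipschitz-in-$s_1$ quantities $(1-s_1)\mu_1$, $s_1\mu_1$, $s_1\mu_0$ on a bounded domain. An infimum of a family of functions that are equicontinuous (equi-Lipschitz) in $s_1$ is itself continuous, giving continuity of $U(s_1 y)$.

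\textbf{Conclusion.}
Once continuity of $s_1 \mapsto U(s_1 y)$ on $[0,1]$ is established, the extreme value theorem delivers a maximizing share $s_1^*$, and the linear contract $w_1(y) = s_1^* y$ is optimal within the class of linear first-period contracts, proving Lemma \ref{lem:optlinear}. I expect the delicate point to be justifying that the outer worst-case infimum preserves continuity rather than merely upper semicontinuity; the equi-Lipschitz argument in $s_1$ (uniform over the compact parameter region for $(\mu_1, c_1)$) is the crux, and I would make the bound on this region explicit by noting that output lies in the compact set $Y$, so $\mu_1$ is bounded, and that in the worst case $c_1$ can be restricted to a bounded range by the individual-rationality and incentive constraints.
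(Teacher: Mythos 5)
Your proposal is correct and follows essentially the same route as the paper: reduce the overall guarantee of $w_1(y)=s_1y$ to a finite-dimensional program over $\left(\mathbb{E}_{F_1}[y],c_1\right)$ whose feasible region is compact and varies (upper and lower hemi-) continuously in $s_1$, observe that the objective is a finite maximum of jointly continuous functions, invoke Berge's maximum theorem to get continuity of the value in $s_1$, and conclude by the extreme value theorem. The paper streamlines the last step by applying Berge once to the fully collapsed program (so no separate equicontinuity argument for an outer infimum is needed) and disposes of shares $s_1<c_0/\mathbb{E}_{F_0}[y]$ by a domination argument before restricting to a compact interval, but these are presentational rather than substantive differences.
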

The proof of Lemma \ref{lem:optlinear} requires characterizing the overall payoff guarantee of an arbitrary linear first-period contract, which is the main focus here.

Assume the principal offers a linear first-period contract $w_{1}(y)=s_{1} y$ with $s_1\in\fkh{0,1}$, and agent 1 chooses $a_1=\kh{F_1,c_1}$ in response. The principal's optimal second-period payoff guarantee ${V}_2^*\kh{a_1}=\Phi\kh{a_1}^2$, with $\Phi$ defined by equation \eqref{eqn:Phi}.
Thus, her interim payoff guarantee is
\eqns{{U}\left(w_{1} \given a_{1}\right)=\mathbb{E}_{F_{1}}\left[y-w_{1}(y)\right]+\beta \cdot {V}_2^*\kh{a_1}=\left(1-s_{1}\right) \mathbb{E}_{F_{1}}[y]+\beta \cdot\Phi\kh{a_1}^2.}
The worst-case overall payoff guarantee minimizes the above expression over all $a_1$ that agent 1 may choose under some technology $A_1$. Note that agent 1 prefers action $a_1$ over all known actions $a\in A_0$ if and only if 
\eqns{\left(\mathbb{E}_{F_{1}}\left[w_{1}(y)\right]-c_{1}\right)-\left(\mathbb{E}_{F_{a}}\left[w_{1}(y)\right]-c_{a}\right)=\left(s_{1} \mathbb{E}_{F_{1}}[y]-c_{1}\right)-\left(s_{1} \mathbb{E}_{F_{a}}[y]-c_{a}\right) \geq 0,\quad\forall a\in A_0.} 
Moreover, agent 1 obtains at least his reservation payoff of zero, which can also be viewed as his payoff from the null action $\kh{\delta_0,0}$ that produces zero output at zero cost. Hence, the following program yields a lower bound on the principal's overall payoff guarantee
\eqn{\begin{split}\inf_{{F_1,c_1}}\quad&\kh{1-s_1}\mathbb{E}_{F_1}\fkh{{y}}+\beta\cdot   \Phi\kh{F_1,c_1}^2\\
\text{ s.t. }\,\,\,\,\,&\kh{s_1\mathbb{E}_{F_1}[y]-c_1}-\kh{ s_1\mathbb{E}_{F_a}[y]-c_a}\ge 0,\quad\forall a\in A_0\cup\hkh{\kh{\delta_0,0}},
\end{split}\label{eqn:proggrow}}
because the principal's  interim payoff guarantee can never be strictly lower than the infimum given by program \eqref{eqn:proggrow}.

Conversely, for any feasible $a_1=\kh{F_1,c_1}$ in program \eqref{eqn:proggrow}, 
agent 1 would take action $a_1$ in response to $w_1$ when his technology $ {A}_1=A_0\cup\hkh{a_1}$. The worst case over all such technologies leaves the principal with exactly her interim payoff guarantee, ${U}\kh{w_1\given a_1}=\kh{1-s_1}\mathbb{E}_{F_1}\fkh{{y}}+\beta\cdot\Phi\kh{a_1}^2$. Thus, if a solution to program \eqref{eqn:proggrow} exists (i.e., if infimum may be replaced by minimum), then the principal's payoff guarantee cannot be strictly higher than its minimum value. 

The above analysis shows that the worst-case overall payoff guarantee of any linear first-period contract $w_1\kh{y}=s_1 y$ is exactly characterized by program \eqref{eqn:proggrow}. In the proof of Lemma \ref{lem:optlinear} in Appendix \ref{app:proofgrow}, we formally show the existence of minimum in this program, and its continuity in the first-period share $s_1$. We first reformulate program \eqref{eqn:proggrow} as an equivalent maximization problem with continuous objective function and compact feasible region, and then invoke Berge's maximum theorem to prove the required existence and continuity. Since the overall payoff guarantee of a linear first-period contract $w_1\kh{y}=s_1 y$ is continuous in the first-period share $s_1$, it achieves a maximum. This maximum is also the optimal guarantee over all linear contracts.

Specifically, under a linear first-period contract $w_1$,  the expression of $V_2^*\kh{ a_1}=\Phi\kh{a_1}^2$ given by equation \eqref{eqn:Phi} gets simplified, thus showing that both the objective and the constraint of program \eqref{eqn:proggrow} depend on the choice variables $\kh{F_1,c_1}$ only through the value of $\kh{\mathbb{E}_{F_1}\fkh{y},c_1}$, and are continuous.  To complete the proof, we only need to show that the value of $\kh{\mathbb{E}_{F_1}\fkh{y},c_1}$ can be restricted to a compact region without affecting the infimum value of program  \eqref{eqn:proggrow}, and that region changes in a continuous\footnote{In the language of correspondences, both upper and lower hemicontinuous.} manner when the first period share $s_1$ changes.

Combining Lemmas \ref{lem:affine} and \ref{lem:optlinear}, we prove the main result of this section, Theorem \ref{prop:1grow}, which establishes the optimality of a linear first-period contract.

\section{Constant Technology}\label{sec:const}
In the previous section, we have focused on the case of advancing technology ($A_1\subseteq A_2$) and show that linear contracts are robustly optimal period-by-period in that notion of dynamic worst-case consideration. This section analyzes an alternative notion with more restrictions: the case of \textit{constant technology} $A_1=A_2=A$. Here, the two agents have the same action set unknown to the principal. In other words, nature can neither introduce new actions across periods nor delete old ones.

For simplicity of exposition, assume the principal knows only one action $a_0=\kh{F_0,c_0}\in A$ available to the agents, with $\mathbb{E}_{F_{0}}[y]-c_{0}>0$. In Appendix \ref{sec:future}, we show that analogous results hold if the principal knows a general set of know actions $A_0$ as in the baseline model.

The main result for the case of constant technology is Theorem \ref{prop:1}, which shows that linear contracts are robustly optimal in both periods, although \textit{not period-by-period}. Specifically, second period analysis (Subsection \ref{sec:period2}) shows that, following nonlinear first-period contracts, optimal second-period contracts may also be nonlinear in some cases. Nonetheless, upon backward induction to the first period (Subsection \ref{sec:period1}), it is robustly optimal to use linear first-period contracts, so optimal second-period contracts are also linear \textit{on the path}. The reason for obtaining different results compared to the previous case of advancing technology is due to a different and more subtle rule of updating, which we refer to as \textit{compatibility} (Definition \ref{def:comp}).

\subsection{Rule of Updating: Compatibility}\label{sec:comp}
As in the previous two cases, the principal maximizes her worst-case expected discounted sum of payoffs over all possible technologies. In the first period, she only knows the action $a_0$, and believes that the true technology $ {A}$ could be any technology such that $A \ni a_0$. After the principal offers contract $w_1$ and observes the action $a_1$ chosen by agent $1$, a rule of updating needs to be specified to determine the technologies that the principal considers possible. We say those possible technologies \textit{compatible} with $\kh{w_1,a_1}$, formally defined as follows.\footnote{This is an analogue of \emph{consistency} in solution concepts like perfect Bayesian equilibrium.}
\begin{Def}[Compatible]\label{def:comp}
Given $w_1$ and $a_1=\kh{F_1,c_1}$, a technology ${A}$ is \emph{compatible} with $\kh{w_1,a_1}$ if 
\begin{enumerate}
\item $A\supseteq\hkh{a_0,a_1}$.
\item $\mathbb{E}_{F}\fkh{w_1\kh{y}}-c\le \mathbb{E}_{F_1}\fkh{w_1\kh{y}}-c_1$ for all $\kh{F,c}\in  {A}$. 
\end{enumerate}
\end{Def}
Roughly speaking, a technology $A$ is compatible with $\kh{w_1,a_1}$ if it contains $a_1$ (in addition to $a_0$), and does not contain any action strictly better than $a_1$ under $w_1$. The first requirement in Definition \ref{def:comp} indicates that the principal learns that action $a_1$ exists (in addition to the initially known $a_0$), and believes that agent $2$ may also take this action again. The second requirement in Definition \ref{def:comp} captures the additional inference she can draw from agent $1$'s rationality in this case of constant technology: the true technology $A$ cannot contain any action $\kh{F,c}$ that leads to a strictly higher payoff for agent $1$, i.e., it is impossible that $\mathbb{E}_{F}\fkh{w_1\kh{y}}-c>\mathbb{E}_{F_1}\fkh{w_1\kh{y}}-c_1$.

The principal's dynamic problem is again solved via backward induction. In the second period, since the principal believes that $ {A}$ could be any technology {compatible} with $\kh{w_1,a_1}$, her problem is to choose a second-period contract $w_2$ to maximize her worst-case payoff:
$$V_{2}\kh{w_2\given w_1,a_1}\equiv\inf_{ {A}\text{ compatible with }\kh{w_1,a_1}}V\kh{w_2\given {A}}.$$
Note that this is not a direct adaptation of the single-period problem in \cite{Carroll15} (where $ {A}$ could be any technology containing $\hkh{a_0,a_1}$), precisely because  of her additional inference from agent $1$'s rationality in the definition of compatibility, which rules out the possibility that certain actions exist in $A$. In Subsection \ref{sec:period2}, we characterize the principal's {optimal second-period payoff guarantee}, $\hat{V}_2^*\kh{w_1,a_1}$, showing that this distinction matters. The maximum always exists, as we identify the contract that attains it; however, it may be achieved by a nonlinear $w_2$ if the corresponding $w_1$ is nonlinear.

Going back to the first period, if the principal offers first-period contract $w_1$ and the true technology $ {A}$ is such that agent $1$  chooses action $a_1=\kh{F_1,c_1}$, her {interim payoff guarantee} is given by
$$\hat{U}\kh{w_1\given   a_1}\equiv  \mathbb{E}_{F_1}\fkh{y-w_1\kh{y}}+\beta\cdot  \hat{V}_2^*\kh{w_1,a_1}.$$ 
Since she believes that the true technology $ {A}$ could be any technology such that $A \ni a_0$, her {overall payoff guarantee} is given by
$$\hat{U}\kh{w_1}\equiv\inf_{   {A}\ni   {a}_0}\hkh{\max_{a_1\in BR\kh{w_1\given {A}} }\hat{U}\kh{w_1\given  a_1}},$$
where again we assume ties are broken in her favor.

The principal's first-period problem is to choose a first-period contract $w_1$ to maximize her {overall payoff guarantee}. In Subsection \ref{sec:period1}, we show the maximum exists and is achieved by a linear contract.

\subsection{Second Period Analysis}\label{sec:period2}
We begin our analysis with the second period of the dynamic relationship, where the principal has offered some first-period contract $w_1$ and observed agent $1$'s selected action $a_1$. We fully characterize the principal's {optimal second-period payoff guarantee}, $\hat{V}_2^*\kh{w_1,a_1}$, and  identify the contract that attains it in various cases. The analysis reveals four ways the principal may respond to the knowledge gained from observing $a_1$, and in particular shows that if $w_1$ is nonlinear, then the optimal second-period payoff guarantee may be achieved by a nonlinear $w_2$.

The main result for the second period analysis is Lemma \ref{lem:second}, which shows that $\hat{V}_2^*\kh{w_1,a_1}$ is achieved by offering the best among four contracts: (i) the first-period contract $w_1$ again, (ii) a modified $w_1$ with compensation for agent $2$, and (iii) \& (iv) two linear contracts that correspond to the optimal static contracts in \cite{Carroll15}. As long as the first-period contract $w_1$ is nonlinear, and the observed action $a_1$ is such that one of the first two contracts is optimal, then $\hat{V}_2^*\kh{w_1,a_1}$ is achieved by nonlinear contracts.

Lemma \ref{lem:second} reveals that the analysis in this section is not a direct adaptation of the single-period problem in \cite{Carroll15}, since optimal contracts may not be linear. This difference is precisely due to the second requirement of compatibility, where the principal draws additional inferences from the rationality of agent $1$, excluding certain actions. Note that the analysis is also not covered by the recent work of \cite{WCFC}, which establishes a general static framework that allows for rich organizational structures, and identifies two properties of the counterparty's possible responses which jointly imply that a linear contract solves the principal's single-period maxmin problem. Specifically, their \emph{Richness} property requires that the set of possible responses to a given contract be sufficiently and unboundedly broad. The Richness property is violated in the case of constant technology exactly because of the principal's exploration and inference in the first period, since the true technology cannot contain any action that is strictly better for agent $1$ than the observed action under the first-period contract.\footnote{The other property in \cite{WCFC}, \emph{Responsiveness}, indicates that the counterparty's behavior is responsive to the incentive provided by expected payment, and allows comparison of the principal's payoff guarantees from two different contracts. The Responsiveness property is satisfied in our model. As a converse result, \cite{WCFC} also show that Responsiveness is necessary for linearity under a strengthened version of Richness. This result is in parallel with our analysis, since it is Richness that is not satisfied in our model.}

Suppose that in the first period the principal offers contract $w_1$ and observes agent $1$'s action $a_1=\kh{F_1,c_1}$. She learns that the true technology $A$ is compatible with $\kh{w_1,a_1}$; that is, it contains $a_0$ and $a_1$, and does not contain any action strictly better than $a_1$ for agent $1$ under $w_1$. 

In the second period, if she offers the same contract $w_2=w_1$, then she knows that agent $2$ will choose $a_1$ again because the two agents have the same technology. This exactly repeats her first-period payoff $\mathbb{E}_{F_1}\fkh{y-w_1\kh{y}}$ in the second period. Part \ref{part:1} of Lemma \ref{lem:second} below shows that, in some cases, doing so is already optimal for the principal, which means that an optimal second-period contract may be nonlinear following nonlinear first-period contracts.

Offering the same contract again is only one response of the principal to the knowledge gained by observing $a_1$, and there are plenty of other possible responses. For example, if the initially known action $a_0$ may lead to a higher payoff for the principal (i.e., $\mathbb{E}_{F_0}\fkh{y-w_1\kh{y}}>\mathbb{E}_{F_1}\fkh{y-w_1\kh{y}}$), then  it might be tempting for the principal to try to obtain the payoff $\mathbb{E}_{F_0}\fkh{y-w_1\kh{y}}$ instead.  However, achieving this payoff requires the principal to use $w_1$ to induce action $a_0$, and this would violate agent 2's incentive constraint. Indeed, in the first period, the chosen action $a_1$ provides agent 1 with a (weakly) higher payoff compared to the known action $a_0$, and this relationship gets transferred to the second period because both agents have the same technology. This gives rise to the following notion of the \emph{incentive gap}.
\begin{Def}[Incentive gap]\label{def:g}
Given $w_1$ and $a_1=\kh{F_1,c_1}$,  the \emph{incentive gap}, $g\kh{w_1,a_1}$, denotes the difference in agent $1$'s payoff between choosing $a_1$ and $a_0$. Formally, \eqns{g\kh{w_1,a_1}\equiv\kh{\mathbb{E}_{F_1}\fkh{w_1\kh{y}}-c_1}-\kh{\mathbb{E}_{F_0}\fkh{w_1\kh{y}}-c_0}.}
\end{Def}
If the principal wants to induce action $a_0$ using a contract ``similar to'' $w_1$, then agent 2 needs to be compensated  for not choosing $a_1$, and the amount of compensation increases with the incentive gap $g\kh{w_1,a_1}$. Part \ref{part:2} of Lemma \ref{lem:second} shows that the incentive gap sometimes becomes a real cost. Specifically, if ${\mathbb{E}_{F_{0}}\left[y-{w}_{1}(y)\right]}>{g\kh{w_1,a_1}}$, then the principal can offer to agent $2$ a modified version of $w_1$ with compensation in order to guarantee that her payoff in the second period is  at least $\kh{\sqrt{\mathbb{E}_{F_0}\fkh{y-w_1\kh{y}}}-\sqrt{g\kh{w_1,a_1}}}^2$. Moreover, the proof of Lemma \ref{lem:second} shows that this is the optimal payoff guarantee  using a modified version of $w_1$. Note that if the incentive gap is small, this value becomes close to $\mathbb{E}_{F_{0}}\left[y-{w}_{1}(y)\right]$, and may be better for the principal than simply offering $w_2=w_1$ again.

After observing $a_1$,  the principal learns that the true technology $A$ must contain $\hkh{a_0,a_1}$.  If the principal ignores the second requirement of compatibility (Definition \ref{def:comp}) and applies the single-period problem in  \cite{Carroll15}, her optimal guarantee would be equal to $\kh{\max\hkh{\sqrt{\mathbb{E}_{F_{0}}[y]}-\sqrt{c_{0}},\sqrt{\mathbb{E}_{F_{1}}[y]}-\sqrt{c_{1}}}}^2$, achieved by offering the better one of the two linear contracts, $w_2\kh{y}=s_2 y$ with $s_2=\sqrt{c_0/\mathbb{E}_{F_0}\fkh{y}}$ or $s_2=\sqrt{c_1/\mathbb{E}_{F_1}\fkh{y}}$. With the additional inference in place, the guarantee from this procedure can only increase. Parts \ref{part:3} and \ref{part:4} of Lemma \ref{lem:second} show that, when this payoff guarantee is larger than the previous two cases ($w_1$ again, or a modified $w_1$ with compensation),  it is optimal for the principal to offer the better of the two linear contracts, and doing so exactly attains this payoff guarantee.

We are now ready to present the main result of this subsection, Lemma \ref{lem:second}, which establishes the optimality of the aforementioned contracts. The principal's optimal second-period payoff guarantee is achieved  by offering the best among  the four contracts described above: $w_1$ again, modified $w_1$ with compensation, and the two linear contracts.
\begin{Lem}\label{lem:second}
Suppose the principal offers first-period contract $w_1$, and agent 1 chooses $a_1=\left(F_{1}, c_{1}\right)$ in response. The principal's optimal second-period payoff guarantee is $\hat{V}_2^*\kh{{w}_1, a_1}=\hat\Phi\kh{{w}_1, a_1}^2$, where
\eqn{\hat{\Phi}\kh{{w}_1, a_1}\equiv \max \left\{\sqrt{\mathbb{E}_{F_{1}}\left[y-{w}_{1}(y)\right]},\right.&\left.\sqrt{\mathbb{E}_{F_{0}}\left[y-{w}_{1}(y)\right]}-\sqrt{g\kh{w_1,a_1}},\sqrt{\mathbb{E}_{F_{0}}[y]}-\sqrt{c_{0}},\sqrt{\mathbb{E}_{F_{1}}[y]}-\sqrt{c_{1}}\right\} \notag\\
&\quad\quad\quad\quad\quad\quad\quad\quad\quad (\text{with }\sqrt{x}=-\infty\text{ for }x<0\text{ by convention}).\label{eqn:optsecond}}
Specifically,
\begin{enumerate}
\item\label{part:1} If $\sqrt{\mathbb{E}_{F_1}\fkh{y-w_1\kh{y}}}$ attains the maximum in equation \eqref{eqn:optsecond}, then the principal's optimal second-period payoff guarantee is achieved by $w_2=w_1$.
\item\label{part:2} If $\sqrt{\mathbb{E}_{F_0}\fkh{y-w_1\kh{y}}}-\sqrt{g\kh{w_1,a_1}}$ attains the maximum in equation \eqref{eqn:optsecond}, then the principal's optimal second-period payoff guarantee is achieved by \eqn{w_2\kh{y}=w_1\kh{y}+m\cdot \kh{y-w_1\kh{y}}\quad\text{with}\quad m=\sqrt{\frac{g\kh{w_1,a_1}}{\mathbb{E}_{F_0}\fkh{y-w_1\kh{y}}}}\in\fkh{0,1}.\label{eqn:m}}
\item\label{part:3} If $\sqrt{\mathbb{E}_{F_{0}}[y]}-\sqrt{c_{0}}$ attains the maximum in equation \eqref{eqn:optsecond}, then the principal's optimal second-period payoff guarantee is achieved by $w_2\kh{y}=s_2 y$ with $s_2=\sqrt{c_0/\mathbb{E}_{F_0}\fkh{y}}.$
\item\label{part:4} If $\sqrt{\mathbb{E}_{F_{1}}[y]}-\sqrt{c_{1}}$ attains the maximum in equation \eqref{eqn:optsecond}, then the principal's optimal second-period payoff guarantee is achieved by $w_2\kh{y}=s_2 y$ with $s_2=\sqrt{c_1/\mathbb{E}_{F_1}\fkh{y}}.$
\end{enumerate}
\end{Lem}
The proof of Lemma \ref{lem:second} mainly consists of two parts. The first part is to prove that, when each element in the quadruple defined by equation \eqref{eqn:optsecond} attains the maximum, the principal's payoff guarantee in the second period from offering the corresponding contract is exactly as claimed in the statement of Lemma \ref{lem:second}. This requires providing lower bounds on the principal's second-period payoffs, and constructing worst-case technologies to show that the bounds are tight. The second part is to show that, under arbitrary second-period contracts, the principal's payoff guarantee is not strictly higher than  $\hat{\Phi}\kh{{w}_1, a_1}^2$. This requires constructing worst-case technologies to show that the payoff guarantee is  lower than (the square of) at least one of element in the quadruple.

Note that compared to the case of advancing technology, the principal acquires more knowledge from the observation of $a_1$ under constant technology. As an implication, her optimal second-period payoff guarantee  takes a more complex form that  depends directly on the first-period contract $w_1$: how you exploit is related to how you explore.

Lemma \ref{lem:second} indicates that, as long as the first-period contract $w_1$ is nonlinear, and the observed action $a_1$ is such that one of the first two elements in the quadruple defined by equation \eqref{eqn:optsecond} attains the maximum, then the principal's optimal second-period guarantee $\hat{V}_2^*\kh{{w}_1, a_1}$ is achieved by nonlinear contracts. On the other hand, for linear first-period contracts $w_1$, the four contracts mentioned in the statement of Lemma \ref{lem:second} are all linear. This shows that optimal way for the principal to respond to the knowledge gained  is closely related to the specific approach she chooses to explore in the first period. 

\subsection{First Period Analysis}\label{sec:period1}
In the previous subsection, we have focused on principal's problem in the second period and fully characterized her optimal second-period payoff guarantee. This section analyzes the principal's first-period problem in the dynamic relationship, that is, choosing a first-period contract $w_1$ to maximize her overall payoff guarantee $\hat{U}\kh{w_1}$.

We first state the main result of this section, Theorem \ref{prop:1}, which establishes the optimality of a linear first-period contract.
\begin{Thm}\label{prop:1}
In the case of constant technology, there exists a linear first-period contract $w_1$ that maximizes the principal's overall payoff guarantee $\hat{U}\kh{w_1}$.
\end{Thm}
The principal's optimal overall payoff guarantee is achieved through a linear first-period contract, together with an optimally chosen linear second-period  contract. 

Similar to Theorem \ref{prop:1grow}, the proof of Theorem \ref{prop:1} takes two steps: (1) improve any nonlinear first-period contract to a linear one; (2) prove that the maximum of the principal's first-period problem exists within the class of linear first-period contracts. Since the principal's optimal second-period payoff guarantee in the previous subsection takes a more complicated form (equation \eqref{eqn:optsecond}), the proof here is more lengthy, but the main idea remains the same. In particular, the closed-form characterization is very useful. First, it provides a tool to compare the overall payoff guarantee between different first-period contracts, essential for showing that any nonlinear first-period contract can be improved by a linear one. Second, the expression \eqref{eqn:optsecond} is the maximum of four continuous functions (in the appropriate sense of continuity), and the continuity is key to show existence of an optimal linear contract.

Although Lemma \ref{lem:second} shows that, following nonlinear first-period contracts, optimal second-period contracts may also be nonlinear in some cases, here we  demonstrate that  he principal's optimal overall payoff guarantee is achieved by a linear first-period contract (along with an optimally chosen linear second-period contract). The principal has the opportunity to explore in the first period, and linear first-period contracts are optimal in terms of utilizing the exploration opportunity, making them even more robust.

\section{Conclusion}\label{sec:concln}
In this paper, we study a two-period moral hazard problem, where the principal does not know the action sets available to the agents and demands contracts to be robust to this uncertainty; she has the opportunity to explore in the first period and observes the chosen action, and then offers a new contract to the second agent based on this knowledge. We introduce and compare three different notions of dynamic worst-case considerations. Within each notion, we define a suitable rule of updating and characterize the principal's optimal payoff guarantee, thereby identifying how the principal should respond to knowledge and design new contracts.  The results show that linear contracts are robustly optimal not just in static settings, but also in dynamic environments with exploration. 

We consider a contribution of this paper to propose possible ways to extend robust models in mechanism design to allow for multiple interactions and exploration. Despite the presence of nonquantifiable uncertainty, designers can gradually improve their understanding of the environment in which they repeatedly engage, using the appropriate rule of updating. We hope the generalizability of this approach across other models will be further explored in future work.

\bibliographystyle{myref}
\bibliography{RCE}

\newpage
\begin{appendices}
\setstretch{1}

\renewcommand{\theequation}{\thesection.\arabic{equation}}
\setcounter{equation}{0}
\renewcommand{\theLem}{\thesection.\arabic{Lem}}
\setcounter{Lem}{0}
\renewcommand{\theDef}{\thesection.\arabic{Def}}
\setcounter{Def}{0}
\renewcommand{\theProp}{\thesection.\arabic{Prop}}
\setcounter{Prop}{0}

\section{Proofs of Results in the Main Text}\label{app:proof}

\subsection{Proofs for Section \ref{sec:adv}}\label{app:proofgrow}

\begin{proof}[Proof of Lemma~\ref{lem:affine}]
Consider an arbitrary action $a_1=\kh{F_1,c_1}$ agent $1$ would take under contract $\hat{w}_1$. We need to show that the principal's interim payoff guarantee, ${U}\kh{\hat{w}_1\given  a_1}$, is at least ${U}\kh{w_{1}}$.
Note that
\eqns{{U}\kh{\hat{w}_1\given a_1}&=\mathbb{E}_{F_1}\fkh{y-\hat{w}_{1}(y)}+\beta\cdot {V}_{2}^*\kh{a_1},}
where ${V}_{2}^*\kh{a_1}=\Phi\kh{a_1}^2$ with
\eqns{\Phi\kh{a_1}=\max_{a\in A_0\cup\hkh{a_1}}\hkh{\sqrt{\mathbb{E}_{F_{a}}[y]}-\sqrt{c_{a}}}.}

It suffices to construct another action $a_1'$,  which may be  taken by agent $1$ under $w_1$ and some other technology, such that  ${U}\kh{w_{1}\given  a_1'}\le {U}\kh{\hat{w}_1\given a_1}$. By assumption, $a_0$ is agent 1's best response if $A_1=A_0$, so an action $a_1'$ may be taken by agent $1$ under $w_1$ if and only if his payoff from choosing $a_1'$ is higher than from choosing $a_0$. Consider the following two cases.

\paragraph{Case 1.} $\mathbb{E}_{F_1}\fkh{y}\ge  \mathbb{E}_{F_0}\fkh{y}$.

Let $a_1'=a_0$. When agent 1 takes action $a_0$ in response to $w_1$, the principal's resulting payoff in the first period is 
$$\mathbb{E}_{F_0}\fkh{y-w_1(y)}=\kh{1-s_1}\mathbb{E}_{F_0}\fkh{y}\le\kh{1-s_1} \mathbb{E}_{F_1}\fkh{y}=\mathbb{E}_{F_1}\fkh{y-\hat{w}_{1}(y)} ,$$
so her payoff in the first period under $\kh{w_1\given a_0}$ is weakly lower than under $\kh{\hat{w}_1\given a_1}$. 

Moreover, the principal's optimal second-period payoff guarantee is
${V}_{2}^*\kh{a_0}=\Phi\kh{a_0}^2$ with
\eqns{\Phi\kh{a_0}&=\max_{a\in A_0}\hkh{\sqrt{\mathbb{E}_{F_{a}}[y]}-\sqrt{c_{a}}}.}
By definition we have $0<\Phi\kh{a_0}\le\Phi\kh{a_1}$, which implies ${V}_{2}^*\kh{ a_0}\le {V}_{2}^*\kh{ a_1}$. The principal's interim payoff guarantee is 
\eqns{{U}\kh{{w}_1\given a_0}&=\mathbb{E}_{F_0}\fkh{y-{w}_{1}(y)}+\beta\cdot {V}_{2}^*\kh{a_0}\\
&\le \mathbb{E}_{F_1}\fkh{y-\hat{w}_{1}(y)}+\beta\cdot {V}_{2}^*\kh{a_1}=\hat{U}\kh{\hat{w}_1\given  a_1},}
as desired.

\paragraph{Case 2.}$\mathbb{E}_{F_1}\fkh{y}< \mathbb{E}_{F_0}\fkh{y}$.

Let $\lambda=\mathbb{E}_{F_1}[y]/ \mathbb{E}_{F_0}[y]\in\fkh{0,1}$ and let $F_1'$ be the mixture $\lambda F_0+\kh{1-\lambda}\delta_0$. Note that $\mathbb{E}_{F_1'}\fkh{y}=\mathbb{E}_{F_1}[y]$. Consider $a_1'=\kh{F_1',c_1}$. Note that \eqns{\mathbb{E}_{F_1'}\fkh{w_1\kh{y}}-c_1&=\lambda\mathbb{E}_{F_0}\fkh{w_1\kh{y}}-c_1=\lambda s_1\mathbb{E}_{F_0}\fkh{{y}}-c_1=s_1\mathbb{E}_{F_1}\fkh{{y}}-c_1=\mathbb{E}_{F_1}\fkh{\hat{w}_{1}\kh{y}}-c_1,}
and 
$$\mathbb{E}_{F_0}\fkh{w_{1}\kh{y}}-c_0=s_1 \mathbb{E}_{F_0}\fkh{y}-c_0=\mathbb{E}_{F_0}\fkh{\hat{w}_1\kh{y}}-c_0.$$
Thus, 
\eqns{\kh{\mathbb{E}_{F_1'}\fkh{{w}_1\kh{y}}-c_1}-\kh{\mathbb{E}_{F_0}\fkh{{w}_1\kh{y}}-c_0}&=\kh{\mathbb{E}_{F_1}\fkh{\hat{w}_{1}\kh{y}}-c_1}-\kh{\mathbb{E}_{F_0}\fkh{\hat{w}_{1}\kh{y}}-c_0}\ge 0,}
implying that $a_1'$ may be chosen by agent $1$ in response to $w_1$ under some technology.

When agent $1$ chooses action $a_1'$ in response, the principal's resulting payoff in the first period is 
$$\mathbb{E}_{F_1'}\fkh{y-w_1(y)}=\lambda\mathbb{E}_{F_0}\fkh{y-w_1(y)}=\lambda\kh{1-s_1}\mathbb{E}_{F_0}\fkh{y}=\kh{1-s_1} \mathbb{E}_{F_1}\fkh{y}=\mathbb{E}_{F_1}\fkh{y-\hat{w}_{1}(y)},$$
so her payoff in the first period under $\kh{w_1\given a_1'}$ and under $\kh{\hat{w}_1\given a_1}$ is exactly equal. 

Moreover, the principal's optimal second-period payoff guarantee is
${V}_{2}^*\kh{ a_1'}=\Phi\kh{a_1'}^2$ with 
\eqns{\Phi\kh{a_1'}&=\max_{a\in A_0\cup\hkh{a_1'}}\hkh{\sqrt{\mathbb{E}_{F_{a}}[y]}-\sqrt{c_{a}}}.}
From $\mathbb{E}_{F_1'}\fkh{y}=\mathbb{E}_{F_1}[y]$, it follows that $\Phi\kh{a_1'}=\Phi\kh{a_1}$, which implies that ${V}_2^*\kh{a_1'}={V}_2^*\kh{a_1}$. The principal's interim payoff guarantee is
\eqns{{U}\kh{{w}_1\given a_1'}&=\mathbb{E}_{F_1'}\fkh{y-{w}_{1}(y)}+\beta\cdot {V}_{2}^*\kh{ a_1'}\\
&= \mathbb{E}_{F_1}\fkh{y-\hat{w}_{1}(y)}+\beta\cdot {V}_{2}^*\kh{a_1}={U}\kh{\hat{w}_1\given  a_1},}
as desired.
\\ \\
This completes the proof.
\end{proof}

\begin{proof}[Proof of Lemma~\ref{lem:optlinear}]
We first reformulate program \eqref{eqn:proggrow} as an equivalent maximization problem with continuous objective function and compact feasible region.  Slightly abusing notation, we use ${U}\kh{s_1}$ instead of ${U}\kh{w_1}$ to denote the infimum value of program \eqref{eqn:proggrow}. Note that both the objective and the constraints of program \eqref{eqn:proggrow} depend on the choice variables $\kh{F_1,c_1}$ only through the value of $\kh{\mathbb{E}_{F_1}\fkh{y},c_1}$. Rewrite $\mathbb{E}_{F_1}\fkh{y}=x $ and $c_1=z$ with $x,z\ge 0$. Plugging into the original program \eqref{eqn:proggrow}, we obtain an equivalent program
\eqn{\begin{split}{U}\kh{s_1}=\inf_{{x,z}}\quad&\kh{1-s_1}x+\beta\cdot \phi\kh{x,z}^2\\
\text{ s.t. }\,\,\,\,& s_1x-z\ge\max_{a\in A_0\cup\hkh{\kh{\delta_0,0}}}\hkh{ s_1\mathbb{E}_{F_a}\fkh{y}-c_a},\quad x,z\ge 0,
\end{split}\label{eqn:progreformgrow}}
where 
\eqn{\phi\kh{x,z}&\equiv\max\hkh{\sqrt{x}-\sqrt{z},\,\max_{a\in A_0}\hkh{\sqrt{\mathbb{E}_{F_{a}}[y]}-\sqrt{c_{a}}}} .\label{eqn:A10prime}}

Let $\overline{x}\equiv\max_{a\in A_0}\mathbb{E}_{F_{a}}[y]>0$, and $\overline{v}\equiv{\max_{a\in A_0}\hkh{\sqrt{\mathbb{E}_{F_{a}}[y]}-\sqrt{c_{a}}}}>0$. Suppose $$\kh{F_0,c_0}\in\argmax_{a\in A_0\cup\hkh{\kh{\delta_0,0}}}\hkh{ s_1\mathbb{E}_{F_a}\fkh{y}-c_a}.$$
Note that $\kh{x_0,z_0}=\kh{\mathbb{E}_{F_0}\fkh{y},c_0}$ is feasible in program \eqref{eqn:progreformgrow} and leads to objective value $${\kh{1-s_1}x_0+\beta\cdot\phi\kh{{x}_0,z_0}^2}\le \kh{1-s_1}\overline{x}+\beta\cdot\overline{v}^2.$$ 
If $x\ge\overline{x}$, then 
\eqns{\kh{1-s_1}x+\beta\cdot \phi\kh{x,z}^2&\ge \kh{1-s_1}\overline{x}+\beta\cdot \overline{v}^2.}
Therefore, restricting $x\in\fkh{0, \overline{x}}$ will not change the infimum of program \eqref{eqn:progreformgrow}. Moreover, 
$$s_1x-z\ge 0\quad\Rightarrow\quad z\le s_1x\le x,$$
so restricting $\kh{x,z}\in\fkh{0,\overline{x}}^2$ will not change the infimum of program \eqref{eqn:progreformgrow}. 

Consider the following program
\eqn{\begin{split}{\Psi}^*\kh{s_1}\equiv\sup_{{x,z}}\quad&{\Psi}\kh{x,z;s_1}\equiv-\kh{\kh{1-s_1}x+\beta\cdot \phi\kh{x,z}^2}\\
\text{ s.t. }\,\,\,\,\,&\kh{x,z}\in{\Gamma}{\kh{s_1}},
\end{split}\label{eqn:progreform2prime1}}
where $\hat{\Phi}$ is defined by equation \eqref{eqn:A10prime}, and ${\Gamma}$ is defined as follows:
\eqns{{\Gamma}{\kh{s_1}}\equiv \hkh{\kh{x,z}\in\fkh{0,\overline{x}}^2:s_1x-z\ge\max_{a\in A_0\cup\hkh{\kh{\delta_0,0}}}\hkh{ s_1\mathbb{E}_{F_a}\fkh{y}-c_a}}.}
By definition, ${\Psi}:\fkh{0,\overline{x}}^2\times \fkh{0,1}\to\mathbb{R}$ is a continuous function, and $\Gamma:\fkh{0,1}\rightrightarrows\fkh{0,\overline{x}}^2$ is a compact-valued and nonempty-valued correspondence. Moreover, the infimum of program \eqref{eqn:progreformgrow}, ${U}\kh{s_1}$, is given by ${-{\Psi}^*\kh{s_1}}$.

Note that for each $s_1$, ${\Gamma}\kh{s_1}$ defines a half plane intersecting a square, and that the half plane shifts linearly in $s_1$. Thus,  ${\Gamma}$ is both upper and lower hemicontinuous. It then follows from Berge's maximum theorem that ${\Psi}^*$ is continuous, and $${\Gamma}^*\kh{s_1}\equiv\hkh{\kh{x,z}\in{\Gamma}\kh{s_1}:{\Psi}\kh{x,z;s_1}={\Psi}^*\kh{s_1}}$$
is upper hemicontinuous with nonempty and compact values. As a consequence, a solution to program \eqref{eqn:progreform2prime1} exists for all $s_1$, and the supremum can be replaced by maximum.  

It follows that the infimum in program \eqref{eqn:progreformgrow} and therefore the original program \eqref{eqn:proggrow} can both be replaced by minimum, and the resulting minimum value ${U}\kh{s_1}=-{\Psi}^*\kh{s_1}$ is continuous in $s_1$. Hence, ${U}\kh{s_1}$ achieves a maximum over $\fkh{0,1}$. This maximum is also the optimal guarantee over all linear contracts.
\end{proof}

\begin{proof}[Proof of Theorem \ref{prop:1grow}]
According to Lemma \ref{lem:optlinear}, among all linear first-period contracts, there exists an optimal one, call it $w_1^*$. If $w_1$ is  any other (nonlinear) first-period contract that outperforms  $w_1^*$, then by Lemma \ref{lem:affine}, there is a linear contract that in turn does at least as well as $w_1$. But this contradicts the fact that $w_1^*$ is an optimal linear contract. Therefore, $w_1^*$ is optimal among all first-period contracts.
\end{proof}

\subsection{Proofs for Section \ref{sec:const}}\label{app:proofconst}
\subsubsection{Proofs for Subsection \ref{sec:period2}}\label{app:proofp2}
If the principal offers $w_2=w_1$,  agent $2$ will choose $a_1$ again. This just repeats her first-period payoff $\mathbb{E}_{F_1}\fkh{y-w_1\kh{y}}$ in the second period.

To prove Lemma \ref{lem:second}, we start by establishing three lemmas, Lemmas \ref{lem:A1}, \ref{lem:A2}, \ref{lem:A3}, to prove that the principal's payoff guarantee in the second period from offering the remaining three contracts, (i) $w_2\kh{y}=w_1\kh{y}+m\cdot \kh{y-w_1\kh{y}}$ with $m$ defined by equation \eqref{eqn:m}, (ii) $w_2\kh{y}=s_2 y$ with $s_2=\sqrt{c_0/\mathbb{E}_{F_0}\fkh{y}}$, and (iii) $w_2\kh{y}=s_2 y$ with $s_2=\sqrt{c_1/\mathbb{E}_{F_1}\fkh{y}}$, is exactly as claimed in the statement of Lemma \ref{lem:second}.

\begin{Lem}\label{lem:A1}
If $\sqrt{\mathbb{E}_{F_0}\fkh{y-w_1\kh{y}}}-\sqrt{g\kh{w_1,a_1}}$ attains the maximum in equation \eqref{eqn:optsecond}, and the principal offers $w_2\kh{y}=w_1\kh{y}+m\cdot \kh{y-w_1\kh{y}}$ with $m$ defined by equation \eqref{eqn:m}, then her payoff guarantee in the second period is exactly $\kh{\sqrt{\mathbb{E}_{F_0}\fkh{y-w_1\kh{y}}}-\sqrt{g\kh{w_1,a_1}}}^2$.
\end{Lem}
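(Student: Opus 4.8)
The plan is to reduce the problem to two summary statistics of an action $a=(F_a,c_a)$: the principal's first-period net $D_a\equiv\mathbb{E}_{F_a}[y-w_1(y)]$ and agent $1$'s first-period payoff $p_a\equiv\mathbb{E}_{F_a}[w_1(y)]-c_a$. Writing the modified contract as $w_2(y)=(1-m)w_1(y)+my$, a direct computation gives the principal's payoff from $a$ as $\mathbb{E}_{F_a}[y-w_2(y)]=(1-m)D_a$ and the agent's payoff as $\mathbb{E}_{F_a}[w_2(y)]-c_a=p_a+mD_a$. Setting $D_0\equiv\mathbb{E}_{F_0}[y-w_1(y)]$ and $g\equiv g(w_1,a_1)$, the prescribed $m=\sqrt{g/D_0}$ satisfies $g=m^2D_0$, so that $D_0-g/m=(1-m)D_0$ and $(1-m)^2D_0=(\sqrt{D_0}-\sqrt{g})^2$; the target value is therefore exactly $(1-m)^2D_0$. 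I would first note that $w_2$ is a valid contract ($w_2(0)=0$ and $w_2\ge0$, since $w_1\ge0$, $y\ge0$, and $m\in[0,1]$), where $m\le1$ follows from the hypothesis that $\sqrt{D_0}-\sqrt{g}$ attains the maximum, hence is nonnegative, i.e. $D_0\ge g$.

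Next I would prove the lower bound $V(w_2\mid A)\ge(1-m)^2D_0$ for every technology $A$ compatible with $(w_1,a_1)$. Since $a_0\in A$, any best response $a^*$ to $w_2$ gives the agent at least his payoff from $a_0$: $p_{a^*}+mD_{a^*}\ge p_0+mD_0$, where $p_0=\mathbb{E}_{F_0}[w_1(y)]-c_0=P_1-g$ and $P_1\equiv\mathbb{E}_{F_1}[w_1(y)]-c_1$. Compatibility forces $p_{a^*}\le P_1$, so $mD_{a^*}\ge(P_1-g)+mD_0-p_{a^*}\ge mD_0-g$, which gives $D_{a^*}\ge D_0-g/m=(1-m)D_0$. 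As ties are broken in the principal's favor, her payoff is $(1-m)\max_{a\in BR(w_2\mid A)}D_a\ge(1-m)^2D_0$. This direction uses only $a_0\in A$ and compatibility and is the routine part.

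For tightness I would, for each small $\epsilon>0$, exhibit a compatible technology $A_\epsilon=\{a_0,a_1,a_\epsilon'\}$ whose added action $a_\epsilon'$ drives the principal's payoff down to $(1-m)^2D_0+(1-m)\epsilon$. The target for $a_\epsilon'$ is the tight-compatibility point $p'=P_1$ together with $D'=(1-m)D_0+\epsilon$. The payoff formulas then show that $a_\epsilon'$ strictly beats $a_0$ (which reduces to $p'>P_1-m\epsilon$) and strictly beats $a_1$ (which reduces to $D'>D_1\equiv\mathbb{E}_{F_1}[y-w_1(y)]$, using $(1-m)D_0\ge(\sqrt{D_0}-\sqrt{g})^2\ge D_1$ from the maximizer hypothesis), so $a_\epsilon'$ is the unique best response and the principal earns $(1-m)D'$. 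Letting $\epsilon\to0$ shows the infimum defining $V_2(w_2\mid w_1,a_1)$ equals $(1-m)^2D_0=(\sqrt{D_0}-\sqrt{g})^2$, matching the lower bound and giving the claim.

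The main obstacle is realizing the target pair $(p',D')=(P_1,\,(1-m)D_0+\epsilon)$ as a genuine action, i.e. a distribution $F'\in\Delta(Y)$ and a cost $c'\ge0$, and this is where the shape of the possibly nonlinear $w_1$ enters. I would build $F'$ as a two-point mixture $\lambda\delta_{y^*}+(1-\lambda)\delta_0$ supported on an output $y^*$ with $y^*-w_1(y^*)>0$ and $w_1(y^*)>0$, fixing $\lambda$ by $\mathbb{E}_{F'}[y-w_1(y)]=D'$ and then checking $c'=\mathbb{E}_{F'}[w_1(y)]-P_1\ge0$; the latter amounts to making $w_1(y^*)/(y^*-w_1(y^*))$ large enough and requires care, including attention to degenerate contracts for which no suitable $y^*$ exists. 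I would also verify that $a_\epsilon'$ respects compatibility (its $w_1$-payoff is exactly $P_1$, so the constraint $\le P_1$ binds rather than fails) and that $A_\epsilon$, being finite, is a legitimate compact technology containing $a_0$ and $a_1$.
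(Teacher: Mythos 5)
Your lower-bound half is correct and is essentially the paper's argument: summing agent 2's rationality constraint against $a_0$ with the compatibility bound $p_{a^*}\le P_1$ yields $D_{a^*}\ge D_0-g/m=(1-m)D_0$, hence the guarantee $(1-m)^2D_0=(\sqrt{D_0}-\sqrt{g})^2$. The gap is in the tightness half. You pin down the worst-case action only through its target summary statistics $(p',D')=(P_1,(1-m)D_0+\epsilon)$ and leave open whether this pair is realizable as a genuine action, i.e.\ a distribution $F'\in\Delta(Y)$ and a cost $c'\ge 0$. Your two-point ansatz $\lambda\delta_{y^*}+(1-\lambda)\delta_0$ requires an output $y^*\in Y$ satisfying $y^*-w_1(y^*)\ge D'$ (so that $\lambda\le 1$) and $w_1(y^*)/(y^*-w_1(y^*))\ge P_1/D'$ (so that $c'\ge 0$) simultaneously; these two requirements pull in opposite directions, and no such $y^*$ need exist when $Y$ is a small finite set or $w_1$ is badly shaped. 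You flag this yourself, but it is precisely the nontrivial step of the lemma, so the proof is not complete as written.

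The paper closes this by a different realization: it takes $F'=(1-m)F_0+m\,\delta_0$, a rescaling of the \emph{known} distribution toward zero, so that $F'\in\Delta(Y)$ is automatic and $\mathbb{E}_{F'}[y-w_1(y)]=(1-m)D_0$, $\mathbb{E}_{F'}[w_1(y)]=(1-m)\mathbb{E}_{F_0}[w_1(y)]$. The cost $c'=c_0-\left(m\,\mathbb{E}_{F_0}[w_1(y)]+g\right)$ is chosen so that agent 1 is exactly indifferent between $(F',c')$ and $a_1$ (your $p'=P_1$), and its nonnegativity is established in a separate step that uses the part of the maximizer hypothesis you never invoke, namely $\sqrt{D_0}-\sqrt{g}\ge\sqrt{\mathbb{E}_{F_0}[y]}-\sqrt{c_0}$; the indifferences are then resolved by an $\epsilon$-perturbation of $(F',c')$, which plays the role of your $\epsilon$. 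If you replace the two-point construction with this mixture and add the $c'\ge 0$ verification, the remainder of your argument --- the compatibility check at $p'=P_1$ and the comparison of $a'$ against $a_1$ under $w_2$ via $D'>D_1$ --- goes through and matches the paper's Steps 2 and 3.
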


\begin{proof}[Proof of Lemma~\ref{lem:A1}]
Let $g_0\equiv g\kh{w_1,a_1}=\kh{\mathbb{E}_{F_1}\fkh{w_1\kh{y}}-c_1}-\kh{\mathbb{E}_{F_0}\fkh{w_1\kh{y}}-c_0}$. 

If $\sqrt{\mathbb{E}_{F_0}\fkh{y-w_1\kh{y}}}-\sqrt{g_0}$ attains the maximum in equation \eqref{eqn:optsecond}, then it holds that 
$\sqrt{\mathbb{E}_{F_0}\fkh{y-w_1\kh{y}}}-\sqrt{g_0}\ge\sqrt{\mathbb{E}_{F_{0}}[y]}-\sqrt{c_{0}}>0$, which implies that $m\in\fkh{0,1}$.

Suppose the principal offers $w_2\kh{y}=w_1\kh{y}+m\cdot \kh{y-w_1\kh{y}}$ with $m$ defined by equation \eqref{eqn:m}. We first show that this guarantees her at least $\kh{\sqrt{\mathbb{E}_{F_0}\fkh{y-w_1\kh{y}}}-\sqrt{g_0}}^2.$ 

Let $\kh{F_2,c_2}$ be the action chosen by agent 2. By agent 1's rationality, we have $$\mathbb{E}_{F_1}\fkh{w_1\kh{y}}-c_1\ge \mathbb{E}_{F_2}\fkh{w_1\kh{y}}-c_2.$$
By agent 2's rationality, we have $$\mathbb{E}_{F_2}\fkh{w_2\kh{y}}-c_2\ge \mathbb{E}_{F_0}\fkh{w_2\kh{y}}-c_0.$$
Summing up the two inequalities, we obtain
\eqns{m\cdot \mathbb{E}_{F_2}\fkh{y-w_1\kh{y}}= \mathbb{E}_{F_2}\fkh{w_2\kh{y}-w_1\kh{y}}&\ge\kh{\mathbb{E}_{F_0}\fkh{w_2\kh{y}}-c_0}-\kh{\mathbb{E}_{F_1}\fkh{w_1\kh{y}}-c_1}\\
&=m\cdot \mathbb{E}_{F_0}\fkh{y-w_1\kh{y}}-g_0,}
implying that
$$ \mathbb{E}_{F_2}\fkh{y-w_1\kh{y}}\ge\mathbb{E}_{F_0}\fkh{y-w_1\kh{y}}-g_0/m. $$
Therefore, the principal's payoff in the second period is
\eqns{ \mathbb{E}_{F_2}\fkh{y-w_2\kh{y}}&= \mathbb{E}_{F_2}\fkh{y-w_1\kh{y}}-m \cdot \mathbb{E}_{F_2}\fkh{y-w_1\kh{y}}=\kh{1-m}\mathbb{E}_{F_2}\fkh{y-w_1\kh{y}}\\
&\ge\kh{1-m}\kh{\mathbb{E}_{F_0}\fkh{y-w_1\kh{y}}-g_0/m}=\kh{\sqrt{\mathbb{E}_{F_0}\fkh{y-w_1\kh{y}}}-\sqrt{g_0}}^2,}
as desired.

Next we show that her payoff guarantee from $w_2\kh{y}=w_1\kh{y}+m\cdot \kh{y-w_1\kh{y}}$ cannot be strictly higher than $\kh{\sqrt{\mathbb{E}_{F_0}\fkh{y-w_1\kh{y}}}-\sqrt{g_0}}^2$, since this is exactly her payoff when the technology is $ {A}=\hkh{a_0,a_1,\kh{F',c'}}$, with $F'=\kh{1-m} F_0+m\cdot\delta_{0}$ and $c'=c_0-\kh{m\cdot\mathbb{E}_{F_0}\fkh{w_1\kh{y}}+g_0}$. 

The proof takes three steps.
\paragraph{Step 1} $\sqrt{\mathbb{E}_{F_0}\fkh{y-w_1\kh{y}}}-\sqrt{g_0}\ge\sqrt{\mathbb{E}_{F_{0}}[y]}-\sqrt{c_{0}}$ implies $c_0\ge{m\cdot\mathbb{E}_{F_0}\fkh{w_1\kh{y}}+g_0}$, so $c'$ is indeed nonnegative. 

It suffices to show
\eqn{&\kh{\sqrt{\mathbb{E}_{F_{0}}[y]}-\sqrt{\mathbb{E}_{F_0}\fkh{y-w_1\kh{y}}}+\sqrt{g_0}}^2\ge{m\cdot\mathbb{E}_{F_0}\fkh{w_1\kh{y}}+g_0}\notag\\
\Leftrightarrow\quad&\kh{\sqrt{\mathbb{E}_{F_{0}}[y]}-\sqrt{\mathbb{E}_{F_0}\fkh{y-w_1\kh{y}}}}^2\ge m\cdot\mathbb{E}_{F_0}\fkh{w_1\kh{y}}-2\sqrt{g_0}\cdot\kh{\sqrt{\mathbb{E}_{F_{0}}[y]}-\sqrt{\mathbb{E}_{F_0}\fkh{y-w_1\kh{y}}}}\notag\\
\Leftrightarrow\quad&\kh{\sqrt{\mathbb{E}_{F_{0}}[y]}-\sqrt{\mathbb{E}_{F_0}\fkh{y-w_1\kh{y}}}}^2\ge m\cdot \kh{\mathbb{E}_{F_0}\fkh{w_1\kh{y}}-2\sqrt{\mathbb{E}_{F_0}\fkh{y-w_1\kh{y}}}\cdot\kh{\sqrt{\mathbb{E}_{F_{0}}[y]}-\sqrt{\mathbb{E}_{F_0}\fkh{y-w_1\kh{y}}}}}.\label{eqn:ineq2}}
Note that
\eqns{&\mathbb{E}_{F_0}\fkh{w_1\kh{y}}-2\sqrt{\mathbb{E}_{F_0}\fkh{y-w_1\kh{y}}}\cdot\kh{\sqrt{\mathbb{E}_{F_{0}}[y]}-\sqrt{\mathbb{E}_{F_0}\fkh{y-w_1\kh{y}}}}\\
=\,&\mathbb{E}_{F_0}\fkh{w_1\kh{y}}-2\sqrt{\mathbb{E}_{F_0}\fkh{y-w_1\kh{y}}}\cdot\frac{\mathbb{E}_{F_0}\fkh{w_1\kh{y}}}{\sqrt{\mathbb{E}_{F_{0}}[y]}+\sqrt{\mathbb{E}_{F_0}\fkh{y-w_1\kh{y}}}}\\
=\,&\frac{\mathbb{E}_{F_0}\fkh{w_1\kh{y}}}{\sqrt{\mathbb{E}_{F_{0}}[y]}+\sqrt{\mathbb{E}_{F_0}\fkh{y-w_1\kh{y}}}}\cdot\kh{\sqrt{\mathbb{E}_{F_{0}}[y]}+\sqrt{\mathbb{E}_{F_0}\fkh{y-w_1\kh{y}}}-2\sqrt{\mathbb{E}_{F_0}\fkh{y-w_1\kh{y}}}}\\
=\,&\kh{\sqrt{\mathbb{E}_{F_{0}}[y]}-\sqrt{\mathbb{E}_{F_0}\fkh{y-w_1\kh{y}}}}\cdot\kh{\sqrt{\mathbb{E}_{F_{0}}[y]}-\sqrt{\mathbb{E}_{F_0}\fkh{y-w_1\kh{y}}}}=\kh{\sqrt{\mathbb{E}_{F_{0}}[y]}-\sqrt{\mathbb{E}_{F_0}\fkh{y-w_1\kh{y}}}}^2.}
Therefore, inequality \eqref{eqn:ineq2} is equivalent to 
\eqns{\kh{\sqrt{\mathbb{E}_{F_{0}}[y]}-\sqrt{\mathbb{E}_{F_0}\fkh{y-w_1\kh{y}}}}^2\ge m\cdot\kh{\sqrt{\mathbb{E}_{F_{0}}[y]}-\sqrt{\mathbb{E}_{F_0}\fkh{y-w_1\kh{y}}}}^2,}
which is implied by the assumption that $\sqrt{\mathbb{E}_{F_0}\fkh{y-w_1\kh{y}}}\ge \sqrt{g_0}$ (or equivalently, $m\le 1$).

\paragraph{Step 2}  $ {A}=\hkh{a_0,a_1,\kh{F',c'}}$ is compatible with $\kh{w_1,a_1}$. That is, agent 1 chooses $a_1$ in response to $w_1$.

Agent 1's payoff from $\kh{F',c'}$ is
\eqns{\mathbb{E}_{F^{\prime}}\left[w_{1}(y)\right]-c'&=\kh{1-m}\mathbb{E}_{F_0}\left[w_{1}(y)\right]-c_0+\kh{m\cdot\mathbb{E}_{F_0}\fkh{w_1\kh{y}}+g_0}\\
&=\kh{\mathbb{E}_{F_0}\fkh{w_1\kh{y}}-c_0}+g_0=\mathbb{E}_{F_1}\fkh{w_1\kh{y}}-c_1,}
so he would choose $a_1=\kh{F_1,c_1}$ in response to $w_1$. 

Note that agent $1$ is actually indifferent between $\kh{F_1,c_1}$ and $\kh{F',c'}$, and we will show below that agent $2$ is  indifferent between $\kh{F_0,c_0}$ and $\kh{F',c'}$. Technically to ensure that agent $1$ chooses $\kh{F_1,c_1}$ and agent $2$ chooses $\kh{F',c'}$ we can set $F'=\kh{1-m+\epsilon}F_0+\kh{m-\epsilon}\delta_0$ and $c'=c_0-\kh{m\cdot\mathbb{E}_{F_0}\fkh{w_1\kh{y}}+g_0}+\epsilon\cdot \mathbb{E}_{F_0}\fkh{w_1\kh{y}+\kh{m/2}\cdot\kh{y-w_1\kh{y}}}$ then let $\epsilon\downarrow 0$. Many of the following cases of potential indifference shall be treated similarly, and we omit them for brevity.

\paragraph{Step 3} 
If $ {A}=\hkh{a_0,a_1,\kh{F',c'}}$, then agent 2 chooses $\left(F^{\prime}, c'\right)$ in response to $w_2$, leading to a payoff of $\kh{\sqrt{\mathbb{E}_{F_0}\fkh{y-w_1\kh{y}}}-\sqrt{g_0}}^2$ for the principal.  

Agent 2's payoff from $\kh{F',c'}$ is
\eqns{\mathbb{E}_{F^{\prime}}\left[w_{2}(y)\right]-c'&=\kh{1-m}\mathbb{E}_{F_0}\left[w_1\kh{y}+m\cdot \kh{y-w_1\kh{y}}\right]-c_0+\kh{m\cdot\mathbb{E}_{F_0}\fkh{w_1\kh{y}}+g_0}\\
&=\mathbb{E}_{F_0}\fkh{w_1\kh{y}}+m\cdot \mathbb{E}_{F_0}\fkh{y-w_1\kh{y}}-m^2\cdot \mathbb{E}_{F_0}\fkh{y-w_1\kh{y}}-c_0+g_0\\
&=\mathbb{E}_{F_0}\fkh{w_2\kh{y}}-g_0-c_0+g_0=\mathbb{E}_{F_0}\fkh{w_2\kh{y}}-c_0,}
and his payoff from $a_1=\kh{F_1,c_1}$ is
\eqns{\mathbb{E}_{F_1}\left[w_{2}(y)\right]-c_1&=\mathbb{E}_{F_1}\left[w_{1}(y)+m\cdot\kh{y-w_1\kh{y}}\right]-c_1\\
&=m\cdot\mathbb{E}_{F_1}\fkh{y-w_1\kh{y}}+\kh{\mathbb{E}_{F_0}\left[w_{1}(y)\right]-c_0}+g_0\\
&\le m\cdot\kh{\sqrt{\mathbb{E}_{F_0}\fkh{y-w_1\kh{y}}}-\sqrt{g_0}}^2+\kh{\mathbb{E}_{F_0}\left[w_{1}(y)\right]-c_0}+g_0\\
&\le m\cdot\sqrt{\mathbb{E}_{F_0}\fkh{y-w_1\kh{y}}}\kh{\sqrt{\mathbb{E}_{F_0}\fkh{y-w_1\kh{y}}}-\sqrt{g_0}}+\kh{\mathbb{E}_{F_0}\left[w_{1}(y)\right]-c_0}+g_0\\
&=m\cdot\mathbb{E}_{F_0}\fkh{y-w_1\kh{y}}-g_0+\kh{\mathbb{E}_{F_0}\left[w_{1}(y)\right]-c_0}+g_0=\mathbb{E}_{F_0}\fkh{w_2\kh{y}}-c_0,}
so he would choose $\kh{F',c'}$ in response to $w_2$. 

This leaves the principal with payoff of
\eqns{ \mathbb{E}_{F'}\fkh{y-w_2\kh{y}}&= \mathbb{E}_{F'}\fkh{y-w_1\kh{y}}-m \cdot \mathbb{E}_{F'}\fkh{y-w_1\kh{y}}=\kh{1-m}\mathbb{E}_{F'}\fkh{y-w_1\kh{y}}\\
&=\kh{1-m}^2{\mathbb{E}_{F_0}\fkh{y-w_1\kh{y}}}=\kh{\sqrt{\mathbb{E}_{F_0}\fkh{y-w_1\kh{y}}}-\sqrt{g_0}}^2,}
as desired.
\\ \\
This completes the proof.
\end{proof}

\begin{Lem}\label{lem:A2}
If $\sqrt{\mathbb{E}_{F_{0}}[y]}-\sqrt{c_{0}}$ attains the maximum in equation \eqref{eqn:optsecond}, and the principal offers the linear contract $w_2\kh{y}=s_2 y$ with $s_2=\sqrt{c_0/\mathbb{E}_{F_0}\fkh{y}}$, then her payoff guarantee in the second period is exactly $\left(\sqrt{\mathbb{E}_{F_{0}}[y]}-\sqrt{c_{0}}\right)^{2}$.
\end{Lem}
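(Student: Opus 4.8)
The plan is to follow the same two-part structure as the proof of Lemma~\ref{lem:A1}: first establish that the linear contract $w_2\kh{y}=s_2 y$ guarantees the principal \emph{at least} $\kh{\sqrt{\mathbb{E}_{F_0}\fkh{y}}-\sqrt{c_0}}^2$ against every compatible technology, and then exhibit a single compatible technology under which her payoff equals exactly this value, so the guarantee is tight.

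For the lower bound I would run the argument of \cite{Carroll15}. Since every compatible technology contains $a_0$, agent $2$'s best response $\kh{F_2,c_2}$ satisfies $s_2\mathbb{E}_{F_2}\fkh{y}-c_2\ge s_2\mathbb{E}_{F_0}\fkh{y}-c_0$; combined with limited liability $c_2\ge 0$ this yields $\mathbb{E}_{F_2}\fkh{y}\ge \mathbb{E}_{F_0}\fkh{y}-c_0/s_2$. Because $\mathbb{E}_{F_0}\fkh{y}>c_0$ forces $s_2=\sqrt{c_0/\mathbb{E}_{F_0}\fkh{y}}\in\kh{0,1}$, the principal's payoff $\kh{1-s_2}\mathbb{E}_{F_2}\fkh{y}$ is bounded below by $\kh{1-s_2}\kh{\mathbb{E}_{F_0}\fkh{y}-c_0/s_2}$, and substituting $s_2=\sqrt{c_0/\mathbb{E}_{F_0}\fkh{y}}$ simplifies the right-hand side to exactly $\kh{\sqrt{\mathbb{E}_{F_0}\fkh{y}}-\sqrt{c_0}}^2$. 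This step uses only $a_0\in A$, so the extra compatibility restriction can only help.

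For tightness I would take $A=\hkh{a_0,a_1,\kh{F',c'}}$ with $F'=\kh{1-s_2}F_0+s_2\delta_0$ and $c'=0$; then $\mathbb{E}_{F'}\fkh{y}=\kh{1-s_2}\mathbb{E}_{F_0}\fkh{y}$ and, since $w_1\kh{0}=0$, $\mathbb{E}_{F'}\fkh{w_1\kh{y}}=\kh{1-s_2}\mathbb{E}_{F_0}\fkh{w_1\kh{y}}$. Three checks remain, mirroring Steps~1--3 of Lemma~\ref{lem:A1}: (a) $A$ is compatible, i.e. agent~$1$ still weakly prefers $a_1$ to $\kh{F',c'}$ under $w_1$; (b) under $w_2$ agent~$2$ does not prefer $a_1$ to $\kh{F',c'}$; and (c) under $w_2$ agent~$2$ chooses $\kh{F',c'}$, leaving the principal with $\kh{1-s_2}\mathbb{E}_{F'}\fkh{y}=\kh{\sqrt{\mathbb{E}_{F_0}\fkh{y}}-\sqrt{c_0}}^2$. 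Check (c) reduces to the indifference $s_2\mathbb{E}_{F'}\fkh{y}=s_2\mathbb{E}_{F_0}\fkh{y}-c_0$, so as in the remark after Step~2 of Lemma~\ref{lem:A1} I would perturb $F'$ slightly (nudging its mean up by $\epsilon$ and raising $c'$ accordingly) to make agent~$2$ strictly choose it, then let $\epsilon\downarrow 0$.

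The main obstacle is checks (a) and (b), and this is exactly where the hypothesis that $\sqrt{\mathbb{E}_{F_0}\fkh{y}}-\sqrt{c_0}$ attains the maximum in \eqref{eqn:optsecond} becomes indispensable. For (a), writing $g_0=g\kh{w_1,a_1}$, the requirement $\mathbb{E}_{F'}\fkh{w_1\kh{y}}\le \mathbb{E}_{F_1}\fkh{w_1\kh{y}}-c_1$ rearranges to $c_0-g_0\le s_2\,\mathbb{E}_{F_0}\fkh{w_1\kh{y}}$, which I expect to follow from the dominance of $\sqrt{\mathbb{E}_{F_0}\fkh{y}}-\sqrt{c_0}$ over the second term $\sqrt{\mathbb{E}_{F_0}\fkh{y-w_1\kh{y}}}-\sqrt{g_0}$: squaring that inequality, substituting $\mathbb{E}_{F_0}\fkh{y-w_1\kh{y}}=\mathbb{E}_{F_0}\fkh{y}-\mathbb{E}_{F_0}\fkh{w_1\kh{y}}$, and clearing $\sqrt{\mathbb{E}_{F_0}\fkh{y}}$ should collapse the required inequality to $\kh{1-s_2}\kh{\sqrt{c_0}-\sqrt{g_0}}^2\ge 0$. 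For (b), the inequality $s_2\mathbb{E}_{F_1}\fkh{y}-c_1\le s_2\mathbb{E}_{F_0}\fkh{y}-c_0$ should follow analogously from the dominance over the fourth term $\sqrt{\mathbb{E}_{F_1}\fkh{y}}-\sqrt{c_1}$, collapsing after clearing denominators to a nonnegative perfect square in $\kh{\sqrt{\mathbb{E}_{F_0}\fkh{y}},\sqrt{\mathbb{E}_{F_1}\fkh{y}}}$. These two algebraic reductions are the technical heart; once they are in place the perturbation in (c) is routine, and the lower and upper bounds coincide.
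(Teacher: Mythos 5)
Your proposal is correct and follows essentially the same route as the paper: the Carroll-style lower bound from agent 2's incentive constraint against $a_0$, and tightness via the compatible technology $\left\{a_0,a_1,\left(F',0\right)\right\}$ with $F'=\left(1-s_2\right)F_0+s_2\delta_0$, where the two compatibility checks reduce, exactly as you anticipate, to $\left(1-s_2\right)\left(\sqrt{c_0}-\sqrt{g_0}\right)^2\ge 0$ and $\left(1-s_2\right)\left(\sqrt{c_0}-\sqrt{c_1}\right)^2\ge 0$. The only cosmetic slips are that the second perfect square is in $\left(\sqrt{c_0},\sqrt{c_1}\right)$ rather than in the expected outputs, and the paper handles the tie between $\left(F',0\right)$ and $a_0$ with the same $\epsilon$-perturbation you invoke.
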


\begin{proof}[Proof of Lemma~\ref{lem:A2}]
Suppose that $\sqrt{\mathbb{E}_{F_{0}}[y]}-\sqrt{c_{0}}$ attains the maximum in equation \eqref{eqn:optsecond}, and the principal offers the linear contract $w_2\kh{y}=s_2 y$ with $s_2=\sqrt{c_0/\mathbb{E}_{F_0}\fkh{y}}$. We first show that this guarantees her at least $\left(\sqrt{\mathbb{E}_{F_{0}}[y]}-\sqrt{c_{0}}\right)^{2}$.

Let $\kh{F_2,c_2}$ be the action chosen by agent 2. By agent 2's rationality, we have $$\mathbb{E}_{F_2}\fkh{w_2\kh{y}}-c_2\ge \mathbb{E}_{F_0}\fkh{w_2\kh{y}}-c_0,$$
which further implies that
$$s_2\mathbb{E}_{F_2}\fkh{y}=\mathbb{E}_{F_2}\fkh{w_2\kh{y}}\ge\mathbb{E}_{F_2}\fkh{w_2\kh{y}}-c_2\ge  \mathbb{E}_{F_0}\fkh{w_2\kh{y}}-c_0=s_2\mathbb{E}_{F_0}\fkh{y}-c_0,$$
and hence
$$\mathbb{E}_{F_2}\fkh{y}\ge\mathbb{E}_{F_0}\fkh{y}-c_0/s_2 .$$
Therefore, the principal's payoff in the second period is
\eqns{ \mathbb{E}_{F_2}\fkh{y-w_2\kh{y}}&= \mathbb{E}_{F_2}\fkh{\kh{1-s_2}y}\ge\kh{1-s_2}\kh{\mathbb{E}_{F_0}\fkh{y}-c_0/s_2}=\left(\sqrt{\mathbb{E}_{F_{0}}[y]}-\sqrt{c_{0}}\right)^{2},}
as desired.

Next we show that her payoff guarantee from this linear contract cannot be strictly higher, since $\left(\sqrt{\mathbb{E}_{F_{0}}[y]}-\sqrt{c_{0}}\right)^{2}$ is exactly her payoff when the technology is  $ {A}=\left\{a_0,a_1,\left(F^{\prime}, 0\right)\right\}$,  with $F'=\lambda F_0+(1-\lambda) \delta_{0}$ where $\lambda=1-\sqrt{c_0 /\mathbb{E}_{F_0}\fkh{{y}}}\in\fkh{0,1}$. 

The proof takes two steps. Let $g_0\equiv g\kh{w_1,a_1}=\kh{\mathbb{E}_{F_1}\fkh{w_1\kh{y}}-c_1}-\kh{\mathbb{E}_{F_0}\fkh{w_1\kh{y}}-c_0}$.

\paragraph{Step 1} $ {A}=\left\{a_0,a_1,\left(F^{\prime}, 0\right)\right\}$ is compatible with $\kh{w_1,a_1}$. That is, agent 1 chooses $a_1$ in response to $w_1$.

Agent 1's payoff from $\kh{F',0}$ is
$\mathbb{E}_{F^{\prime}}\left[w_{1}(y)\right]=\lambda \mathbb{E}_{F_0}\left[w_{1}(y)\right]=\kh{1-\sqrt{c_0/\mathbb{E}_{F_0}\fkh{y}}}\mathbb{E}_{F_0}\left[w_{1}(y)\right]$, and we have
\eqns{\kh{1-\sqrt{\frac{c_0} {\mathbb{E}_{F_0}\fkh{{y}}}}} \mathbb{E}_{F_0}\left[w_{1}(y)\right]\le\mathbb{E}_{F_1}\left[w_{1}(y)\right]-c_1\quad\Leftrightarrow\quad& \kh{1-\sqrt{\frac{c_0} {\mathbb{E}_{F_0}\fkh{{y}}}}} \mathbb{E}_{F_0}\left[w_{1}(y)\right]\le \kh{\mathbb{E}_{F_0}\fkh{w_1\kh{y}}-c_0}+g_0\\
\quad\Leftrightarrow\quad&\sqrt{\frac{c_0} {\mathbb{E}_{F_0}\fkh{{y}}}} \mathbb{E}_{F_0}\left[w_{1}(y)\right]-c_0+g_0\ge 0.}
From $$\sqrt{\mathbb{E}_{F_{0}}[y]}-\sqrt{c_{0}}\ge\sqrt{\mathbb{E}_{F_0}\fkh{y-w_1\kh{y}}}-\sqrt{g_0},$$ we obtain $$\mathbb{E}_{F_0}\fkh{w_1\kh{y}}\ge\mathbb{E}_{F_0}\fkh{y}-\kh{\sqrt{\mathbb{E}_{F_{0}}[y]}-\sqrt{c_{0}}+\sqrt{g_0}}^2 ,$$ and thus
\eqns{\sqrt{\frac{c_0} {\mathbb{E}_{F_0}\fkh{{y}}}} \mathbb{E}_{F_0}\left[w_{1}(y)\right]-c_0+g_0&\ge\sqrt{\frac{c_0} {\mathbb{E}_{F_0}\fkh{{y}}}} \cdot \kh{\mathbb{E}_{F_0}\fkh{y}-\kh{\sqrt{\mathbb{E}_{F_{0}}[y]}-\sqrt{c_{0}}+\sqrt{g_0}}^2}-c_0+g_0\\
&=\kh{1-\sqrt{\frac{c_0} {\mathbb{E}_{F_0}\fkh{{y}}}}}\kh{\sqrt{c_0}-\sqrt{g_0}}^2\ge 0,}as desired.
So we indeed have $\mathbb{E}_{F^{\prime}}\left[w_{1}(y)\right]\le\mathbb{E}_{F_1}\left[w_{1}(y)\right]-c_1,$ implying that agent $1$ would choose $a_1=\kh{F_1,c_1}$ in response to $w_1$. 

\paragraph{Step 2} 
If $ {A}=\left\{a_0,a_1,\left(F^{\prime}, 0\right)\right\}$, then agent 2 chooses $\left(F^{\prime}, 0\right)$ in response to $w_2$, leading to a payoff of $\left(\sqrt{\mathbb{E}_{F_{0}}[y]}-\sqrt{c_{0}}\right)^{2}$ for the principal.  

Agent 2's payoff from $\kh{F',0}$ is
\eqns{\mathbb{E}_{F^{\prime}}\left[w_{2}(y)\right]=\lambda \mathbb{E}_{F_0}\left[s_2 y\right]&=\kh{1-\sqrt{\frac{c_0}{\mathbb{E}_{F_0}\fkh{y}}}}\cdot \sqrt{\frac{c_0}{\mathbb{E}_{F_0}\fkh{y}}}\cdot\mathbb{E}_{F_0}\fkh{y}\\
&=\kh{\sqrt{\mathbb{E}_{F_{0}}[y]}-\sqrt{c_{0}}}\sqrt{c_0}= \sqrt{\frac{c_0}{\mathbb{E}_{F_0}\fkh{y}}}\cdot \mathbb{E}_{F_0}\fkh{y}-c_0\\
&=s_2 \mathbb{E}_{F_0}\fkh{y}-c_0=\mathbb{E}_{F_0}\fkh{w_2\kh{y}}-c_0.} His payoff from $a_1=\kh{F_1,c_1}$ is
$\mathbb{E}_{F_1}\left[w_{2}(y)\right]-c_1=\sqrt{{c_0}/{\mathbb{E}_{F_0}\fkh{y}}}\cdot\mathbb{E}_{F_1}\fkh{y}-c_1$, and we have
\eqns{\sqrt{\frac{c_0}{\mathbb{E}_{F_0}\fkh{y}}}\cdot\mathbb{E}_{F_1}\fkh{y}-c_1\le\mathbb{E}_{F_0}\left[w_{2}(y)\right]-c_0\quad\Leftrightarrow\quad\sqrt{\frac{c_0}{\mathbb{E}_{F_0}\fkh{y}}}\cdot\mathbb{E}_{F_1}\fkh{y}-c_1\le\kh{\sqrt{\mathbb{E}_{F_{0}}[y]}-\sqrt{c_{0}}}\sqrt{c_0}.}
From $\sqrt{\mathbb{E}_{F_{0}}[y]}-\sqrt{c_{0}}\ge\sqrt{\mathbb{E}_{F_{1}}[y]}-\sqrt{c_{1}}$, we obtain $\mathbb{E}_{F_1}\fkh{{y}}\le\kh{\sqrt{\mathbb{E}_{F_{0}}[y]}-\sqrt{c_{0}}+\sqrt{c_1}}^2 $, and thus
\eqns{&\kh{\sqrt{\mathbb{E}_{F_{0}}[y]}-\sqrt{c_{0}}}\sqrt{c_0}-\kh{\sqrt{\frac{c_0}{\mathbb{E}_{F_0}\fkh{y}}}\cdot\mathbb{E}_{F_1}\fkh{y}-c_1}\\
\ge\,&\kh{\sqrt{\mathbb{E}_{F_{0}}[y]}-\sqrt{c_{0}}}\sqrt{c_0}-\kh{\sqrt{\frac{c_0}{\mathbb{E}_{F_0}\fkh{y}}}\cdot\kh{\sqrt{\mathbb{E}_{F_{0}}[y]}-\sqrt{c_{0}}+\sqrt{c_1}}^2-c_1}\\
=\,&\kh{1-\sqrt{\frac{c_0} {\mathbb{E}_{F_0}\fkh{{y}}}}}\kh{\sqrt{c_0}-\sqrt{c_1}}^2\ge 0,}
as desired. So we indeed have $\mathbb{E}_{F_1}\left[w_{2}(y)\right]-c_1\le\mathbb{E}_{F_0}\fkh{w_2\kh{y}}-c_0=\mathbb{E}_{F^{\prime}}\left[w_{2}(y)\right]$, implying that agent $2$ would choose $\kh{F',0}$ in response to $w_2$.

This leaves the principal with payoff of
\eqns{ \mathbb{E}_{F'}\fkh{y-w_2\kh{y}}&=\lambda \mathbb{E}_{F_0}\fkh{\kh{1-s_2}y}=\kh{1-\sqrt{\frac{c_0}{\mathbb{E}_{F_0}\fkh{y}}}} \kh{1-\sqrt{\frac{c_0}{\mathbb{E}_{F_0}\fkh{y}}}}\cdot\mathbb{E}_{F_0}\fkh{y}\\
&=\left(\sqrt{\mathbb{E}_{F_{0}}[y]}-\sqrt{c_{0}}\right)^{2},}
as desired.
\\ \\
This completes the proof.
\end{proof}

\begin{Lem}\label{lem:A3}
If $\sqrt{\mathbb{E}_{F_{1}}[y]}-\sqrt{c_{1}}$ attains the maximum in equation \eqref{eqn:optsecond}, and the principal offers the linear contract $w_2\kh{y}=s_2 y$ with $s_2=\sqrt{c_1/\mathbb{E}_{F_1}\fkh{y}}$, then her payoff guarantee in the second period is exactly $\left(\sqrt{\mathbb{E}_{F_{1}}[y]}-\sqrt{c_{1}}\right)^{2}$.
\end{Lem}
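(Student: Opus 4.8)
The plan is to mirror the proof of Lemma~\ref{lem:A2} with the roles of $a_0$ and $a_1$ interchanged, since $w_2\kh{y}=s_2 y$ with $s_2=\sqrt{c_1/\mathbb{E}_{F_1}\fkh{y}}$ is exactly Carroll's optimal static contract built around $a_1$ rather than $a_0$. Throughout I abbreviate $E_1\equiv\mathbb{E}_{F_1}\fkh{y}$ and $W_1\equiv\mathbb{E}_{F_1}\fkh{w_1\kh{y}}$; because $\sqrt{E_1}-\sqrt{c_1}$ attains the maximum in \eqref{eqn:optsecond} and that maximum is at least $\sqrt{\mathbb{E}_{F_0}\fkh{y}}-\sqrt{c_0}>0$, we have $E_1>c_1\ge 0$, so $s_2\in\kh{0,1}$ is well defined.

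First I would establish the lower bound. Let $\kh{F_2,c_2}$ be agent $2$'s chosen action. Since $a_1$ lies in every compatible technology, agent $2$'s rationality gives $\mathbb{E}_{F_2}\fkh{w_2\kh{y}}-c_2\ge\mathbb{E}_{F_1}\fkh{w_2\kh{y}}-c_1$, which as in Lemma~\ref{lem:A2} yields $s_2\mathbb{E}_{F_2}\fkh{y}\ge s_2 E_1-c_1$, hence $\mathbb{E}_{F_2}\fkh{y}\ge E_1-c_1/s_2$. The principal's payoff is then $\kh{1-s_2}\mathbb{E}_{F_2}\fkh{y}\ge\kh{1-s_2}\kh{E_1-c_1/s_2}=\kh{\sqrt{E_1}-\sqrt{c_1}}^2$, using the same factorization as in Lemma~\ref{lem:A2}.

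Next I would show the bound is tight by exhibiting a worst-case technology $A=\hkh{a_0,a_1,\kh{F',0}}$ with $F'=\lambda F_1+\kh{1-\lambda}\delta_0$ and $\lambda=1-s_2$. The verification has two parts. For compatibility (agent $1$ chooses $a_1$ under $w_1$) the binding comparison is between $\kh{F',0}$ and $a_1$: agent $1$'s payoff from $\kh{F',0}$ is $\lambda W_1$, so I need $\lambda W_1\le W_1-c_1$, i.e.\ $s_2 W_1\ge c_1$, equivalently $W_1\ge\sqrt{c_1 E_1}$. This follows from the maximality hypothesis $\sqrt{E_1}-\sqrt{c_1}\ge\sqrt{E_1-W_1}$: squaring gives $W_1\ge 2\sqrt{c_1 E_1}-c_1$, and since $E_1>c_1$ this exceeds $\sqrt{c_1 E_1}$. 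For agent $2$'s choice under $w_2$, a direct computation shows his payoff from both $\kh{F',0}$ and $a_1$ equals $\sqrt{c_1 E_1}-c_1$, while his payoff from $a_0$ is $s_2\mathbb{E}_{F_0}\fkh{y}-c_0$; the difference simplifies---using $\sqrt{E_1}-\sqrt{c_1}\ge\sqrt{\mathbb{E}_{F_0}\fkh{y}}-\sqrt{c_0}$ exactly as in Lemma~\ref{lem:A2}---to $\kh{1-s_2}\kh{\sqrt{c_1}-\sqrt{c_0}}^2\ge 0$, so $a_0$ is not preferred.

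The main obstacle, as in Lemmas~\ref{lem:A1} and~\ref{lem:A2}, is the indifference of agent $2$ between $a_1$ and $\kh{F',0}$ under $w_2$: since ties break in the principal's favor and she earns strictly more from $a_1$ (namely $\kh{1-s_2}E_1$) than from $\kh{F',0}$ (namely $\kh{1-s_2}^2 E_1$), the technology above does not by itself pin down the low value. I would resolve this exactly as in Step~2 of Lemma~\ref{lem:A1}, perturbing $F'$ and its cost by an $\epsilon$ that keeps agent $1$ weakly choosing $a_1$ while making agent $2$ strictly prefer $\kh{F',0}$, and then letting $\epsilon\downarrow 0$. In that limit the principal's payoff is $\kh{1-s_2}^2 E_1=\kh{\sqrt{E_1}-\sqrt{c_1}}^2$, so the infimum over compatible technologies---the second-period payoff guarantee---equals $\kh{\sqrt{E_1}-\sqrt{c_1}}^2$, matching the lower bound.
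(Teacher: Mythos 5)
Your proposal is correct and follows essentially the same route as the paper's proof of Lemma~\ref{lem:A3}: the identical rationality-based lower bound, the same worst-case technology $A=\hkh{a_0,a_1,\kh{F',0}}$ with $F'=\kh{1-s_2}F_1+s_2\delta_0$, and the same two inequalities (maximality over $\sqrt{\mathbb{E}_{F_1}\fkh{y-w_1\kh{y}}}$ for compatibility, and over $\sqrt{\mathbb{E}_{F_0}[y]}-\sqrt{c_0}$ to rule out $a_0$ for agent 2). Your explicit treatment of agent 2's indifference via an $\epsilon$-perturbation is exactly how the paper handles it (it defers to the remark in the proof of Lemma~\ref{lem:A1} and omits the details).
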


\begin{proof}[Proof of Lemma~\ref{lem:A3}]
If $\sqrt{\mathbb{E}_{F_{1}}[y]}-\sqrt{c_{1}}$ attains the maximum in equation \eqref{eqn:optsecond}, then it holds that 
$\sqrt{\mathbb{E}_{F_{1}}[y]}-\sqrt{c_{1}}\ge\sqrt{\mathbb{E}_{F_{0}}[y]}-\sqrt{c_{0}}>0$, which implies that $c_{1}/\mathbb{E}_{F_{1}}[y]\in\fkh{0,1}$.

Suppose the principal offers the linear contract $w_2\kh{y}=s_2 y$ with $s_2=\sqrt{c_1/\mathbb{E}_{F_1}\fkh{y}}$. We first show that this guarantees her at least $\left(\sqrt{\mathbb{E}_{F_{1}}[y]}-\sqrt{c_{1}}\right)^{2}$.
Let $\kh{F_2,c_2}$ be the action chosen by agent 2. By agent 2's rationality, we have $$\mathbb{E}_{F_2}\fkh{w_2\kh{y}}-c_2\ge \mathbb{E}_{F_1}\fkh{w_2\kh{y}}-c_1,$$
which further implies that
$$s_2\mathbb{E}_{F_2}\fkh{y}=\mathbb{E}_{F_2}\fkh{w_2\kh{y}}\ge\mathbb{E}_{F_2}\fkh{w_2\kh{y}}-c_2\ge  \mathbb{E}_{F_1}\fkh{w_2\kh{y}}-c_1=s_2\mathbb{E}_{F_1}\fkh{y}-c_1,$$
and hence
$$\mathbb{E}_{F_2}\fkh{y}\ge\mathbb{E}_{F_1}\fkh{y}-c_1/s_2 .$$
Therefore, the principal's payoff in the second period is
\eqns{ \mathbb{E}_{F_2}\fkh{y-w_2\kh{y}}&= \mathbb{E}_{F_2}\fkh{\kh{1-s_2}y}\ge\kh{1-s_2}\kh{\mathbb{E}_{F_1}\fkh{y}-c_1/s_2}=\left(\sqrt{\mathbb{E}_{F_{1}}[y]}-\sqrt{c_{1}}\right)^{2},}
as desired.

Next we show that her payoff guarantee from this linear contract cannot be strictly higher, since this is exactly her payoff when the technology is  $ {A}= \left\{a_0,a_1,\left(F^{\prime}, 0\right)\right\}$,  with $F'=\lambda F_1+(1-\lambda) \delta_{0}$ where $\lambda=1-\sqrt{c_1 /\mathbb{E}_{F_1}\fkh{{y}}}\in\fkh{0,1}$. 

The proof takes two steps.

\paragraph{Step 1} $ {A}=\left\{a_0,a_1,\left(F^{\prime}, 0\right)\right\}$ is compatible with $\kh{w_1,a_1}$. That is, agent 1 chooses $a_1$ in response to $w_1$.

Agent 1's payoff from $\kh{F',0}$ is
$\mathbb{E}_{F^{\prime}}\left[w_{1}(y)\right]=\lambda \mathbb{E}_{F_1}\left[w_{1}(y)\right]=\kh{1-\sqrt{c_1/\mathbb{E}_{F_1}\fkh{y}}}\mathbb{E}_{F_1}\left[w_{1}(y)\right]$, and we have
\eqns{\kh{1-\sqrt{\frac{c_1} {\mathbb{E}_{F_1}\fkh{{y}}}}} \mathbb{E}_{F_1}\left[w_{1}(y)\right]\le\mathbb{E}_{F_1}\left[w_{1}(y)\right]-c_1\quad\Leftrightarrow\quad& \kh{1-\sqrt{\frac{c_1} {\mathbb{E}_{F_1}\fkh{{y}}}}} \mathbb{E}_{F_1}\left[w_{1}(y)\right]\le {\mathbb{E}_{F_1}\fkh{w_1\kh{y}}-c_1}\\
\quad\Leftrightarrow\quad&\sqrt{\frac{c_1} {\mathbb{E}_{F_1}\fkh{{y}}}} \mathbb{E}_{F_1}\left[w_{1}(y)\right]-c_1\ge 0.}
From $\sqrt{\mathbb{E}_{F_{1}}[y]}-\sqrt{c_{1}}\ge\sqrt{\mathbb{E}_{F_1}\fkh{y-w_1\kh{y}}}$, we obtain $\mathbb{E}_{F_1}\fkh{w_1\kh{y}}\ge\mathbb{E}_{F_1}\fkh{y}-\kh{\sqrt{\mathbb{E}_{F_{1}}[y]}-\sqrt{c_{1}}}^2 $, and thus
$$\sqrt{\frac{c_1} {\mathbb{E}_{F_1}\fkh{{y}}}} \mathbb{E}_{F_1}\left[w_{1}(y)\right]-c_1\ge\sqrt{\frac{c_1} {\mathbb{E}_{F_1}\fkh{{y}}}} \cdot \kh{\mathbb{E}_{F_1}\fkh{y}-\kh{\sqrt{\mathbb{E}_{F_{1}}[y]}-\sqrt{c_{1}}}^2}-c_1=\kh{1-\sqrt{\frac{c_1} {\mathbb{E}_{F_1}\fkh{{y}}}}}c_1\ge 0,$$as desired.
So we indeed have $\mathbb{E}_{F^{\prime}}\left[w_{1}(y)\right]\le\mathbb{E}_{F_1}\left[w_{1}(y)\right]-c_1,$ implying that agent $1$ would choose $\kh{F_1,c_1}$ in response to $w_1$. 

\paragraph{Step 2} 
If $ {A}=\left\{a_0,a_1,\left(F^{\prime}, 0\right)\right\}$, then agent 2 chooses $\left(F^{\prime}, 0\right)$ in response to $w_2$, leading to a payoff of $\left(\sqrt{\mathbb{E}_{F_{1}}[y]}-\sqrt{c_{1}}\right)^{2}$ for the principal.  

Agent 2's payoff from $\kh{F',0}$ is
\eqns{\mathbb{E}_{F^{\prime}}\left[w_{2}(y)\right]=\lambda \mathbb{E}_{F_1}\left[s_2 y\right]&=\kh{1-\sqrt{\frac{c_1}{\mathbb{E}_{F_1}\fkh{y}}}}\cdot \sqrt{\frac{c_1}{\mathbb{E}_{F_1}\fkh{y}}}\cdot\mathbb{E}_{F_1}\fkh{y}\\
&=\kh{\sqrt{\mathbb{E}_{F_{1}}[y]}-\sqrt{c_{1}}}\sqrt{c_1}= \sqrt{\frac{c_1}{\mathbb{E}_{F_1}\fkh{y}}}\cdot \mathbb{E}_{F_1}\fkh{y}-c_1\\
&=s_2 \mathbb{E}_{F_1}\fkh{y}-c_1=\mathbb{E}_{F_1}\fkh{w_2\kh{y}}-c_1.} His payoff from $\kh{F_0,c_0}$ is
$\mathbb{E}_{F_0}\left[w_{2}(y)\right]-c_0=\sqrt{{c_1}/{\mathbb{E}_{F_1}\fkh{y}}}\cdot\mathbb{E}_{F_0}\fkh{y}-c_0$, and we have
\eqns{\sqrt{\frac{c_1}{\mathbb{E}_{F_1}\fkh{y}}}\cdot\mathbb{E}_{F_0}\fkh{y}-c_0\le\mathbb{E}_{F_1}\left[w_{2}(y)\right]-c_1\quad\Leftrightarrow\quad\sqrt{\frac{c_1}{\mathbb{E}_{F_1}\fkh{y}}}\cdot\mathbb{E}_{F_0}\fkh{y}-c_0\le\kh{\sqrt{\mathbb{E}_{F_{1}}[y]}-\sqrt{c_{1}}}\sqrt{c_1}.}
From $\sqrt{\mathbb{E}_{F_{1}}[y]}-\sqrt{c_{1}}\ge\sqrt{\mathbb{E}_{F_{0}}[y]}-\sqrt{c_{0}}$, we obtain $\mathbb{E}_{F_0}\fkh{{y}}\le\kh{\sqrt{\mathbb{E}_{F_{1}}[y]}-\sqrt{c_{1}}+\sqrt{c_0}}^2 $, and thus
\eqns{&\kh{\sqrt{\mathbb{E}_{F_{1}}[y]}-\sqrt{c_{1}}}\sqrt{c_1}-\kh{\sqrt{\frac{c_1}{\mathbb{E}_{F_1}\fkh{y}}}\cdot\mathbb{E}_{F_0}\fkh{y}-c_0}\\
\ge\,&\kh{\sqrt{\mathbb{E}_{F_{1}}[y]}-\sqrt{c_{1}}}\sqrt{c_1}-\kh{\sqrt{\frac{c_1}{\mathbb{E}_{F_1}\fkh{y}}}\cdot\kh{\sqrt{\mathbb{E}_{F_{1}}[y]}-\sqrt{c_{1}}+\sqrt{c_0}}^2-c_0}\\
=\,&\kh{1-\sqrt{\frac{c_1} {\mathbb{E}_{F_1}\fkh{{y}}}}}\kh{\sqrt{c_1}-\sqrt{c_0}}^2\ge 0,}
as desired. So we indeed have $\mathbb{E}_{F_0}\left[w_{2}(y)\right]-c_0\le\mathbb{E}_{F_1}\fkh{w_2\kh{y}}-c_1=\mathbb{E}_{F^{\prime}}\left[w_{2}(y)\right]$, implying that agent $2$ would choose $\kh{F',0}$ in response to $w_2$.

This leaves the principal with payoff of
\eqns{ \mathbb{E}_{F'}\fkh{y-w_2\kh{y}}&=\lambda \mathbb{E}_{F_0}\fkh{\kh{1-s_2}y}=\kh{1-\sqrt{\frac{c_1}{\mathbb{E}_{F_1}\fkh{y}}}} \kh{1-\sqrt{\frac{c_1}{\mathbb{E}_{F_1}\fkh{y}}}}\cdot\mathbb{E}_{F_1}\fkh{y}\\
&=\left(\sqrt{\mathbb{E}_{F_{1}}[y]}-\sqrt{c_{1}}\right)^{2},}
as desired.
\\ \\
This completes the proof.
\end{proof}

We are now ready to prove Lemma~\ref{lem:second}.

\begin{proof}[Proof of Lemma~\ref{lem:second}]
If the principal offers $w_2=w_1$, this guarantees her payoff in the first-period, which is equal to $\mathbb{E}_{F_1}\fkh{y-w_1\kh{y}}$. Note that her payoff guarantee from $w_2=w_1$ cannot be strictly higher, since this is exactly her payoff when the technology is $ {A}= \hkh{a_0,a_1}$, which is compatible with $\kh{w_1,a_1}$. 

Together with Lemmas~\ref{lem:A1}, \ref{lem:A2} and \ref{lem:A3}, we have shown that by offering the best among the four contracts: (i) $w_2=w_1$, (ii) $w_2\kh{y}=w_1\kh{y}+m\cdot \kh{y-w_1\kh{y}}$ with $m$ defined by equation \eqref{eqn:m},  (iii) $w_2\kh{y}=s_2 y$ with $s_2=\sqrt{c_0/\mathbb{E}_{F_0}\fkh{y}}$, and (iv) $w_2\kh{y}=s_2 y$ with $s_2=\sqrt{c_1/\mathbb{E}_{F_1}\fkh{y}}$, the principal's payoff guarantee in the second period is exactly given by $\hat{\Phi}\kh{{w}_1, a_1}^2$, where $\hat\Phi$ is defined by equation \eqref{eqn:optsecond}. The principal's optimal second-period payoff guarantee, $\hat{V}_2^*\kh{{w}_1, a_1}$, is thus at least $\hat{\Phi}\kh{{w}_1, a_1}^2$.

Now consider an arbitrary second-period contract $w_2$. It suffices to show that the principal's payoff guarantee is not strictly higher than  $\hat{\Phi}\kh{{w}_1, a_1}^2$ under $w_2$. 

Consider the following two cases.

\paragraph{Case 1.} $\mathbb{E}_{F_1}\fkh{w_2\kh{y}}-c_1\ge\mathbb{E}_{F_0}\fkh{w_2\kh{y}}-c_0$.
\begin{enumerate}
\item If $\mathbb{E}_{F_1}\fkh{w_2\kh{y}}\ge\mathbb{E}_{F_1}\fkh{w_1\kh{y}}$, consider the second-period contract $w_2$ when the technology is $ {A}= \hkh{a_0,a_1}$, which is compatible with $\kh{w_1,a_1}$. Agent 2 would prefer to take action $a_1=\kh{F_1,c_1}$. This leaves the principal with a payoff of
$$\mathbb{E}_{F_1}\fkh{y-w_2\kh{y}}\le\mathbb{E}_{F_1}\fkh{y-w_1\kh{y}}\le \hat{\Phi}\kh{{w}_1, a_1}^2,$$
as desired.
\item If $\mathbb{E}_{F_1}\fkh{w_2\kh{y}}<c_1$, consider the second-period contract $w_2$ when $ {A}= \hkh{a_0,a_1,\left(\delta_0, 0\right)}$, which is compatible with $\kh{w_1,a_1}$. Agent 2's payoff from $\kh{\delta_0,0}$ is
$$w_2\kh{0}\ge 0> \mathbb{E}_{F_1}\left[w_{2}(y)\right]-c_1,$$
so he would prefer to take action $\left(\delta_0, 0\right)$. This leaves the principal with a payoff of $$-w_2\kh{0}\le 0\le \hat{\Phi}\kh{{w}_1, a_1}^2,$$as desired.

\item  If $ c_1\le \mathbb{E}_{F_1}\fkh{w_2\kh{y}}<\mathbb{E}_{F_1}\fkh{w_1\kh{y}}$, let $\lambda=1-c_1 /\mathbb{E}_{F_1}\fkh{w_2\kh{y}}\in[0,1]$ and let $F'$ be the mixture $\lambda F_1+(1-\lambda) \delta_{0}$. Consider the technology $ {A}= \left\{a_0,a_1,\left(F^{\prime}, 0\right)\right\}$. 

We proceed with two steps.
\paragraph{Step 1} $ {A}$ is compatible with $\kh{w_1,a_1}$. That is, agent 1 chooses $a_1$ in response to $w_1$.

 Agent 1's payoff from $\kh{F',0}$ is
$$\mathbb{E}_{F^{\prime}}\left[w_{1}(y)\right]=\lambda \mathbb{E}_{F_1}\left[w_{1}(y)\right]=\mathbb{E}_{F_1}\fkh{w_1\kh{y}}-\frac{ \mathbb{E}_{F_1}\left[w_{1}(y)\right]}{ \mathbb{E}_{F_1}\left[w_{2}(y)\right]}c_1<\mathbb{E}_{F_1}\fkh{w_1\kh{y}}-c_1,$$
so he would prefer to take action $a_1=\kh{F_1,c_1}$ when $ {A}= \left\{a_0,a_1,\left(F^{\prime}, 0\right)\right\}$.

\paragraph{Step 2} 
Agent 2 chooses $\left(F^{\prime}, 0\right)$ in response to $w_2$, resulting in the principal's payoff no more than $\left(\sqrt{\mathbb{E}_{F_{1}}[y]}-\sqrt{c_{1}}\right)^{2}$.  

Agent 2's payoff from $\kh{F',0}$ is
\eqns{\mathbb{E}_{F^{\prime}}\left[w_{2}(y)\right]&=\lambda \mathbb{E}_{F_1}\left[w_{2}(y)\right]+(1-\lambda) w_{2}(0)\\
&\ge\lambda \mathbb{E}_{F_1}\left[w_{2}(y)\right]= \mathbb{E}_{F_1}\left[w_{2}(y)\right]-c_1,}
which is also larger than $\mathbb{E}_{F_0}\fkh{w_2\kh{y}}-c_0$ by assumption. So he would prefer to take action $\kh{F',0}$.

This leaves the principal with a payoff of
\eqn{\mathbb{E}_{F'}\fkh{y-w_2\kh{y}}&=\lambda\mathbb{E}_{F_1}\fkh{y-w_2\kh{y}}+\kh{1-\lambda}\kh{0-w_2\kh{0}}\notag\\
&\le\lambda\mathbb{E}_{F_1}\fkh{y-w_2\kh{y}}=\kh{1-\frac{c_1}{\mathbb{E}_{F_1}\fkh{w_2\kh{y}}}}\kh{\mathbb{E}_{F_1}\fkh{y}-\mathbb{E}_{F_1}\fkh{w_2\kh{y}}}\notag\\
&\le \left(\sqrt{\mathbb{E}_{F_{1}}[y]}-\sqrt{c_{1}}\right)^{2},\label{eqn:ineq1}}
which is no more than $\hat{\Phi}\kh{{w}_1, a_1}^2$, as desired. The last inequality \eqref{eqn:ineq1},
\eqns{&\kh{1-\frac{c_1}{\mathbb{E}_{F_1}\fkh{w_2\kh{y}}}}\kh{\mathbb{E}_{F_1}\fkh{y}-\mathbb{E}_{F_1}\fkh{w_2\kh{y}}}\le  \left(\sqrt{\mathbb{E}_{F_{1}}[y]}-\sqrt{c_{1}}\right)^{2}\\
\Leftrightarrow\quad&\kh{\sqrt{\mathbb{E}_{F_1}\fkh{w_2\kh{y}}}-\sqrt{\frac{c_1\mathbb{E}_{F_1}\fkh{y}}{\mathbb{E}_{F_1}\fkh{w_2\kh{y}}}}}^2\ge 0,}
which always holds.
\end{enumerate}

\paragraph{Case 2.} $\mathbb{E}_{F_1}\fkh{w_2\kh{y}}-c_1<\mathbb{E}_{F_0}\fkh{w_2\kh{y}}-c_0$.
\begin{enumerate}
\item If $\mathbb{E}_{F_0}\fkh{w_2\kh{y}}<c_0$, consider the second-period contract $w_2$ when $ {A}= \hkh{a_0,a_1,\left(\delta_0, 0\right)}$, which is compatible with $\kh{w_1,a_1}$. Agent 2's payoff from $\kh{\delta_0,0}$ is
$$w_2\kh{0}\ge 0> \mathbb{E}_{F_0}\left[w_{2}(y)\right]-c_0,$$
so he would prefer to take action $\left(\delta_0, 0\right)$. This leaves the principal with a payoff of $$-w_2\kh{0}\le 0\le\hat{\Phi}\kh{{w}_1, a_1}^2,$$as desired.

\item  If $\mathbb{E}_{F_0}\fkh{w_2\kh{y}}\ge c_0$, and it holds that
\eqn{\begin{aligned} \text{either}\quad\text{(i)}&\quad \mathbb{E}_{F_0}\left[w_{1}(y)\right]\le { \mathbb{E}_{F_1}\fkh{w_1\kh{y}}-c_1},\\
\text{or}\quad \text{(ii)}&\quad \mathbb{E}_{F_0}\fkh{w_2\kh{y}}<\frac{ \mathbb{E}_{F_0}\left[w_{1}(y)\right]}{ \mathbb{E}_{F_0}\left[w_{1}(y)\right]-\kh{ \mathbb{E}_{F_1}\fkh{w_1\kh{y}}-c_1}}c_0, 
\end{aligned}\label{eqn:A3}}
let $\lambda=1-c_0 /\mathbb{E}_{F_0}\fkh{w_2\kh{y}}\in[0,1]$ and let $F'$ be the mixture $\lambda F_0+(1-\lambda) \delta_{0}$. Consider the technology $ {A}= \left\{a_0,a_1,\left(F^{\prime}, 0\right)\right\}$. 

We proceed with two steps.
\paragraph{Step 1}$ {A}$ is compatible with $\kh{w_1,a_1}$. That is, agent 1 chooses $a_1$ in response to $w_1$.

Agent 1's payoff from $\kh{F',0}$ is
\eqn{\mathbb{E}_{F^{\prime}}\left[w_{1}(y)\right]=\lambda \mathbb{E}_{F_0}\left[w_{1}(y)\right]&=\mathbb{E}_{F_0}\fkh{w_1\kh{y}}-\frac{ \mathbb{E}_{F_0}\left[w_{1}(y)\right]}{ \mathbb{E}_{F_0}\left[w_{2}(y)\right]}c_0\notag\\
&< \mathbb{E}_{F_1}\fkh{w_1\kh{y}}-c_1.\label{eqn:A4}}
Note that inequality \eqref{eqn:A4} holds exactly due to the assumptions in \eqref{eqn:A3}. So agent 1 would prefer to take action $a_1=\kh{F_1,c_1}$ when $ {A}= \left\{a_0,a_1,\left(F^{\prime}, 0\right)\right\}$. 

\paragraph{Step 2} 
Agent 2 chooses $\left(F^{\prime}, 0\right)$ in response to $w_2$, resulting in the principal's payoff no more than $\left(\sqrt{\mathbb{E}_{F_{0}}[y]}-\sqrt{c_{0}}\right)^{2}$.  

Agent 2's payoff from $\kh{F',0}$ is
\eqns{\mathbb{E}_{F^{\prime}}\left[w_{2}(y)\right]&=\lambda \mathbb{E}_{F_0}\left[w_{2}(y)\right]+(1-\lambda) w_{2}(0)\\
&\ge\lambda \mathbb{E}_{F_0}\left[w_{2}(y)\right]= \mathbb{E}_{F_0}\left[w_{2}(y)\right]-c_0,}which is also larger than $\mathbb{E}_{F_1}\fkh{w_2\kh{y}}-c_1$ by assumption. 
So he would prefer to take action $\kh{F',0}$ when $ {A}= \left\{a_0,a_1,\left(F^{\prime}, 0\right)\right\}$.  

This leaves the principal with a payoff of
\eqn{\mathbb{E}_{F'}\fkh{y-w_2\kh{y}}&=\lambda\mathbb{E}_{F_0}\fkh{y-w_2\kh{y}}+\kh{1-\lambda}\kh{0-w_2\kh{0}}\notag\\
&\le\lambda\mathbb{E}_{F_0}\fkh{y-w_2\kh{y}}=\kh{1-\frac{c_0}{\mathbb{E}_{F_0}\fkh{w_2\kh{y}}}}\kh{\mathbb{E}_{F_0}\fkh{y}-\mathbb{E}_{F_0}\fkh{w_2\kh{y}}}\notag\\
&\le \left(\sqrt{\mathbb{E}_{F_{0}}[y]}-\sqrt{c_{0}}\right)^{2},\label{eqn:A5}}
which is no more than $\hat{\Phi}\kh{{w}_1, a_1}^2$, as desired. The last inequality \eqref{eqn:A5} holds for  the same reason as \eqref{eqn:ineq1}.

\item  If both inequalities in \eqref{eqn:A3} are reversed, i.e., $$ \mathbb{E}_{F_0}\left[w_{1}(y)\right]> { \mathbb{E}_{F_1}\fkh{w_1\kh{y}}-c_1}\quad\text{and}\quad \mathbb{E}_{F_0}\fkh{w_2\kh{y}}\ge \frac{ \mathbb{E}_{F_0}\left[w_{1}(y)\right]}{ \mathbb{E}_{F_0}\left[w_{1}(y)\right]-\kh{ \mathbb{E}_{F_1}\fkh{w_1\kh{y}}-c_1}}c_0,$$
let \eqns{\lambda&=\frac{\kh{ \mathbb{E}_{F_0}\fkh{w_2\kh{y}}-c_0}-\kh{ \mathbb{E}_{F_1}\fkh{w_1\kh{y}}-c_1}}{\mathbb{E}_{F_0}\fkh{w_2\kh{y}}-\mathbb{E}_{F_0}\fkh{w_1\kh{y}}},\\
c'&=\frac{\mathbb{E}_{F_0}\fkh{w_1\kh{y}}\kh{ \mathbb{E}_{F_0}\fkh{w_2\kh{y}}-c_0}-\mathbb{E}_{F_0}\fkh{w_2\kh{y}}\kh{ \mathbb{E}_{F_1}\fkh{w_1\kh{y}}-c_1}}{\mathbb{E}_{F_0}\fkh{w_2\kh{y}}-\mathbb{E}_{F_0}\fkh{w_1\kh{y}}},} and let $F'$ be the mixture $\lambda F_0+(1-\lambda) \delta_{0}$. Consider  the technology $ {A}= \left\{a_0,a_1,\left(F^{\prime}, c'\right)\right\}$. 

We proceed with three steps.

\paragraph{Step 1} $\lambda\in\fkh{0,1}$ and $c'\ge 0$, so  $\left(F^{\prime}, c'\right)$ is a valid action.

Note that $$\mathbb{E}_{F_0}\fkh{w_2\kh{y}}\ge \frac{ \mathbb{E}_{F_0}\left[w_{1}(y)\right]}{ \mathbb{E}_{F_0}\left[w_{1}(y)\right]-\kh{ \mathbb{E}_{F_1}\fkh{w_1\kh{y}}-c_1}}c_0\ge \frac{ \mathbb{E}_{F_0}\left[w_{1}(y)\right]}{ \mathbb{E}_{F_0}\left[w_{1}(y)\right]-\kh{ \mathbb{E}_{F_0}\fkh{w_1\kh{y}}-c_0}}c_0=  \mathbb{E}_{F_0}\left[w_{1}(y)\right],$$ so the denominator of $\lambda$ and $c'$ is positive. 

Moreover, 
\eqns{\mathbb{E}_{F_0}\fkh{w_2\kh{y}}-c_0&\ge \frac{{ \mathbb{E}_{F_1}\fkh{w_1\kh{y}}-c_1}}{ \mathbb{E}_{F_0}\left[w_{1}(y)\right]-\kh{ \mathbb{E}_{F_1}\fkh{w_1\kh{y}}-c_1}}c_0\\
&\ge \frac{ { \mathbb{E}_{F_1}\fkh{w_1\kh{y}}-c_1}}{ \mathbb{E}_{F_0}\left[w_{1}(y)\right]-\kh{ \mathbb{E}_{F_0}\fkh{w_1\kh{y}}-c_0}}c_0={ \mathbb{E}_{F_1}\fkh{w_1\kh{y}}-c_1},}
so the numerator of $\lambda$ is positive. 

The numerator of $c'$ is positive because 
\eqns{&\mathbb{E}_{F_0}\fkh{w_1\kh{y}}\kh{ \mathbb{E}_{F_0}\fkh{w_2\kh{y}}-c_0}\ge \mathbb{E}_{F_0}\fkh{w_2\kh{y}}\kh{ \mathbb{E}_{F_1}\fkh{w_1\kh{y}}-c_1}\\\quad\Leftrightarrow\quad &\mathbb{E}_{F_0}\fkh{w_2\kh{y}}\ge \frac{ \mathbb{E}_{F_0}\left[w_{1}(y)\right]}{ \mathbb{E}_{F_0}\left[w_{1}(y)\right]-\kh{ \mathbb{E}_{F_1}\fkh{w_1\kh{y}}-c_1}}c_0.}
Finally, 
\eqns{&\kh{ \mathbb{E}_{F_0}\fkh{w_2\kh{y}}-c_0}-\kh{ \mathbb{E}_{F_1}\fkh{w_1\kh{y}}-c_1}\le \mathbb{E}_{F_0}\fkh{w_2\kh{y}}- \mathbb{E}_{F_0}\fkh{w_1\kh{y}}\\\quad\Leftrightarrow\quad&\mathbb{E}_{F_0}\fkh{w_1\kh{y}}-c_0\le  \mathbb{E}_{F_1}\fkh{w_1\kh{y}}-c_1,} so  $\lambda$ is indeed smaller than $1$.

\paragraph{Step 2}$ {A}$ is compatible with $\kh{w_1,a_1}$. That is, agent 1 chooses $a_1$ in response to $w_1$.

Agent 1's payoff from $\kh{F',c'}$ is
$$\mathbb{E}_{F^{\prime}}\left[w_{1}(y)\right]-c'=\lambda \mathbb{E}_{F_0}\left[w_{1}(y)\right]-c'=\mathbb{E}_{F_1}\fkh{w_1\kh{y}}-c_1,$$
so he would prefer to take action $a_1=\kh{F_1,c_1}$ when $ {A}= \left\{a_0,a_1,\left(F^{\prime}, c'\right)\right\}$. 

\paragraph{Step 3} 
Agent 2 chooses $\left(F^{\prime}, c'\right)$ in response to $w_2$, resulting in the principal's payoff no more than $\kh{\sqrt{\mathbb{E}_{F_0}\fkh{y-w_1\kh{y}}}-\sqrt{g\kh{w_1,a_1}}}^2$.  

Agent 2's payoff from $\kh{F',c'}$ is
\eqns{\mathbb{E}_{F^{\prime}}\left[w_{2}(y)\right]-c'&=\lambda \mathbb{E}_{F_0}\left[w_{2}(y)\right]+(1-\lambda) w_{2}(0)-c'\\
&\ge\lambda \mathbb{E}_{F_0}\left[w_{2}(y)\right]-c'= \mathbb{E}_{F_0}\left[w_{2}(y)\right]-c_0,}which is also larger than $\mathbb{E}_{F_1}\fkh{w_2\kh{y}}-c_1$ by assumption. 
So he would prefer to take action $\kh{F',c'}$ when $ {A}= \left\{a_0,a_1,\left(F^{\prime}, c'\right)\right\}$.
 
This leaves the principal with a payoff of
\eqn{\mathbb{E}_{F'}\fkh{y-w_2\kh{y}}&=\lambda\mathbb{E}_{F_0}\fkh{y-w_2\kh{y}}+\kh{1-\lambda}\kh{0-w_2\kh{0}}\notag\\
&\le\lambda\mathbb{E}_{F_0}\fkh{y-w_2\kh{y}}=\frac{\kh{ \mathbb{E}_{F_0}\fkh{w_2\kh{y}}-c_0}-\kh{ \mathbb{E}_{F_1}\fkh{w_1\kh{y}}-c_1}}{\mathbb{E}_{F_0}\fkh{w_2\kh{y}}-\mathbb{E}_{F_0}\fkh{w_1\kh{y}}}\kh{\mathbb{E}_{F_0}\fkh{y}-\mathbb{E}_{F_0}\fkh{w_2\kh{y}}}\notag\\
&\le\kh{\sqrt{\mathbb{E}_{F_0}\fkh{y-w_1\kh{y}}}-\sqrt{g\kh{w_1,a_1}}}^2,\label{eqn:A6}}
which is no more than $\hat\Phi\kh{{w}_1, a_1}^2$, as desired. The last inequality \eqref{eqn:A6},
\eqns{&\frac{\kh{ \mathbb{E}_{F_0}\fkh{w_2\kh{y}}-c_0}-\kh{ \mathbb{E}_{F_1}\fkh{w_1\kh{y}}-c_1}}{\mathbb{E}_{F_0}\fkh{w_2\kh{y}}-\mathbb{E}_{F_0}\fkh{w_1\kh{y}}}\kh{\mathbb{E}_{F_0}\fkh{y}-\mathbb{E}_{F_0}\fkh{w_2\kh{y}}}\le\kh{\sqrt{\mathbb{E}_{F_0}\fkh{y-w_1\kh{y}}}-\sqrt{g\kh{w_1,a_1}}}^2 \\
\Leftrightarrow\quad&\frac{\kh{\mathbb{E}_{F_0}\fkh{w_2\kh{y}}-\mathbb{E}_{F_0}\fkh{w_1\kh{y}}-\sqrt{\mathbb{E}_{F_0}\fkh{y-w_1\kh{y}}}\cdot \sqrt{g\kh{w_1,a_1}}}^2}{\mathbb{E}_{F_0}\fkh{w_2\kh{y}}-\mathbb{E}_{F_0}\fkh{w_1\kh{y}}}\ge 0,}
which always holds. (Recall that $g\kh{w_1,a_1}=\kh{\mathbb{E}_{F_1}\fkh{w_1\kh{y}}-c_1}-\kh{\mathbb{E}_{F_0}\fkh{w_1\kh{y}}-c_0}\ge 0$.)
\end{enumerate}
Summing up the above cases, we prove that the principal's payoff guarantee is not strictly higher than  $\hat \Phi\kh{{w}_1, a_1}^2$ under any second-period contract $w_2$.

This completes the proof.
\end{proof}

\subsection{Proofs for Section \ref{sec:period1}}\label{app:proofp1}
To prove Theorem \ref{prop:1}, we start by establishing two lemmas, Lemmas \ref{lem:affineconst} and \ref{lem:optlinearconst}. Lemma~\ref{lem:affineconst} shows that any nonlinear contract is outperformed by some linear one, and Lemma~\ref{lem:optlinearconst} further shows that the maximum of the principal's first-period problem exists within the class of linear first-period contracts.
\begin{Lem}\label{lem:affineconst}
In the case of constant technology, the linear contract $\hat{w}_1$ defined by equation \eqref{eqn:affine} satisfies $\hat{U}\left(\hat{w}_{1}\right) \geq \hat{U}\left(w_{1}\right)$.
\end{Lem}
\begin{proof}[Proof of Lemma~\ref{lem:affineconst}]Consider an arbitrary action $a_1=\kh{F_1,c_1}$ agent $1$ would take under contract $\hat{w}_1$. We need to show that the principal's interim payoff guarantee, $\hat{U}\kh{\hat{w}_1\given  a_1}$, is at least $\hat{U}\kh{w_{1}}$. The incentive gap is 
$$g\kh{\hat{w}_1,a_1}=\kh{\mathbb{E}_{F_1}\fkh{\hat{w}_1\kh{y}}-c_1}-\kh{\mathbb{E}_{F_0}\fkh{\hat{w}_1\kh{y}}-c_0}\ge 0,$$
and Lemma  \ref{lem:second} shows that the principal's optimal second-period payoff guarantee is
$\hat{V}_{2}^*\kh{\hat{w}_1, a_1}=\hat{\Phi}\kh{\hat{w}_1, a_1}^2$, where
\eqn{\hat{\Phi}\kh{\hat{w}_1, a_1}=  \max \left\{\sqrt{\mathbb{E}_{F_{1}}\left[y-\hat{w}_1(y)\right]},\right.&\left.\sqrt{\mathbb{E}_{F_{0}}\left[y-\hat{w}_1(y)\right]}-\sqrt{g\kh{\hat{w}_1,a_1}},\sqrt{\mathbb{E}_{F_{0}}[y]}-\sqrt{c_{0}},\sqrt{\mathbb{E}_{F_{1}}[y]}-\sqrt{c_{1}}\right\}, \notag\\
&\quad\quad\quad\quad\quad\quad\quad\quad (\text{with }\sqrt{x}=-\infty\text{ for }x<0\text{ by convention}).\label{eqn:A7}}
The principal's interim payoff guarantee is 
\eqns{\hat{U}\kh{\hat{w}_1\given a_1}&=\mathbb{E}_{F_1}\fkh{y-\hat{w}_{1}(y)}+\beta\cdot\hat{V}_{2}^*\kh{\hat{w}_1, a_1}.}

It suffices to construct another action $a_1'$,  which may be  taken by agent $1$ under $w_1$ and some other technology, such that  $\hat{U}\kh{w_{1}\given  a_1'}\le \hat{U}\kh{\hat{w}_1\given a_1}$. Note that an action may be taken by agent $1$ if and only if the incentive gap is nonnegative, i.e., $g\kh{{w}_1,a_1'}\ge 0$.

\paragraph{Case 1.} $\mathbb{E}_{F_1}\fkh{y}\ge  \mathbb{E}_{F_0}\fkh{y}$.

Consider $a_1'=a_0$. The corresponding  incentive gap is $g\kh{{w}_1,a_0}= 0$. When agent 1 takes action $a_0$ in response, the principal's resulting payoff in the first period is 
$$\mathbb{E}_{F_0}\fkh{y-w_1(y)}=\kh{1-s_1}\mathbb{E}_{F_0}\fkh{y}\le\kh{1-s_1} \mathbb{E}_{F_1}\fkh{y}=\mathbb{E}_{F_1}\fkh{y-\hat{w}_{1}(y)} ,$$
so her payoff in the first period under $\kh{w_1\given a_0}$ is weakly lower than under $\kh{\hat{w}_1\given a_1}$. 

Moreover, it follows from Lemma \ref{lem:second} that  the principal's optimal second-period payoff guarantee is $\hat{V}_{2}^*\kh{{w}_1, a_0}=\hat{\Phi}\kh{{w}_1, a_0}^2$, where
$$\hat{\Phi}\kh{{w}_1, a_0}=\max \left\{\sqrt{\mathbb{E}_{F_{0}}\left[y-{w}_{1}(y)\right]},\sqrt{\mathbb{E}_{F_{0}}[y]}-\sqrt{c_{0}} \right\}.$$
Note that we have shown $\mathbb{E}_{F_{0}}\left[y-{w}_{1}(y)\right]\le\mathbb{E}_{F_1}\fkh{y-\hat{w}_{1}(y)} $, so  $\hat{\Phi}\kh{{w}_1, a_0}$ is also weakly smaller than $\hat{\Phi}\kh{\hat{w}_1, a_1}$ (given by equation \eqref{eqn:A7}). This implies that $ \hat{V}_{2}^*\kh{{w}_1,  a_0}\le \hat{V}_{2}^*\kh{\hat{w}_1, a_1}$.

Therefore, the principal's interim payoff guarantee is 
\eqns{\hat{U}\kh{{w}_1\given a_0}&=\mathbb{E}_{F_0}\fkh{y-{w}_{1}(y)}+\beta\cdot \hat{V}_{2}^*\kh{{w}_1,  a_0}\\
&\le \mathbb{E}_{F_1}\fkh{y-\hat{w}_{1}(y)}+\beta\cdot \hat{V}_{2}^*\kh{\hat{w}_1, a_1}=\hat{U}\kh{\hat{w}_1\given  a_1},}
as desired.

\paragraph{Case 2.}$\mathbb{E}_{F_1}\fkh{y}< \mathbb{E}_{F_0}\fkh{y}$.

Let $\lambda=\mathbb{E}_{F_1}[y]/ \mathbb{E}_{F_0}[y]\in\fkh{0,1}$ and let $F_1'$ be the mixture $\lambda F_0+\kh{1-\lambda}\delta_0$. Note that $\mathbb{E}_{F_1'}\fkh{y}=\mathbb{E}_{F_1}[y]$. Consider $a_1'=\kh{F_1',c_1}$. The corresponding incentive gap is $$g\kh{w_1,a_1'}=\kh{\mathbb{E}_{F_1'}\fkh{{w}_1\kh{y}}-c_1}-\kh{\mathbb{E}_{F_0}\fkh{{w}_1\kh{y}}-c_0}.$$
Note that \eqns{\mathbb{E}_{F_1'}\fkh{w_1\kh{y}}-c_1&=\lambda\mathbb{E}_{F_0}\fkh{w_1\kh{y}}-c_1=\lambda s_1\mathbb{E}_{F_0}\fkh{{y}}-c_1=s_1\mathbb{E}_{F_1}\fkh{{y}}-c_1=\mathbb{E}_{F_1}\fkh{\hat{w}_{1}\kh{y}}-c_1,}
and 
$$\mathbb{E}_{F_0}\fkh{w_{1}\kh{y}}-c_0=s_1 \mathbb{E}_{F_0}\fkh{y}-c_0=\mathbb{E}_{F_0}\fkh{\hat{w}_1\kh{y}}-c_0.$$
Thus, 
\eqns{g\kh{w_1,a_1'}&=\kh{\mathbb{E}_{F_1'}\fkh{{w}_1\kh{y}}-c_1}-\kh{\mathbb{E}_{F_0}\fkh{{w}_1\kh{y}}-c_0}\\
&=\kh{\mathbb{E}_{F_1}\fkh{\hat{w}_{1}\kh{y}}-c_1}-\kh{\mathbb{E}_{F_0}\fkh{\hat{w}_{1}\kh{y}}-c_0}\\
&=g\kh{\hat{w}_1,a_1}\ge 0.}

When agent $1$ takes action $a_1'$ in response, the principal's resulting payoff in the first period is 
$$\mathbb{E}_{F_1'}\fkh{y-w_1(y)}=\lambda\mathbb{E}_{F_0}\fkh{y-w_1(y)}=\lambda\kh{1-s_1}\mathbb{E}_{F_0}\fkh{y}=\kh{1-s_1} \mathbb{E}_{F_1}\fkh{y}=\mathbb{E}_{F_1}\fkh{y-\hat{w}_{1}(y)},$$
so her payoff in the first period under $\kh{w_1\given a_1'}$ and under $\kh{\hat{w}_1\given a_1}$ are exactly equal. 

Moreover, the quadruple in equation \eqref{eqn:optsecond} with respect to $\kh{w_1,a_1'}$,
\eqns{\left\{\sqrt{\mathbb{E}_{F_{1}'}\left[y-{w}_{1}(y)\right]},\sqrt{\mathbb{E}_{F_{0}}\left[y-{w}_{1}(y)\right]}-\sqrt{g\kh{w_1,a_1'}},\sqrt{\mathbb{E}_{F_{0}}[y]}-\sqrt{c_{0}},\sqrt{\mathbb{E}_{F_{1}'}[y]}-\sqrt{c_{1}}\right\},}
takes the same value as the quadruple  in equation \eqref{eqn:optsecond} with respect to $\kh{\hat{w}_1,a_1}$,
\eqns{\left\{\sqrt{\mathbb{E}_{F_{1}}\left[y-\hat{w}_{1}(y)\right]},\sqrt{\mathbb{E}_{F_{0}}\left[y-\hat{w}_{1}(y)\right]}-\sqrt{g\kh{\hat{w}_1,a_1}},\sqrt{\mathbb{E}_{F_{0}}[y]}-\sqrt{c_{0}},\sqrt{\mathbb{E}_{F_{1}}[y]}-\sqrt{c_{1}}\right\}.}
It follows from Lemma \ref{lem:second} that  the principal's optimal second-period payoff guarantee also takes the same value: $\hat{V}_{2}^*\kh{{w}_1,  a_1'}=\hat{V}_{2}^*\kh{\hat{w}_1,  a_1}$. 

Therefore, the principal's interim payoff guarantee is 
\eqns{\hat{U}\kh{{w}_1\given a_1'}&=\mathbb{E}_{F_1'}\fkh{y-{w}_{1}(y)}+\beta\cdot \hat{V}_{2}^*\kh{{w}_1,  a_1'}\\
&=\mathbb{E}_{F_1}\fkh{y-\hat{w}_{1}(y)}+\beta\cdot \hat{V}_{2}^*\kh{\hat{w}_1, a_1}=\hat{U}\kh{\hat{w}_1\given  a_1},}
as desired.
\\ \\
This completes the proof.
\end{proof}

\begin{Lem}\label{lem:optlinearconst}
In the case of constant technology, within the class of linear first-period contracts, there exists an optimal one for the principal.
\end{Lem}

\begin{proof}[Proof of Lemma~\ref{lem:optlinearconst}]
Assume the principal offers a linear first-period contract $w_1\kh{y}=s_1 y$ with $s_1\in\fkh{0,1}$. If agent $1$'s payoff from taking  $a_0$ is strictly negative, i.e., $\mathbb{E}_{F_0}\fkh{{w}_1\kh{y}}-c_0= s_1\mathbb{E}_{F_0}[y]-c_0<0$, then the principal cannot guarantee any positive payoff in the first period, since it is possible that the action $\kh{\delta_0,0}\in A$, and the agent would strictly prefer this action to $a_0$. Moreover, according to Lemma \ref{lem:second}, the principal's optimal second-period payoff guarantee is $\hat{V}_2^*\kh{{w}_1,\kh{\delta_0,0}}=\kh{\sqrt{\mathbb{E}_{F_{0}}[y]}-\sqrt{c_{0}}}^2$. This is already strictly worse than offering the alternative contract $s_1' y$ with $s_1'=\sqrt{c_0/\mathbb{E}_{F_0}\fkh{y}}$ instead, because doing so guarantees a strictly positive payoff $\kh{\sqrt{\mathbb{E}_{F_{0}}[y]}-\sqrt{c_{0}}}^2$ in the first period, and the payoff guarantee in the second period can only get better.

Therefore, when searching for optimal linear contracts, we may concentrate on those with $s_1\ge c_0/\mathbb{E}_{F_0}[y]$.
For any such linear first-period contract, suppose that agent 1 chooses $a_1=\kh{F_1,c_1}$ in response. As is shown in Lemma \ref{lem:second}, the principal's optimal second-period payoff guarantee is $\hat \Phi\kh{{w}_1, a_1}^2$,  with $\hat\Phi$ defined by equation \eqref{eqn:optsecond}. Thus, her interim payoff guarantee is \eqns{\hat U\kh{w_1\given a_1}=\mathbb{E}_{F_1}\fkh{y-{w}_{1}(y)}+\beta\cdot \hat\Phi\kh{{w}_1, a_1}^2=\kh{1-s_1}\mathbb{E}_{F_1}\fkh{{y}}+\beta\cdot \hat\Phi\kh{{w}_1, a_1}^2.}
The worst-case overall payoff guarantee minimizes the above expression over all $a_1$ that agent $1$ may choose under some technology.  Note that agent $1$ prefers action $a_1$ over the known action $a_0$  if and only if the incentive gap is nonnegative, i.e., $g\kh{{w}_1,a_1}\ge 0$, which is equivalent to
\eqns{\kh{\mathbb{E}_{F_1}\fkh{w_1\kh{y}}-c_1}-\kh{\mathbb{E}_{F_0}\fkh{w_1\kh{y}}-c_0}=\kh{s_1\mathbb{E}_{F_1}[y]-c_1}-\kh{ s_1\mathbb{E}_{F_0}[y]-c_0}\ge 0.}
Hence, the following program yields a lower bound on the principal's overall payoff guarantee
\eqn{\begin{split}\inf_{{F_1,c_1}}\quad&\kh{1-s_1}\mathbb{E}_{F_1}\fkh{{y}}+\beta\cdot \hat\Phi\kh{{w}_1, \kh{{F_1,c_1}}}^2\\
\text{ s.t. }\,\,\,\,\,&\kh{s_1\mathbb{E}_{F_1}[y]-c_1}-\kh{ s_1\mathbb{E}_{F_0}[y]-c_0}\ge 0,
\end{split}\label{eqn:prog}}
because the principal's  interim payoff guarantee can never be strictly lower than the infimum given by program \eqref{eqn:prog}.

Conversely, if $s_1\ge c_0/\mathbb{E}_{F_0}[y]$, then for any feasible $a_1=\kh{F_1,c_1}$ in program \eqref{eqn:prog}, 
agent 1 would take action $a_1$ in response to $w_1$ whenever the technology $ {A}$ is compatible with $\kh{w_1,a_1}$. The worst case over all such  technologies leaves the principal with exactly her interim payoff guarantee, $\hat U\kh{w_1\given a_1}=\kh{1-s_1}\mathbb{E}_{F_1}\fkh{{y}}+\beta\cdot \hat \Phi\kh{{w}_1, a_1}^2$. Thus, if a solution to program \eqref{eqn:prog} exists, then the principal's payoff guarantee cannot be strictly higher than its minimum value. 

The above analysis shows that,  for $s_1\ge c_0/\mathbb{E}_{F_0}[y]$, the worst-case overall payoff guarantee of any linear first-period contract $w_1\kh{y}=s_1 y$ is exactly characterized by program \eqref{eqn:prog}.

Suppose $s_1\ge c_0/\mathbb{E}_{F_0}[y]$. We now reformulate program \eqref{eqn:prog} as an equivalent maximization problem with continuous objective function and compact feasible region.  Slightly abusing notation, we use $\hat U\kh{s_1}$ instead of $\hat U\kh{w_1}$ to denote the infimum value of program \eqref{eqn:prog}.

Plug $w_1\kh{y}=s_1 y$ into equation \eqref{eqn:optsecond} and let $s_0\equiv\sqrt{ c_0/\mathbb{E}_{F_0}[y]}$. We may rewrite $\hat \Phi\kh{{w}_1, a_1}$ as
\eqns{\hat \Phi\kh{{w}_1, a_1}=\max \left\{\sqrt{\kh{1-s_1}\mathbb{E}_{F_{1}}\fkh{y}},\sqrt{\kh{1-s_1}\mathbb{E}_{F_{0}}\fkh{y}}-\sqrt{g\kh{w_1,a_1}},\kh{1-s_0}\sqrt{\mathbb{E}_{F_{0}}[y]},\sqrt{\mathbb{E}_{F_{1}}[y]}-\sqrt{c_{1}}\right\}.}
Similarly,
$$g\kh{w_1,a_1}=\kh{s_1\mathbb{E}_{F_1}[y]-c_1}-\kh{ s_1-s_0^2}\mathbb{E}_{F_0}[y]\ge 0.$$

Note that both the objective and the constraints of program \eqref{eqn:prog} depend on the choice variables $\kh{F_1,c_1}$ only through the value of $\kh{\mathbb{E}_{F_1}\fkh{y},c_1}$. Rewrite $\mathbb{E}_{F_1}\fkh{y}=x \mathbb{E}_{F_0}\fkh{y}$, $c_1=z\mathbb{E}_{F_0}\fkh{y}$, and let $g\kh{w_1,a_1}=h\mathbb{E}_{F_0}\fkh{y}$ with $x,z,h\ge 0$. Plugging into the original program \eqref{eqn:prog} and cancelling out $\mathbb{E}_{F_0}\fkh{y}$ from both sides of the constraints, we obtain an equivalent program
\eqn{\begin{split}\hat U\kh{s_1}=\inf_{{x,z,h}}\quad&\kh{\kh{1-s_1}x+\beta\cdot \hat{\phi}\kh{x,z,h;s_1}^2}\mathbb{E}_{F_0}\fkh{y}\\
\text{ s.t. }\,\,\,\,\,& h=s_1x-z-\kh{ s_1-s_0^2}\ge 0,\quad x,z\ge 0,
\end{split}\label{eqn:progreform}}
where
\eqn{\hat{\phi}\kh{x,z,h;s_1}\equiv\max \left\{\sqrt{\kh{1-s_1}x},\sqrt{{1-s_1}}-\sqrt{h},1-s_0,\sqrt{x}-\sqrt{z}\right\}.\label{eqn:A9}}

Note that $\kh{x,z,h}=\kh{1,s_0^2,0}$ is feasible in program \eqref{eqn:progreform} and leads to objective value $$\kh{\kh{1-s_1}+\beta\cdot\max \left\{\sqrt{{1-s_1}},1-s_0\right\}^2}\mathbb{E}_{F_0}\fkh{y}.$$ If $x\ge 1+\beta$, then 
\eqns{\kh{1-s_1}x+\beta\cdot\hat{\phi}\kh{x,z,h;s_1}^2&\ge \kh{1-s_1}\kh{1+\beta}+\beta\kh{1-s_0}^2\\
&= \kh{1-s_1}+\beta \kh{1-s_1}+\beta\kh{1-s_0}^2\\
&\ge\kh{1-s_1}+\beta\cdot\max \left\{\sqrt{{1-s_1}},1-s_0\right\}^2. }
Therefore, restricting $x\in\fkh{0,1+\beta}$ will not change the infimum of program \eqref{eqn:progreform}. Moreover, 
$$\max\hkh{z,h}\le z+h=s_1x-\kh{ s_1-s_0^2}\le s_1x\le x,$$
so restricting $\kh{x,z,h}\in\fkh{0,1+\beta}^3$ will not change the infimum of program \eqref{eqn:progreform}. 

Consider the following program
\eqn{\begin{split}\hat \Psi^*\kh{s_1}\equiv\sup_{{x,z,h}}\quad&\hat \Psi\kh{x,z,h;s_1}\equiv-\kh{\kh{1-s_1}x+\beta\cdot \hat{\phi}\kh{x,z,h;s_1}^2}\\
\text{ s.t. }\,\,\,\,\,&\kh{x,z,h}\in\hat \Gamma{\kh{s_1}},
\end{split}\label{eqn:progreform2}}
where $\hat{\phi}$ is defined by equation \eqref{eqn:A9}, and $\hat \Gamma$ is defined as follows:
\eqns{\hat \Gamma{\kh{s_1}}\equiv \hkh{\kh{x,z,h}\in\fkh{0,1+\beta}^3:h=s_1x-z-\kh{ s_1-s_0^2}}.}
By definition, $\hat \Psi:\fkh{0,1+\beta}^3\times \fkh{s_0^2,1}\to\mathbb{R}$ is a continuous function, and $\hat \Gamma:\fkh{s_0^2,1}\rightrightarrows\fkh{0,1+\beta}^3$ is a compact-valued and nonempty-valued correspondence. Moreover, the infimum of program \eqref{eqn:progreform}, $\hat U\kh{s_1}$, is given by $\kh{-\hat \Psi^*\kh{s_1}}\cdot\mathbb{E}_{F_0}\fkh{y}$.

Note that for each $s_1$, $\hat \Gamma\kh{s_1}$ defines a plane intersecting a cube, and that the plane shifts linearly in $s_1$. Thus,  $\hat \Gamma$ is both upper and lower hemicontinuous. It then follows from Berge's maximum theorem that $\hat \Psi^*$ is continuous, and $$\hat \Gamma^*\kh{s_1}\equiv\hkh{\kh{x,z,h}\in\hat \Gamma\kh{s_1}:\hat \Psi\kh{x,z,h;s_1}=\hat \Psi^*\kh{s_1}}$$
is upper hemicontinuous with nonempty and compact values. As a consequence, a solution to program \eqref{eqn:progreform2} exists for all $s_1$, and the supremum can be replaced by maximum.  

It follows that the infimum in program \eqref{eqn:progreform} and therefore the original program \eqref{eqn:prog} can both be replaced by minimum, and the resulting minimum value $\hat U\kh{s_1}=\kh{-\hat \Psi^*\kh{s_1}}\cdot\mathbb{E}_{F_0}\fkh{y}$ is continuous in $s_1$. Hence, $\hat U\kh{s_1}$ achieves a maximum over $\fkh{s_0^2,1}$. This maximum is also the optimal guarantee over all linear contracts.
\end{proof}

\begin{proof}[Proof of Theorem \ref{prop:1}]
According to Lemma \ref{lem:optlinearconst}, among all linear first-period contracts, there exists an optimal one, call it $w_1^*$. If $w_1$ is  any other (nonlinear) first-period contract that outperforms  $w_1^*$, then by Lemma \ref{lem:affineconst}, there is a linear contract that in turn does at least as well as $w_1$. But this contradicts the fact that $w_1^*$ is an optimal linear contract. Therefore, $w_1^*$ is optimal among all first-period contracts.
\end{proof}

\newpage

\renewcommand{\theequation}{\thesection.\arabic{equation}}
\setcounter{equation}{0}
\renewcommand{\theLem}{\thesection.\arabic{Lem}}
\setcounter{Lem}{0}
\renewcommand{\theDef}{\thesection.\arabic{Def}}
\setcounter{Def}{0}
\renewcommand{\theProp}{\thesection.\arabic{Prop}}
\setcounter{Prop}{0}

\section{Constant Technology: General Set of Known Actions}\label{sec:future}
In this appendix, we analyze the situation where the principal knows a \textit{set of actions} $A_0$ available to the agents in the case of constant technology. The first main result is Lemma \ref{lem:secondprime}, which  characterizes the principal's optimal second-period payoff guarantee $\hat{V}_2^*\kh{w_1,a_1}$ in closed form and identifies the contract that attains it in various cases, analogous to Lemma \ref{lem:second} in the main text. Furthermore, we identify a sufficient condition on the set of known actions, \textit{lower bound on marginal cost} (Definition \ref{def:lbmc}), which ensures that  linear contracts still outperform nonlinear ones. This leads to the second main result, Theorem \ref{prop:1prime}, which generalizes the optimality of linear contracts to richer environments.

In the first period, the principal believes that the true technology $ {A}$ could be any technology such that $A \supseteq A_0$. After the principal offers contract $w_1$ and observes the action $a_1$ chosen by agent $1$, we adapt the rule of updating, \textit{compatibility}, as follows: 
\begin{Defprime}{def:comp}[Compatible]\label{def:compprime}
Given $w_1$ and $a_1=\kh{F_1,c_1}$,  a technology $ {A}$ is \emph{compatible} with $\kh{w_1,a_1}$ if 
\begin{enumerate}
\item {$ {A}\supseteq {A}_0\cup\hkh{a_1}$}.
\item $\mathbb{E}_{F}\fkh{w_1\kh{y}}-c\le \mathbb{E}_{F_1}\fkh{w_1\kh{y}}-c_1$ for all $\kh{F,c}\in {A}$. 
\end{enumerate}
\end{Defprime}

\subsection{Second Period Analysis}\label{subsubsec:period2}
 We first consider the second period of the dynamic relationship, where the principal has offered some first-period contract $w_1$ and observed agent $1$'s chosen action $a_1=\kh{F_1,c_1}$. 
She learns that the true technology $A$ is compatible with $\kh{w_1,a_1}$: it contains $A_0$ and $a_1$, and does not contain any action strictly better than $a_1$ for agent $1$ under $w_1$. 

Again, if the principal offers the same contract $w_2=w_1$, agent $2$ will choose $a_1$ since the two agents have the same technology. This exactly repeats the first-period payoff $\mathbb{E}_{F_1}\fkh{y-w_1\kh{y}}$ in the second period. Moreover, if {some initially known action $\kh{F_0,c_0}\in A_0$} leads to a higher payoff for the principal, i.e., $\mathbb{E}_{F_0}\fkh{y-w_1\kh{y}}>\mathbb{E}_{F_1}\fkh{y-w_1\kh{y}}$, it might be tempting  for the principal to try to obtain the payoff $\mathbb{E}_{F_0}\fkh{y-w_1\kh{y}}$ instead. However, we have already seen that achieving this payoff would violate agent $2$'s incentive constraint, and agent $2$ needs to be compensated for not choosing $a_1$. The amount of compensation increases with the \textit{incentive gap}, which may now vary for different actions.
\begin{Defprime}{def:g}[Incentive gap]\label{def:gprime}
Given $w_1$ and $a_1=\kh{F_1,c_1}$,  the \emph{incentive gap} {with respect to an action $a$}, {$g\kh{a\given w_1,a_1}$}, denotes the difference in agent $1$'s payoff between choosing $a_1$ and $a$. Formally,\eqns{ {g\kh{a\given w_1,a_1}\equiv\kh{\mathbb{E}_{F_1}\fkh{w_1\kh{y}}-c_1}-\kh{\mathbb{E}_{F_{a}}\fkh{w_1\kh{y}}-c_{a}}.}}
\end{Defprime}
Analogous to Lemma \ref{lem:second}, part \ref{part:1prime} of Lemma \ref{lem:secondprime} shows that if ${\mathbb{E}_{F_{0}}\left[y-{w}_{1}(y)\right]}>{g\kh{a_0\given w_1,a_1}}$, the principal can offer a {modified version of $w_1$ with compensation}  in order to guarantee that her payoff in the second period is at least
$\kh{\sqrt{\mathbb{E}_{F_{0}}\fkh{y-w_1\kh{y}}}-\sqrt{{g\kh{a_0\given w_1,a_1}}}}^2$. Let\eqn{{\Theta\kh{w_1,a_1}\equiv\max_{a\in A_0\cup\hkh{a_1}}\hkh{\sqrt{\mathbb{E}_{F_{a}}\fkh{y-w_1\kh{y}}}-\sqrt{{g\kh{a\given w_1,a_1}}}}},\label{eqn:modifw1}}
where we treat $w_2=w_1$ as a special case of a modified version of $w_1$ (with no modification).\footnote{By definition, $g\kh{a_1\given w_1,a_1}=0$. Moreover, it follows from agent $1$'s rationality that $g\kh{a_0\given w_1,a_1}\ge 0$ for all $a_0\in A_0$. } The proof of Lemma \ref{lem:secondprime} further shows that {$\Theta\kh{w_1,a_1}^2$} is the principal's optimal guarantee using a modified version of $w_1$.

Note that the optimal static contract in \cite{Carroll15} is still available to the principal. By offering this contract following the procedure in \cite{Carroll15}, the principal can guarantee that her payoff in the second period is at least $\Phi\kh{a_1}^2$, where $\Phi$ is defined by equation \eqref{eqn:Phi}. Part \ref{part:2prime} of Lemma \ref{lem:secondprime} shows that when $\Phi\kh{a_1}>\Theta\kh{w_1,a_1}$, it is optimal for the principal to offer this optimal static contract in the second period, and doing so exactly attains payoff guarantee $\Phi\kh{a_1}^2$.

We are now ready to present the main result of this subsection, Lemma \ref{lem:secondprime}, which characterizes the principal's optimal second-period payoff guarantee $\hat V_2^*\kh{{w}_1, a_1}$, and establishes the optimality of the aforementioned contracts. It is optimal for the principal to offer either a modified version of $w_1$ with compensation, or a linear contract.

\begin{Lemprime}{lem:second}\label{lem:secondprime}
Suppose the principal offers first-period contract $w_1$, and agent 1 chooses $a_1$ in response. The principal's optimal second-period payoff guarantee is
\eqn{{\hat V_2^*\kh{{w}_1, a_1}=\kh{ \max \left\{\Theta\kh{w_1,a_1},\Phi\kh{a_1}\right\}}^2.\label{eqn:optsecondprime}}}
Specifically,
\begin{enumerate}
\item\label{part:1prime} If $\Theta\kh{w_1,a_1}\ge \Phi\kh{a_1}$  and $a^*\in A_0\cup\hkh{a_1} $ attains the maximum in equation \eqref{eqn:modifw1}, then the principal's optimal second-period payoff guarantee is achieved by a modified version of $w_1$:\eqn{w_2\kh{y}=w_1\kh{y}+m\cdot \kh{y-w_1\kh{y}}\quad\text{with}\quad m=\sqrt{\frac{{g\kh{a^*\given w_1,a_1}}}{\mathbb{E}_{F_{a^*}}\fkh{y-w_1\kh{y}}}}\in\fkh{0,1}.\label{eqn:mprime}}
\item\label{part:2prime}  If $\Theta\kh{w_1,a_1}< \Phi\kh{a_1}$  and $a^*\in A_0\cup\hkh{a_1} $ attains the maximum in equation \eqref{eqn:Phi}, then the principal's optimal second-period payoff guarantee is achieved by a linear contract:
\eqn{w_2\kh{y}=s_2 y\quad\text{with}\quad s_2={\sqrt{\frac{c_{a^*}}{\mathbb{E}_{F_{a^*}}\fkh{y}}}}.\label{eqn:s2prime} }
\end{enumerate}
\end{Lemprime}

\subsection{First Period Analysis}\label{subsubsec:period1}
So far, we have focused on principal's problem in the second period and fully characterized her optimal second-period payoff guarantee. Now we analyze the principal's first-period problem of choosing a first-period contract $w_1$ to maximize her overall payoff guarantee $\hat U\kh{w_1}$.

The following condition, \textit{lower bound on marginal cost}, is sufficient to ensure that the principal's optimal overall payoff guarantee is achieved by a linear first-period contract.
\begin{Def}[Lower bound on marginal cost]\label{def:lbmc}
The known technology $ {A}_0$ satisfies \emph{lower bound on marginal cost} if, for any pair of actions $\kh{F,c},\kh{F',c'}\in {A}_0$ with $0<\mathbb{E}_{F}\fkh{y}<\mathbb{E}_{F'}\fkh{y}$, it holds that 
$${c'-c}\ge {\mathbb{E}_{F'}\fkh{y}-\mathbb{E}_{F}\fkh{y}}.$$
\end{Def}
This condition provides linkage between different actions in the known technology $A_0$. Moreover, it contains the economic meaning that, between known actions, the change in costs cannot be too small compared with the change in expected output. Thus, this condition sets a lower bound on the marginal cost of the known technology in discrete form.

The main result of the first period analysis is Theorem \ref{prop:1prime}.
\begin{Thmprime}{prop:1}\label{prop:1prime}
Suppose the known technology $ {A}_0$ satisfies {lower bound on marginal cost}. In the case of constant technology, there exists a linear first-period contract $w_1$ that maximizes the principal's overall payoff guarantee $\hat U\kh{w_1}$.
\end{Thmprime}

Analogous to Theorems \ref{prop:1grow} and \ref{prop:1}, the proof of Theorem \ref{prop:1prime} takes two steps: (1) Lemma \ref{lem:affineprime} improves any nonlinear first-period contract into a linear one; (2) Lemma \ref{lem:optlinearprime}  shows that the maximum of the principal's first-period problem exists within the class of linear first-period contracts. We remark that the additional condition, lower bound on marginal cost, comes into play only in the first step of the proof (i.e., Lemma \ref{lem:affineprime}).

We start from any arbitrary first-period contract $w_1$, and construct another linear contract $\hat{w}_1$ that provides the principal with a weakly higher overall payoff guarantee. Let $a_0=\kh{F_0,c_0}$ be the action agent 1 will choose if the true technology $A=A_0$. The procedure of constructing the linear $\hat{w}_1$ is exactly the same as in the proof of Lemma \ref{lem:affineconst}, given by equation \eqref{eqn:affine}. When the known technology satisfies lower bound on marginal cost, Lemma \ref{lem:affineprime} below shows that the principal's overall payoff guarantee is at least as high under $\hat{w}_1$ as it is under $w_{1}$.
\begin{Lemprime}{lem:affineconst}\label{lem:affineprime}
Suppose the known technology $ {A}_0$ satisfies {lower bound on marginal cost}. Let $w_1$ be any first-period contract, and let $\kh{F_0,c_0}\in A_0$ be agent $1$'s best response when the true technology $A$ is just $A_0$. The linear contract $\hat{w}_1$ defined by equation \eqref{eqn:affine} satisfies $U\kh{\hat{w}_1}\ge U\kh{w_{1}}$.
\end{Lemprime}

Similar to the proof of Lemma \ref{lem:affineconst}, for any action that may be  taken by agent $1$ under $\hat{w}_1$ and some technology ${A}\supseteq A_0$, the proof of Lemma \ref{lem:affineprime} explicitly constructs an alternative action $a_1'$ that may be  taken by agent $1$ under $w_1$ and some other technology. The difference between this general case and the singleton case is that the principal's optimal second-period payoff guarantee $\hat V_2^*$ is given by a more general expression \eqref{eqn:optsecondprime}, and in particular maximum in $\Theta$ or $\Phi$ may be attained by $a^*\in A_0\backslash\hkh{a_0}$.
The condition \emph{lower bound on marginal cost} disciplines the relationship between $a_0$ and $a^*$,
which makes the proof method of Lemma \ref{lem:affineconst} generalizable. In subsequent research, we hope to examine whether this (or any such) restriction is necessary, in the sense that there exists a counterexample when it is violated.

By establishing Lemma \ref{lem:affineprime}, we have shown that any nonlinear first-period contract can be improved by a linear one. To finalize the  proof of Theorem \ref{prop:1prime}, it suffices to show that, within the class of linear contracts, the maximum of $U\kh{w_1}$ exists.
\begin{Lemprime}{lem:optlinearconst}\label{lem:optlinearprime}
Within the class of linear first-period contracts, there exists an optimal one for the principal.
\end{Lemprime}

The proof of Lemma \ref{lem:optlinearprime} requires to characterize the overall payoff guarantee of an arbitrary linear first-period contract. Assume the principal offers a linear first-period contract $w_{1}(y)=s_{1} y$ with $s_1\in\fkh{0,1}$, and agent 1 chooses $a_1=\kh{F_1,c_1}$ in response. As is shown in Lemma \ref{lem:secondprime}, the principal's optimal second-period payoff guarantee $\hat V_{2}^{*}\left(w_{1}, a_{1}\right)=\left(\max \left\{\Theta\left(w_{1}, a_{1}\right), \Phi\left(a_{1}\right)\right\}\right)^{2}$.
Thus, her interim payoff guarantee is
\eqns{\hat {U}\left(w_{1} \given a_{1}\right)=\mathbb{E}_{F_{1}}\left[y-w_{1}(y)\right]+\beta \cdot\hat V_{2}^{*}\left(w_{1}, a_{1}\right)=\left(1-s_{1}\right) \mathbb{E}_{F_{1}}[y]+\beta \cdot\hat V_{2}^{*}\left(w_{1}, a_{1}\right).}
The worst-case overall payoff guarantee minimizes the above expression over all $a_1$ that agent 1 may choose under some technology. Note that agent 1 prefers action $a_1$ over all known actions $a\in A_0$ if and only if 
\eqns{\left(\mathbb{E}_{F_{1}}\left[w_{1}(y)\right]-c_{1}\right)-\left(\mathbb{E}_{F_{a}}\left[w_{1}(y)\right]-c_{a}\right)=\left(s_{1} \mathbb{E}_{F_{1}}[y]-c_{1}\right)-\left(s_{1} \mathbb{E}_{F_{a}}[y]-c_{a}\right) \geq 0,\quad\forall a\in A_0.} 
Moreover, agent 1 obtains at least his reservation payoff of zero, which can also be viewed as his payoff from the null action $\kh{\delta_0,0}$. Hence, the following program yields a lower bound on the principal's overall payoff guarantee
\eqn{\begin{split}\inf_{{F_1,c_1}}\quad&\kh{1-s_1}\mathbb{E}_{F_1}\fkh{{y}}+\beta\cdot\hat V_{2}^{*}\left(w_{1}, \kh{F_1,c_1}\right)\\
\text{ s.t. }\,\,\,\,\,&\kh{s_1\mathbb{E}_{F_1}[y]-c_1}-\kh{ s_1\mathbb{E}_{F_a}[y]-c_a}\ge 0,\quad\forall a\in A_0\cup\hkh{\kh{\delta_0,0}},
\end{split}\label{eqn:progprime}}
because the principal's  interim payoff guarantee can never be strictly lower than the infimum given by program \eqref{eqn:progprime}.

Conversely, for any feasible $a_1=\kh{F_1,c_1}$ in program \eqref{eqn:progprime}, 
agent 1 would take action $a_1$ in response to $w_1$ when his technology $ {A}_1=A_0\cup\hkh{a_1}$. The worst case over all such technologies leaves the principal with exactly her interim payoff guarantee, $\hat {U}\left(w_{1} \given a_{1}\right)=\left(1-s_{1}\right) \mathbb{E}_{F_{1}}[y]+\beta \cdot\hat V_{2}^{*}\left(w_{1}, a_{1}\right)$. Thus, if a solution to program \eqref{eqn:progprime} exists (i.e., if infimum may be replaced by minimum), then the principal's payoff guarantee cannot be strictly higher than its minimum value. 

Therefore, the worst-case overall payoff guarantee of any linear first-period contract  $w_{1}(y)=s_{1} y$ is
exactly characterized by program  \eqref{eqn:progprime}. In the proof of Lemma \ref{lem:optlinearprime} in Appendix \ref{app:prooffuture}, we formally show the existence of minimum in this program, and its continuity in the first-period share $s_1$ using Berge's maximum theorem. Since the overall payoff guarantee is continuous in the first-period share $s_1$, it achieves a maximum. This maximum is also the optimal guarantee over all linear contracts.

Combining Lemmas \ref{lem:affineprime} and  \ref{lem:optlinearprime}, we prove the main result of this section, Theorem   \ref{prop:1prime}, which establishes the optimality of a linear first-period contract.

\subsection{Proofs for Appendix \ref{sec:future}}\label{app:prooffuture}

To prove Lemma \ref{lem:secondprime}, we start by establishing two lemmas, Lemmas \ref{lem:A1prime} and \ref{lem:A2prime}, to show that the principal's payoff guarantees in the second period from offering the two contracts, (i) $w_2\kh{y}=w_1\kh{y}+m\cdot \kh{y-w_1\kh{y}}$ with $m$ defined by equation \eqref{eqn:mprime}, and (ii) $w_2\kh{y}=s_2 y$ with $s_2$ defined by equation \eqref{eqn:s2prime}, are exactly as claimed in the statement of Lemma \ref{lem:secondprime}.
\begin{Lem}\label{lem:A1prime}
If $\Theta\kh{w_1,a_1}\ge \Phi\kh{a_1}$ and $a^*=\kh{F^*,c^*}\in A_0\cup\hkh{a_1} $ attains the maximum in equation \eqref{eqn:modifw1}, and the principal offers $w_2\kh{y}=w_1\kh{y}+m\cdot \kh{y-w_1\kh{y}}$ with $m$ defined by equation \eqref{eqn:mprime}, then her payoff guarantee in the second period is exactly \eqns{{\Theta\kh{w_1,a_1}^2=\kh{\sqrt{\mathbb{E}_{F^*}\fkh{y-w_1\kh{y}}}-\sqrt{{g\kh{a^*\given w_1,a_1}}}}}^2.}
\end{Lem}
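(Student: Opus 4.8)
The plan is to mirror the proof of Lemma~\ref{lem:A1} (the baseline $A_0=\hkh{a_0}$ case) line for line, replacing the known action $a_0$ throughout by the maximizer $a^*=\kh{F^*,c^*}$ of \eqref{eqn:modifw1}, and then to isolate the one genuinely new ingredient that arises because $A_0$ may now contain many actions. Write $G^*\equiv g\kh{a^*\given w_1,a_1}$ and $P^*\equiv\mathbb{E}_{F^*}\fkh{y-w_1\kh{y}}$, so that $m=\sqrt{G^*/P^*}$ and $\Phi_1\kh{w_1,a_1}=\sqrt{P^*}-\sqrt{G^*}$. Since some $\kh{F,c}\in A_0$ has $\mathbb{E}_F\fkh{y}-c>0$, we have $\Phi_2\kh{a_1}>0$, so the hypothesis $\Phi_1\ge\Phi_2$ forces $\sqrt{P^*}>\sqrt{G^*}$; hence $P^*>0$, $m\in\fkh{0,1}$ is well defined, and the degenerate case $G^*=0$ (where $m=0$, $w_2=w_1$) reduces to the already-settled ``$w_2=w_1$'' case.

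First I would establish the lower bound exactly as before. Let $\kh{F_2,c_2}$ be agent~$2$'s choice. Compatibility gives agent~$1$'s inequality $\mathbb{E}_{F_1}\fkh{w_1\kh{y}}-c_1\ge\mathbb{E}_{F_2}\fkh{w_1\kh{y}}-c_2$; and since $a^*\in A_0\cup\hkh{a_1}\subseteq A$, agent~$2$'s rationality gives $\mathbb{E}_{F_2}\fkh{w_2\kh{y}}-c_2\ge\mathbb{E}_{F^*}\fkh{w_2\kh{y}}-c^*$. Using $w_2-w_1=m\kh{y-w_1}$, adding the two inequalities, and dividing by $m$ yields $\mathbb{E}_{F_2}\fkh{y-w_1\kh{y}}\ge P^*-G^*/m$, so the principal's payoff is $\kh{1-m}\mathbb{E}_{F_2}\fkh{y-w_1\kh{y}}\ge\kh{1-m}\kh{P^*-G^*/m}=\kh{\sqrt{P^*}-\sqrt{G^*}}^2=\Phi_1^2$.

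Next I would prove tightness via the compatible technology $A=A_0\cup\hkh{a_1,\kh{F',c'}}$ with $F'=\kh{1-m}F^*+m\,\delta_0$ and $c'=c^*-\kh{m\,\mathbb{E}_{F^*}\fkh{w_1\kh{y}}+G^*}$. Step~1 ($c'\ge0$, i.e.\ $\kh{F',c'}$ is a valid action) is the verbatim computation of Lemma~\ref{lem:A1}, now driven by $\sqrt{P^*}-\sqrt{G^*}=\Phi_1\ge\Phi_2\ge\sqrt{\mathbb{E}_{F^*}\fkh{y}}-\sqrt{c^*}$, where the last step holds because $a^*$ is one candidate in \eqref{eqn:Phi2}. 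Step~2 (compatibility) is unchanged: a direct calculation shows agent~$1$'s payoff from $\kh{F',c'}$ equals $\mathbb{E}_{F_1}\fkh{w_1\kh{y}}-c_1$, and the maintained fact $g\kh{a\given w_1,a_1}\ge0$ for all $a\in A_0$ (agent~$1$'s rationality) ensures no known action beats $a_1$ under $w_1$.

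The main new work is Step~3: agent~$2$ must select $\kh{F',c'}$ under $w_2$, paying the principal $\kh{1-m}^2P^*=\Phi_1^2$. As in the baseline, $\kh{F',c'}$ leaves agent~$2$ indifferent to $a^*$; the obstacle is that, whereas the baseline only had to rule out $a_1$, I must now check that agent~$2$ weakly prefers $a^*$ to \emph{every} $a\in A_0\cup\hkh{a_1}$. Writing $G_a\equiv g\kh{a\given w_1,a_1}\ge0$ and $P_a\equiv\mathbb{E}_{F_a}\fkh{y-w_1\kh{y}}$, this is equivalent to $G^*-G_a\le m\kh{P^*-P_a}$ for all $a$. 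For actions with $P_a<0$ this is immediate, since then $m\kh{P^*-P_a}\ge\sqrt{G^*P^*}\ge G^*\ge G^*-G_a$. For $P_a\ge0$ I would use that $a^*$ attains the maximum in \eqref{eqn:modifw1}, i.e.\ $\sqrt{P_a}-\sqrt{G_a}\le\sqrt{P^*}-\sqrt{G^*}$, hence $P_a\le\kh{\sqrt{P^*}-\sqrt{G^*}+\sqrt{G_a}}^2$; substituting this bound for $P_a$ and setting $A=\sqrt{P^*}$, $B=\sqrt{G^*}$, $C=\sqrt{G_a}$, the target inequality collapses after factoring to $-\kh{B-C}^2\kh{A-B}/A\le0$, which holds because $A-B=\Phi_1\ge0$. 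This algebraic reduction is the crux of the generalization. Finally, I would apply the same $\epsilon$-perturbation of $\kh{F',c'}$ as in Lemma~\ref{lem:A1} to break all ties strictly and let $\epsilon\downarrow0$, so the worst-case payoff equals $\Phi_1^2$, completing the proof.
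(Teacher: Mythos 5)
Your proposal is correct and follows essentially the same route as the paper's proof: the identical lower-bound argument summing agent 1's and agent 2's rationality constraints, the identical worst-case technology $A_0\cup\hkh{a_1,\kh{F',c'}}$ with $F'=\kh{1-m}F^*+m\,\delta_0$, and the same three verification steps, with the only genuinely new work being the check that agent 2 prefers $\kh{F',c'}$ to every $a\in A_0\cup\hkh{a_1}$ via the optimality of $a^*$ in \eqref{eqn:modifw1}. Your factored form $-\kh{B-C}^2\kh{A-B}/A\le 0$ of that final inequality is in fact the correct statement of the equivalence the paper displays (the paper's line ``$\Leftrightarrow\kh{1-m}\kh{\sqrt{g_0}-\sqrt{g^*}}\ge 0$'' drops a square), so your algebra is, if anything, cleaner.
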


\begin{proof}[Proof of Lemma~\ref{lem:A1prime}]
Let $g^*\equiv g\kh{a^*\given w_1,a_1}=\kh{\mathbb{E}_{F_1}\fkh{w_1\kh{y}}-c_1}-\kh{\mathbb{E}_{F^*}\fkh{w_1\kh{y}}-c^*}\ge 0$. We have
$\Theta\kh{w_1,a_1}=\sqrt{\mathbb{E}_{F^*}\fkh{y-w_1\kh{y}}}-\sqrt{g^*}$. From $\Theta\kh{w_1,a_1}\ge \Phi\kh{a_1}>0$,  it holds that $$m=\sqrt{\frac{g^*}{\mathbb{E}_{F^*}\fkh{y-w_1\kh{y}}}}\in\fkh{0,1}.$$
Suppose the principal offers $w_2\kh{y}=w_1\kh{y}+m\cdot \kh{y-w_1\kh{y}}$ with $m$ defined by equation \eqref{eqn:mprime}. We first show that this guarantees her at least $\kh{\sqrt{\mathbb{E}_{F^*}\fkh{y-w_1\kh{y}}}-\sqrt{g^*}}^2.$ 

Let $\kh{F_2,c_2}$ be the action chosen by agent 2. By agent 1's rationality, we have $$\mathbb{E}_{F_1}\fkh{w_1\kh{y}}-c_1\ge \mathbb{E}_{F_2}\fkh{w_1\kh{y}}-c_2.$$
By agent 2's rationality, we have $$\mathbb{E}_{F_2}\fkh{w_2\kh{y}}-c_2\ge \mathbb{E}_{F^*}\fkh{w_2\kh{y}}-c^*.$$
Summing up the two inequalities, we obtain
\eqns{m\cdot \mathbb{E}_{F_2}\fkh{y-w_1\kh{y}}= \mathbb{E}_{F_2}\fkh{w_2\kh{y}-w_1\kh{y}}&\ge\kh{\mathbb{E}_{F^*}\fkh{w_2\kh{y}}-c^*}-\kh{\mathbb{E}_{F_1}\fkh{w_1\kh{y}}-c_1}\\
&=m\cdot \mathbb{E}_{F^*}\fkh{y-w_1\kh{y}}-g^*,}
implying that
$$ \mathbb{E}_{F_2}\fkh{y-w_1\kh{y}}\ge\mathbb{E}_{F^*}\fkh{y-w_1\kh{y}}-g^*/m. $$
Therefore, the principal's payoff in the second period is
\eqns{ \mathbb{E}_{F_2}\fkh{y-w_2\kh{y}}&= \mathbb{E}_{F_2}\fkh{y-w_1\kh{y}}-m \cdot \mathbb{E}_{F_2}\fkh{y-w_1\kh{y}}=\kh{1-m}\mathbb{E}_{F_2}\fkh{y-w_1\kh{y}}\\
&\ge\kh{1-m}\kh{\mathbb{E}_{F^*}\fkh{y-w_1\kh{y}}-g^*/m}=\kh{\sqrt{\mathbb{E}_{F^*}\fkh{y-w_1\kh{y}}}-\sqrt{g^*}}^2,}
as desired. 

Next we show that her payoff guarantee from $w_2\kh{y}=w_1\kh{y}+m\cdot \kh{y-w_1\kh{y}}$ cannot be strictly higher than $\kh{\sqrt{\mathbb{E}_{F^*}\fkh{y-w_1\kh{y}}}-\sqrt{g^*}}^2$, since this is exactly her payoff when the technology is $ {A}=A_0\cup \hkh{a_1,\kh{F',c'}}$, with $F'=\kh{1-m} F^*+m\cdot\delta_{0}$ and $c'=c^*-\kh{m\cdot\mathbb{E}_{F^*}\fkh{w_1\kh{y}}+g^*}$. 

The proof takes three steps.
\paragraph{Step 1} $c^*\ge{m\cdot\mathbb{E}_{F^*}\fkh{w_1\kh{y}}+g^*}$, so $c'$ is indeed nonnegative. 

From $\Theta\kh{w_1,a_1}\ge \Phi\kh{a_1}$, we obtain
$$\sqrt{\mathbb{E}_{F^*}\fkh{y-w_1\kh{y}}}-\sqrt{g^*}=\Theta\kh{w_1,a_1}\ge \Phi\kh{a_1}\ge\sqrt{\mathbb{E}_{F^*}[y]}-\sqrt{c^*},$$
which implies that
$$c^*\ge\kh{\sqrt{\mathbb{E}_{F^*}[y]}-\sqrt{\mathbb{E}_{F^*}\fkh{y-w_1\kh{y}}}+\sqrt{g^*}}^2 $$
It suffices to show
\eqn{&\kh{\sqrt{\mathbb{E}_{F^*}[y]}-\sqrt{\mathbb{E}_{F^*}\fkh{y-w_1\kh{y}}}+\sqrt{g^*}}^2\ge{m\cdot\mathbb{E}_{F^*}\fkh{w_1\kh{y}}+g^*}\notag\\
\Leftrightarrow\quad&\kh{\sqrt{\mathbb{E}_{F^*}[y]}-\sqrt{\mathbb{E}_{F^*}\fkh{y-w_1\kh{y}}}}^2\ge m\cdot\mathbb{E}_{F^*}\fkh{w_1\kh{y}}-2\sqrt{g^*}\cdot\kh{\sqrt{\mathbb{E}_{F^*}[y]}-\sqrt{\mathbb{E}_{F^*}\fkh{y-w_1\kh{y}}}}\notag\\
\Leftrightarrow\quad&\kh{\sqrt{\mathbb{E}_{F^*}[y]}-\sqrt{\mathbb{E}_{F^*}\fkh{y-w_1\kh{y}}}}^2\ge m\cdot \kh{\mathbb{E}_{F^*}\fkh{w_1\kh{y}}-2\sqrt{\mathbb{E}_{F^*}\fkh{y-w_1\kh{y}}}\cdot\kh{\sqrt{\mathbb{E}_{F^*}[y]}-\sqrt{\mathbb{E}_{F^*}\fkh{y-w_1\kh{y}}}}}.\label{eqn:ineq2prime}}
Note that
\eqns{&\mathbb{E}_{F^*}\fkh{w_1\kh{y}}-2\sqrt{\mathbb{E}_{F^*}\fkh{y-w_1\kh{y}}}\cdot\kh{\sqrt{\mathbb{E}_{F^*}[y]}-\sqrt{\mathbb{E}_{F^*}\fkh{y-w_1\kh{y}}}}\\
=\,&\mathbb{E}_{F^*}\fkh{w_1\kh{y}}-2\sqrt{\mathbb{E}_{F^*}\fkh{y-w_1\kh{y}}}\cdot\frac{\mathbb{E}_{F^*}\fkh{w_1\kh{y}}}{\sqrt{\mathbb{E}_{F^*}[y]}+\sqrt{\mathbb{E}_{F^*}\fkh{y-w_1\kh{y}}}}\\
=\,&\frac{\mathbb{E}_{F^*}\fkh{w_1\kh{y}}}{\sqrt{\mathbb{E}_{F^*}[y]}+\sqrt{\mathbb{E}_{F^*}\fkh{y-w_1\kh{y}}}}\cdot\kh{\sqrt{\mathbb{E}_{F^*}[y]}+\sqrt{\mathbb{E}_{F^*}\fkh{y-w_1\kh{y}}}-2\sqrt{\mathbb{E}_{F^*}\fkh{y-w_1\kh{y}}}}\\
=\,&\kh{\sqrt{\mathbb{E}_{F^*}[y]}-\sqrt{\mathbb{E}_{F^*}\fkh{y-w_1\kh{y}}}}\cdot\kh{\sqrt{\mathbb{E}_{F^*}[y]}-\sqrt{\mathbb{E}_{F^*}\fkh{y-w_1\kh{y}}}}=\kh{\sqrt{\mathbb{E}_{F^*}[y]}-\sqrt{\mathbb{E}_{F^*}\fkh{y-w_1\kh{y}}}}^2.}
Therefore, inequality \eqref{eqn:ineq2prime} is equivalent to 
\eqns{\kh{\sqrt{\mathbb{E}_{F^*}[y]}-\sqrt{\mathbb{E}_{F^*}\fkh{y-w_1\kh{y}}}}^2\ge m\cdot\kh{\sqrt{\mathbb{E}_{F^*}[y]}-\sqrt{\mathbb{E}_{F^*}\fkh{y-w_1\kh{y}}}}^2,}
which is implied by the assumption that $\sqrt{\mathbb{E}_{F^*}\fkh{y-w_1\kh{y}}}\ge \sqrt{g^*}$ (or equivalently, $m\le 1$).

\paragraph{Step 2}   $ {A}=A_0\cup \hkh{a_1,\kh{F',c'}}$ is compatible with $\kh{w_1,a_1}$. That is, agent 1 chooses $a_1$ in response to $w_1$.

Agent 1's payoff from $\kh{F',c'}$ is
\eqns{\mathbb{E}_{F^{\prime}}\left[w_{1}(y)\right]-c'&=\kh{1-m}\mathbb{E}_{F^*}\left[w_{1}(y)\right]-c^*+\kh{m\cdot\mathbb{E}_{F^*}\fkh{w_1\kh{y}}+g^*}\\
&=\kh{\mathbb{E}_{F^*}\fkh{w_1\kh{y}}-c^*}+g^*=\mathbb{E}_{F_1}\fkh{w_1\kh{y}}-c_1,}
so he would choose $a_1=\kh{F_1,c_1}$ in response to $w_1$. 

Note that agent $1$ is actually indifferent between $\kh{F_1,c_1}$ and $\kh{F',c'}$, and we will show below that agent $2$ is  indifferent between $\kh{F^*,c^*}$ and $\kh{F',c'}$. Technically to ensure that agent $1$ chooses $\kh{F_1,c_1}$ and agent $2$ chooses $\kh{F',c'}$ we can set $F'=\kh{1-m+\epsilon}F^*+\kh{m-\epsilon}\delta_0$ and $c'=c^*-\kh{m\cdot\mathbb{E}_{F^*}\fkh{w_1\kh{y}}+g^*}+\epsilon\cdot \mathbb{E}_{F^*}\fkh{w_1\kh{y}+\kh{m/2}\cdot\kh{y-w_1\kh{y}}}$, and then let $\epsilon\downarrow 0$. Many of the following cases of potential indifference shall be treated similarly, and we omit them for brevity.

\paragraph{Step 3} 
If  $ {A}=A_0\cup \hkh{a_1,\kh{F',c'}}$, then agent 2 chooses $\left(F^{\prime}, c'\right)$ in response to $w_2$, leading to a payoff of $\kh{\sqrt{\mathbb{E}_{F^*}\fkh{y-w_1\kh{y}}}-\sqrt{g^*}}^2$ for the principal.  

Agent 2's payoff from $\kh{F',c'}$ is
\eqns{\mathbb{E}_{F^{\prime}}\left[w_{2}(y)\right]-c'&=\kh{1-m}\mathbb{E}_{F^*}\left[w_1\kh{y}+m\cdot \kh{y-w_1\kh{y}}\right]-c^*+\kh{m\cdot\mathbb{E}_{F^*}\fkh{w_1\kh{y}}+g^*}\\
&=\mathbb{E}_{F^*}\fkh{w_1\kh{y}}+m\cdot \mathbb{E}_{F^*}\fkh{y-w_1\kh{y}}-m^2\cdot \mathbb{E}_{F^*}\fkh{y-w_1\kh{y}}-c^*+g^*\\
&=\mathbb{E}_{F^*}\fkh{w_2\kh{y}}-g^*-c^*+g^*=\mathbb{E}_{F^*}\fkh{w_2\kh{y}}-c^*.}
For any action $a_0=\kh{F_0,c_0}\in A_0\cup \hkh{a_1}$, let $g_0\equiv g\kh{a_0\given w_1,a_1}=\kh{\mathbb{E}_{F_1}\fkh{w_1\kh{y}}-c_1}-\kh{\mathbb{E}_{F_0}\fkh{w_1\kh{y}}-c_0}\ge 0$. Agent 2's payoff from $a_0$  is
\eqns{\mathbb{E}_{F_0}\left[w_{2}(y)\right]-c_0&=\mathbb{E}_{F_0}\left[w_{1}(y)+m\cdot\kh{y-w_1\kh{y}}\right]-c_0=m\cdot\mathbb{E}_{F_0}\fkh{y-w_1\kh{y}}+\kh{\mathbb{E}_{F_0}\left[w_{1}(y)\right]-c_0}.}
Note that
\eqns{&\sqrt{\mathbb{E}_{F^*}\fkh{y-w_1\kh{y}}}-\sqrt{g^*}=\Theta\kh{w_1,a_1}\ge\sqrt{ \mathbb{E}_{F_0}\fkh{y-w_1\kh{y}}}-\sqrt{g_0}\\
\Rightarrow\quad&\mathbb{E}_{F_0}\fkh{y-w_1\kh{y}}\le\kh{\sqrt{\mathbb{E}_{F^*}\fkh{y-w_1\kh{y}}}-\sqrt{g^*}+\sqrt{g_0}}^2.}
Moreover,
\eqns{\mathbb{E}_{F_0}\left[w_{1}(y)\right]-c_0=\kh{\mathbb{E}_{F_1}\fkh{w_1\kh{y}}-c_1}-g_0=\kh{\mathbb{E}_{F^*}\fkh{w_1\kh{y}}-c^*}+g^*-g_0.}
Thus, agent 2's payoff from $a_0$, 
\eqn{\mathbb{E}_{F_0}\left[w_{2}(y)\right]-c_0&=m\cdot\mathbb{E}_{F_0}\fkh{y-w_1\kh{y}}+\kh{\mathbb{E}_{F_0}\left[w_{1}(y)\right]-c_0}\notag\\
&\le m\cdot\kh{\sqrt{\mathbb{E}_{F^*}\fkh{y-w_1\kh{y}}}-\sqrt{g^*}+\sqrt{g_0}}^2+\kh{\mathbb{E}_{F^*}\fkh{w_1\kh{y}}-c^*}+g^*-g_0\notag\\
&\le m\cdot\mathbb{E}_{F^*}\fkh{y-w_1\kh{y}}+\kh{\mathbb{E}_{F^*}\left[w_{1}(y)\right]-c^*}\label{eqn:A2prime}\\
&=\mathbb{E}_{F^*}\fkh{w_2\kh{y}}-c^*=\mathbb{E}_{F'}\fkh{w_2\kh{y}}-c'\notag,}
so he would choose $\kh{F',c'}$ in response to $w_2$. Recall $m=\sqrt{g^*/\mathbb{E}_{F^*}\fkh{y-w_1\kh{y}}}$, so the last inequality \eqref{eqn:A2prime} is equivalent to
\eqns{&m\cdot\kh{\sqrt{\mathbb{E}_{F^*}\fkh{y-w_1\kh{y}}}-\sqrt{g^*}+\sqrt{g_0}}^2+g^*-g_0\le m\cdot\mathbb{E}_{F^*}\fkh{y-w_1\kh{y}}\\
\Leftrightarrow\quad &\kh{1-\sqrt{\frac{g^*}{\mathbb{E}_{F^*}\fkh{y-w_1\kh{y}}}}}\kh{\sqrt{g_0}-\sqrt{g^*}}\ge 0,}
which always holds.

This leaves the principal with a payoff of
\eqns{ \mathbb{E}_{F'}\fkh{y-w_2\kh{y}}&= \mathbb{E}_{F'}\fkh{y-w_1\kh{y}}-m \cdot \mathbb{E}_{F'}\fkh{y-w_1\kh{y}}=\kh{1-m}\mathbb{E}_{F'}\fkh{y-w_1\kh{y}}\\
&=\kh{1-m}^2{\mathbb{E}_{F^*}\fkh{y-w_1\kh{y}}}=\kh{\sqrt{\mathbb{E}_{F^*}\fkh{y-w_1\kh{y}}}-\sqrt{g^*}}^2,}
as desired.
\\ \\
This completes the proof.
\end{proof}

\begin{Lem}\label{lem:A2prime}
 If $\Theta\kh{w_1,a_1}< \Phi\kh{a_1}$  and $\kh{F^*,c^*}\in A_0\cup\hkh{a_1} $ attains the maximum in equation \eqref{eqn:Phi}, and the principal offers the linear contract $w_2\kh{y}=s_2 y$ with $s_2$ defined by equation \eqref{eqn:s2prime}, then her payoff guarantee in the second period is exactly $$\Phi\kh{a_1}^2=\left(\sqrt{\mathbb{E}_{F^*}[y]}-\sqrt{c^*}\right)^{2}.$$
\end{Lem}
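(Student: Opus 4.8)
The plan is to follow the same two-part template as the proofs of Lemmas~\ref{lem:A2} and~\ref{lem:A3} in the baseline model, of which this statement is the common generalization: taking $a^*=a_0$ recovers Lemma~\ref{lem:A2} and taking $a^*=a_1$ recovers Lemma~\ref{lem:A3}, while the genuinely new content is that $a^*$ may be an interior known action of $A_0$. First I would establish the \emph{lower bound}: offering $w_2\kh{y}=s_2 y$ with $s_2=\sqrt{c^*/\mathbb{E}_{F^*}\fkh{y}}$ guarantees the principal at least $\kh{\sqrt{\mathbb{E}_{F^*}\fkh{y}}-\sqrt{c^*}}^2=\Phi_2\kh{a_1}^2$. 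This is exactly Carroll's static argument and uses nothing beyond $a^*\in A$: since any compatible $A$ contains $A_0\cup\hkh{a_1}\ni a^*$, agent~$2$'s chosen action $\kh{F_2,c_2}$ satisfies $\mathbb{E}_{F_2}\fkh{w_2\kh{y}}-c_2\ge \mathbb{E}_{F^*}\fkh{w_2\kh{y}}-c^*$; linearity of $w_2$ turns this into $\mathbb{E}_{F_2}\fkh{y}\ge \mathbb{E}_{F^*}\fkh{y}-c^*/s_2$, and substituting into the principal's payoff $\kh{1-s_2}\mathbb{E}_{F_2}\fkh{y}$ yields the claimed bound after the identity $s_2^2\mathbb{E}_{F^*}\fkh{y}=c^*$.

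The second and harder part is \emph{tightness}: I would exhibit a compatible technology that drives the principal's payoff down to $\Phi_2\kh{a_1}^2$. Mirroring the baseline construction, take $A=A_0\cup\hkh{a_1,\kh{F',0}}$ with $F'=\lambda F^*+\kh{1-\lambda}\delta_0$ and $\lambda=1-s_2$. Since some known action has strictly positive surplus, $\Phi_2\kh{a_1}>0$, so $s_2\in\kh{0,1}$ and $\lambda\in\kh{0,1}$. Two things must be checked. First, \emph{compatibility with $\kh{w_1,a_1}$}: condition~1 of Definition~\ref{def:compprime} is immediate; condition~2 for actions in $A_0\cup\hkh{a_1}$ holds because agent~$1$'s rationality already gives $g\kh{a\given w_1,a_1}\ge 0$ there, so the only real check is that the fresh action $\kh{F',0}$ does not beat $a_1$ under $w_1$, i.e. $\lambda\,\mathbb{E}_{F^*}\fkh{w_1\kh{y}}\le \mathbb{E}_{F_1}\fkh{w_1\kh{y}}-c_1$. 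Second, \emph{agent~$2$'s choice}: under $w_2$ agent~$2$ (weakly) prefers $\kh{F',0}$, which pays the principal exactly $\kh{1-s_2}^2\mathbb{E}_{F^*}\fkh{y}=\Phi_2\kh{a_1}^2$.

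For the compatibility check, write $g^*\equiv g\kh{a^*\given w_1,a_1}$, so that $\mathbb{E}_{F_1}\fkh{w_1\kh{y}}-c_1=\kh{\mathbb{E}_{F^*}\fkh{w_1\kh{y}}-c^*}+g^*$ and the desired inequality reduces to $s_2\,\mathbb{E}_{F^*}\fkh{w_1\kh{y}}\ge c^*-g^*$. This is precisely where the hypothesis $\Phi_1\kh{w_1,a_1}<\Phi_2\kh{a_1}$ enters: since $\Phi_1\kh{w_1,a_1}\ge \sqrt{\mathbb{E}_{F^*}\fkh{y-w_1\kh{y}}}-\sqrt{g^*}$ by the definition of $\Phi_1$ in equation~\eqref{eqn:modifw1}, the hypothesis forces $\sqrt{\mathbb{E}_{F^*}\fkh{y-w_1\kh{y}}}-\sqrt{g^*}<\sqrt{\mathbb{E}_{F^*}\fkh{y}}-\sqrt{c^*}$, hence a lower bound on $\mathbb{E}_{F^*}\fkh{w_1\kh{y}}$ that, plugged in, collapses (by the same algebra as in the proof of Lemma~\ref{lem:A2}) to $\kh{1-s_2}\kh{\sqrt{c^*}-\sqrt{g^*}}^2\ge 0$. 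For agent~$2$'s choice, the inequality $s_2\mathbb{E}_{F_a}\fkh{y}-c_a\le s_2\mathbb{E}_{F^*}\fkh{y}-c^*$ for every $a\in A_0\cup\hkh{a_1}$ reduces, via the bound $\mathbb{E}_{F_a}\fkh{y}\le\kh{\sqrt{\mathbb{E}_{F^*}\fkh{y}}-\sqrt{c^*}+\sqrt{c_a}}^2$, to $\kh{1-s_2}\kh{\sqrt{c^*}-\sqrt{c_a}}^2\ge 0$; that bound on $\mathbb{E}_{F_a}\fkh{y}$ is exactly the statement that $a^*$ maximizes $\sqrt{\mathbb{E}_{F_a}\fkh{y}}-\sqrt{c_a}$ over $A_0\cup\hkh{a_1}$, i.e. that $a^*$ attains $\Phi_2\kh{a_1}$ in equation~\eqref{eqn:Phi2}.

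The main obstacle — and really the only delicate point — is that the constructed $A$ leaves agent~$2$ \emph{indifferent} between $\kh{F',0}$ and $a^*$ (and, when the maximizer ties, between $\kh{F',0}$ and other known actions), so the tie-breaking rule in the principal's favor would select $a^*$, which pays her strictly more than $\Phi_2\kh{a_1}^2$. As in the proof of Lemma~\ref{lem:A1prime}, I would resolve this by perturbing to $F'=\kh{1-\lambda+\epsilon}F^*+\kh{\lambda-\epsilon}\delta_0$ with a matching adjustment to its cost, so that agent~$1$ strictly prefers $a_1$ and agent~$2$ strictly prefers the low-payoff action, and then let $\epsilon\downarrow 0$; the principal's payoff along this sequence converges to $\Phi_2\kh{a_1}^2$, showing the guarantee is not strictly above $\Phi_2\kh{a_1}^2$. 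Combined with the lower bound, this pins the second-period guarantee at exactly $\Phi_2\kh{a_1}^2$, as claimed.
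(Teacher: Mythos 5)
Your proposal is correct and follows essentially the same route as the paper's proof: the identical Carroll-style lower bound, the identical worst-case technology $A=A_0\cup\hkh{a_1,\kh{F',0}}$ with $F'=\kh{1-s_2}F^*+s_2\delta_0$, and the same two algebraic reductions to $\kh{1-s_2}\kh{\sqrt{c^*}-\sqrt{g^*}}^2\ge 0$ (compatibility, where $\Phi_1<\Phi_2$ enters) and $\kh{1-s_2}\kh{\sqrt{c^*}-\sqrt{c_a}}^2\ge 0$ (agent 2's choice, where maximality of $a^*$ enters). The tie-breaking perturbation you flag is also how the paper handles indifference (it performs it once in Lemma \ref{lem:A1prime} and omits the repetition).
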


\begin{proof}[Proof of Lemma~\ref{lem:A2prime}]
Suppose the principal offers the linear contract $w_2\kh{y}=s_2 y$ with $s_2$ defined by equation \eqref{eqn:s2prime}. We first show that this guarantees her at least $\kh{\sqrt{\mathbb{E}_{F^*}\fkh{y}}-\sqrt{c^*}}^2.$

Let $\kh{F_2,c_2}$ be the action chosen by agent 2. By agent 2's rationality, we have $$\mathbb{E}_{F_2}\fkh{w_2\kh{y}}-c_2\ge \mathbb{E}_{F^*}\fkh{w_2\kh{y}}-c^*,$$
which further implies that
$$s_{2} \mathbb{E}_{F_{2}}[y]=\mathbb{E}_{F_{2}}\left[w_{2}(y)\right] \geq \mathbb{E}_{F_{2}}\left[w_{2}(y)\right]-c_{2} \geq \mathbb{E}_{F^*}\left[w_{2}(y)\right]-c^*=s_{2} \mathbb{E}_{F^*}[y]-c^*,$$
and hence
$$\mathbb{E}_{F_{2}}[y] \geq \mathbb{E}_{F^*}[y]-c^* / s_{2}.$$
Therefore, the principal's payoff in the second period is
$$\mathbb{E}_{F_{2}}\left[y-w_{2}(y)\right]=\mathbb{E}_{F_{2}}\left[\left(1-s_{2}\right) y\right] \geq\left(1-s_{2}\right)\left(\mathbb{E}_{F^*}[y]-c^* / s_{2}\right)=\left(\sqrt{\mathbb{E}_{F^*}[y]}-\sqrt{c^*}\right)^{2},$$
as desired.

Next we show that her payoff guarantee from this linear contract cannot be strictly higher, since $\kh{\sqrt{\mathbb{E}_{F^*}\fkh{y}}-\sqrt{c^*}}^2$ is exactly her payoff when the technology is $ {A}=A_0\cup \hkh{a_1,\kh{F',0}}$, with $F^{\prime}=\lambda F^*+(1-\lambda) \delta_{0}$ where $\lambda=1-\sqrt{c^* / \mathbb{E}_{F^*}[y]} \in[0,1]$.

The proof takes two steps. Let $g^*\equiv g\kh{a^*\given w_1,a_1}=\kh{\mathbb{E}_{F_1}\fkh{w_1\kh{y}}-c_1}-\kh{\mathbb{E}_{F^*}\fkh{w_1\kh{y}}-c^*}\ge 0$.

\paragraph{Step 1}$ {A}=A_0\cup \hkh{a_1,\kh{F',0}}$ is compatible with $\kh{w_1,a_1}$. That is, agent 1 chooses $a_1$ in response to $w_1$.

Agent 1's payoff from $\kh{F',0}$ is $\mathbb{E}_{F^{\prime}}\left[w_{1}(y)\right]=\lambda \mathbb{E}_{F^*}\left[w_{1}(y)\right]=\left(1-\sqrt{c^* / \mathbb{E}_{F^*}[y]}\right) \mathbb{E}_{F^*}\left[w_{1}(y)\right]$, and we have 
\eqns{\left(1-\sqrt{\frac{c^*}{\mathbb{E}_{F^*}[y]}}\right) \mathbb{E}_{F^*}\left[w_{1}(y)\right] \leq \mathbb{E}_{F_{1}}\left[w_{1}(y)\right]-c_{1}\quad\Leftrightarrow\quad&\left(1-\sqrt{\frac{c^*}{\mathbb{E}_{F^*}[y]}}\right) \mathbb{E}_{F^*}\left[w_{1}(y)\right] \leq\left(\mathbb{E}_{F^*}\left[w_{1}(y)\right]-c^*\right)+g^*\\
\Leftrightarrow\quad&\sqrt{\frac{c^*}{\mathbb{E}_{F^*}[y]}} \mathbb{E}_{F^*}\left[w_{1}(y)\right]-c^*+g^* \geq 0.}
From
\eqns{\sqrt{\mathbb{E}_{F^*}\fkh{y}}-\sqrt{c^*}=\Phi\kh{a_1}>\Theta\kh{w_1,a_1}\ge \sqrt{\mathbb{E}_{F^*}\left[y-w_{1}(y)\right]}-\sqrt{g^*},}
we obtain 
\eqns{\mathbb{E}_{F^*}\left[w_{1}(y)\right] \geq \mathbb{E}_{F^*}[y]-\left(\sqrt{\mathbb{E}_{F^*}[y]}-\sqrt{c^*}+\sqrt{g^*}\right)^{2},}
and thus 
\eqns{\sqrt{\frac{c^*}{\mathbb{E}_{F^*}[y]}} \mathbb{E}_{F^*}\left[w_{1}(y)\right]-c^*+g^*&\geq \sqrt{\frac{c^*}{\mathbb{E}_{F^*}[y]}} \cdot\left(\mathbb{E}_{F^*}[y]-\left(\sqrt{\mathbb{E}_{F^*}[y]}-\sqrt{c^*}+\sqrt{g^*}\right)^{2}\right)-c^*+g^*\\
&=\left(1-\sqrt{\frac{c^*}{\mathbb{E}_{F^*}[y]}}\right)\left(\sqrt{c^*}-\sqrt{g^*}\right)^{2} \geq 0,}
as desired. So we indeed have $\mathbb{E}_{F^{\prime}}\left[w_{1}(y)\right] \leq \mathbb{E}_{F_{1}}\left[w_{1}(y)\right]-c_{1}$, implying that agent 1 would choose  $a_1=\kh{F_1,c_1}$ in response to $w_1$.

\paragraph{Step 2}If $ {A}=A_0\cup \hkh{a_1,\kh{F',0}}$,  then agent 2 chooses $\kh{F',0}$  in response to  $w_2$,  leading to a payoff of $\kh{\sqrt{\mathbb{E}_{F^*}\fkh{y}}-\sqrt{c^*}}^2$ for the principal.

Agent 2's payoff from $\kh{F',0}$ is 
\eqns{\mathbb{E}_{F^{\prime}}\left[w_{2}(y)\right]=\lambda \mathbb{E}_{F^*}\left[s_{2} y\right]&=\left(1-\sqrt{\frac{c^*}{\mathbb{E}_{F^*}[y]}}\right) \cdot \sqrt{\frac{c^*}{\mathbb{E}_{F^*}[y]}} \cdot \mathbb{E}_{F^*}[y]\\
&=\left(\sqrt{\mathbb{E}_{F^*}[y]}-\sqrt{c^*}\right) \sqrt{c^*}=\sqrt{\frac{c^*}{\mathbb{E}_{F^*}[y]}} \cdot \mathbb{E}_{F^*}[y]-c^*\\
&=s_{2} \mathbb{E}_{F^*}[y]-c^*=\mathbb{E}_{F^*}\left[w_{2}(y)\right]-c^*.}
For any action $a_0=\kh{F_0,c_0}\in A_0\cup \hkh{a_1}$, agent 2's payoff from $a_0$ is $$\mathbb{E}_{F_{0}}\left[w_{2}(y)\right]-c_{0}=\sqrt{\frac{c^*}{ \mathbb{E}_{F^*}[y]} }\cdot \mathbb{E}_{F_{0}}[y]-c_{0},$$ and we have
\eqns{\sqrt{\frac{c^*}{ \mathbb{E}_{F^*}[y]} }\cdot \mathbb{E}_{F_{0}}[y]-c_{0} \leq\mathbb{E}_{F^*}\left[w_{2}(y)\right]-c^* \quad \Leftrightarrow \quad\sqrt{\frac{c^*}{ \mathbb{E}_{F^*}[y]} }\cdot \mathbb{E}_{F_{0}}[y]-c_{0} \leq\left(\sqrt{\mathbb{E}_{F^*}[y]}-\sqrt{c^*}\right) \sqrt{c^*}.}
From
$\sqrt{\mathbb{E}_{F^*}\fkh{y}}-\sqrt{c^*}=\Phi\kh{a_1}\ge {\sqrt{\mathbb{E}_{F_0}\fkh{y}}-\sqrt{c_0}},$
we obtain $\mathbb{E}_{F_{0}}[y] \leq\left(\sqrt{\mathbb{E}_{F^*}[y]}-\sqrt{c^*}+\sqrt{c_{0}}\right)^{2}$, and thus
\eqns{&\left(\sqrt{\mathbb{E}_{F^*}[y]}-\sqrt{c^*}\right) \sqrt{c^*}-\left(\sqrt{\frac{c^*}{\mathbb{E}_{F^*}[y]}} \cdot \mathbb{E}_{F_{0}}[y]-c_{0}\right)\\
\geq\,&\left(\sqrt{\mathbb{E}_{F^*}[y]}-\sqrt{c^*}\right) \sqrt{c^*}-\left(\sqrt{\frac{c^*}{\mathbb{E}_{F^*}[y]}} \cdot\left(\sqrt{\mathbb{E}_{F^*}[y]}-\sqrt{c^*}+\sqrt{c_{0}}\right)^{2}-c_{0}\right)\\
=\,&\left(1-\sqrt{\frac{c^*}{\mathbb{E}_{F^*}[y]}}\right)\left(\sqrt{c^*}-\sqrt{c_{0}}\right)^{2} \geq 0,}
as desired. So we indeed have $\mathbb{E}_{F_{0}}\left[w_{2}(y)\right]-c_{0} \leq \mathbb{E}_{F^*}\left[w_{2}(y)\right]-c^*=\mathbb{E}_{F^{\prime}}\left[w_{2}(y)\right]$, implying that agent 2 would choose $\kh{F',0}$ in response to $w_2$.

This leaves the principal with a payoff of
\eqns{\mathbb{E}_{F^{\prime}}\left[y-w_{2}(y)\right]=\lambda \mathbb{E}_{F^*}\left[\left(1-s_{2}\right) y\right]&=\left(1-\sqrt{\frac{c^*}{\mathbb{E}_{F^*}[y]}}\right)\left(1-\sqrt{\frac{c^*}{\mathbb{E}_{F^*}[y]}}\right) \cdot \mathbb{E}_{F^*}[y]\\
&=\kh{\sqrt{\mathbb{E}_{F^*}\fkh{y}}-\sqrt{c^*}}^2,}
as desired.
\\ \\
This completes the proof.
\end{proof}

We are now ready to prove Lemma \ref{lem:secondprime}.

\begin{proof}[Proof of Lemma \ref{lem:secondprime}]
Combining Lemmas \ref{lem:A1prime} and \ref{lem:A2prime}, we have shown that by offering the best of the two contracts: (i) $w_2\kh{y}=w_1\kh{y}+m\cdot \kh{y-w_1\kh{y}}$ with $m$ defined by equation \eqref{eqn:mprime}, and (ii) $w_2\kh{y}=s_2 y$ with $s_2$ defined by equation \eqref{eqn:s2prime},  the principal's payoff guarantee in the second period is exactly given by $\kh{ \max \left\{\Theta\kh{w_1,a_1},\Phi\kh{a_1}\right\}}^2.$ The principal's optimal second-period payoff guarantee, $V_{2}^{*}\left(w_{1}, a_{1}\right)$, is thus at least $\kh{ \max \left\{\Theta\kh{w_1,a_1},\Phi\kh{a_1}\right\}}^2.$

Now consider an arbitrary second-period contract $w_2$. It suffices to show that the principal's payoff guarantee is not strictly higher than $\kh{ \max \left\{\Theta\kh{w_1,a_1},\Phi\kh{a_1}\right\}}^2$ under $w_2$.

Let $a_0=\kh{F_0,c_0}$ be the action agent 2 will choose if the true technology is exactly $A_0\cup \hkh{a_1}$. Consider the following three cases.

\paragraph{Case 1.} $\mathbb{E}_{F_{0}}\left[w_{2}(y)\right]<c_{0}$.

Consider the second-period contract $w_2$ when $A=A_0\cup \hkh{a_1,\kh{\delta_0,0}}$, which is
compatible with $\kh{w_1,a_1}$. Agent 2's payoff from $\kh{\delta_0,0}$ is
$$w_{2}(0) \geq 0>\mathbb{E}_{F_{0}}\left[w_{2}(y)\right]-c_{0},$$
so he would prefer to take action $\kh{\delta_0,0}$. This leaves the  principal with a payoff of 
$$-w_{2}(0) \leq 0 \leq \Phi\left(a_{1}\right)^{2},$$
as desired.

\paragraph{Case 2.} $\mathbb{E}_{F_{0}}\left[w_{2}(y)\right]\ge c_{0}$, and it holds that
\eqn{\begin{aligned} \text{either}\quad\text{(i)}&\quad \mathbb{E}_{F_0}\left[w_{1}(y)\right]\le { \mathbb{E}_{F_1}\fkh{w_1\kh{y}}-c_1},\\
\text{or}\quad \text{(ii)}&\quad \mathbb{E}_{F_0}\fkh{w_2\kh{y}}<\frac{ \mathbb{E}_{F_0}\left[w_{1}(y)\right]}{ \mathbb{E}_{F_0}\left[w_{1}(y)\right]-\kh{ \mathbb{E}_{F_1}\fkh{w_1\kh{y}}-c_1}}c_0.
\end{aligned}\label{eqn:A3prime}}

Let $\lambda=1-c_0 /\mathbb{E}_{F_0}\fkh{w_2\kh{y}}\in[0,1]$ and let $F'$ be the mixture $\lambda F_0+(1-\lambda) \delta_{0}$. Consider the technology $ {A}=A_0\cup  \left\{a_1,\left(F^{\prime}, 0\right)\right\}$. We proceed with two steps.

\paragraph{Step 1} ${A}$ is compatible with $\kh{w_1,a_1}$. That is, agent 1 chooses $a_1$ in response to $w_1$.

Agent 1's payoff from $\kh{F',0}$  is\eqn{\mathbb{E}_{F^{\prime}}\left[w_{1}(y)\right]=\lambda \mathbb{E}_{F_0}\left[w_{1}(y)\right]&=\mathbb{E}_{F_0}\fkh{w_1\kh{y}}-\frac{ \mathbb{E}_{F_0}\left[w_{1}(y)\right]}{ \mathbb{E}_{F_0}\left[w_{2}(y)\right]}c_0\notag\\
&< \mathbb{E}_{F_1}\fkh{w_1\kh{y}}-c_1.\label{eqn:A4prime}}
Note that inequality \eqref{eqn:A4prime} holds exactly due to the assumptions in \eqref{eqn:A3prime}. So agent 1 would prefer to take action $a_1=\kh{F_1,c_1}$ when $ {A}=A_0\cup  \left\{a_1,\left(F^{\prime}, 0\right)\right\}$. 

\paragraph{Step 2} 
Agent 2 chooses $\left(F^{\prime}, 0\right)$ in response to $w_2$, resulting in the principal's payoff no more than $\Phi\kh{a_1}^2$.  

Agent 2's payoff from $\kh{F',0}$ is
\eqns{\mathbb{E}_{F^{\prime}}\left[w_{2}(y)\right]&=\lambda \mathbb{E}_{F_0}\left[w_{2}(y)\right]+(1-\lambda) w_{2}(0)\\
&\ge\lambda \mathbb{E}_{F_0}\left[w_{2}(y)\right]= \mathbb{E}_{F_0}\left[w_{2}(y)\right]-c_0.}
So he would prefer to take action $\kh{F',0}$ when $ {A}=A_0\cup  \left\{a_1,\left(F^{\prime}, 0\right)\right\}$.  

This leaves the principal with a payoff of
\eqn{\mathbb{E}_{F'}\fkh{y-w_2\kh{y}}&=\lambda\mathbb{E}_{F_0}\fkh{y-w_2\kh{y}}+\kh{1-\lambda}\kh{0-w_2\kh{0}}\notag\\
&\le\lambda\mathbb{E}_{F_0}\fkh{y-w_2\kh{y}}=\kh{1-\frac{c_0}{\mathbb{E}_{F_0}\fkh{w_2\kh{y}}}}\kh{\mathbb{E}_{F_0}\fkh{y}-\mathbb{E}_{F_0}\fkh{w_2\kh{y}}}\notag\\
&\le \left(\sqrt{\mathbb{E}_{F_{0}}[y]}-\sqrt{c_{0}}\right)^{2},\label{eqn:A51}}
which is no more than $\Phi\kh{a_1}^2$, as desired. The last inequality \eqref{eqn:A51},
\eqns{&\kh{1-\frac{c_0}{\mathbb{E}_{F_0}\fkh{w_2\kh{y}}}}\kh{\mathbb{E}_{F_0}\fkh{y}-\mathbb{E}_{F_0}\fkh{w_2\kh{y}}}\le  \left(\sqrt{\mathbb{E}_{F_{0}}[y]}-\sqrt{c_{0}}\right)^{2}\\
\Leftrightarrow\quad&\kh{\sqrt{\mathbb{E}_{F_0}\fkh{w_2\kh{y}}}-\sqrt{\frac{c_0\mathbb{E}_{F_0}\fkh{y}}{\mathbb{E}_{F_0}\fkh{w_2\kh{y}}}}}^2\ge 0,}
which always holds.

\paragraph{Case 3.} Both inequalities in \eqref{eqn:A3prime} are reversed, i.e., $$ \mathbb{E}_{F_0}\left[w_{1}(y)\right]> { \mathbb{E}_{F_1}\fkh{w_1\kh{y}}-c_1}\quad\text{and}\quad \mathbb{E}_{F_0}\fkh{w_2\kh{y}}\ge \frac{ \mathbb{E}_{F_0}\left[w_{1}(y)\right]}{ \mathbb{E}_{F_0}\left[w_{1}(y)\right]-\kh{ \mathbb{E}_{F_1}\fkh{w_1\kh{y}}-c_1}}c_0.$$

Let \eqns{\lambda&=\frac{\kh{ \mathbb{E}_{F_0}\fkh{w_2\kh{y}}-c_0}-\kh{ \mathbb{E}_{F_1}\fkh{w_1\kh{y}}-c_1}}{\mathbb{E}_{F_0}\fkh{w_2\kh{y}}-\mathbb{E}_{F_0}\fkh{w_1\kh{y}}},\\
c'&=\frac{\mathbb{E}_{F_0}\fkh{w_1\kh{y}}\kh{ \mathbb{E}_{F_0}\fkh{w_2\kh{y}}-c_0}-\mathbb{E}_{F_0}\fkh{w_2\kh{y}}\kh{ \mathbb{E}_{F_1}\fkh{w_1\kh{y}}-c_1}}{\mathbb{E}_{F_0}\fkh{w_2\kh{y}}-\mathbb{E}_{F_0}\fkh{w_1\kh{y}}},} and let $F'$ be the mixture $\lambda F_0+(1-\lambda) \delta_{0}$. Consider  the technology $ {A}=A_0\cup  \left\{a_1,\left(F^{\prime}, c'\right)\right\}$. We proceed with three steps.

\paragraph{Step 1} $\lambda\in\fkh{0,1}$ and $c'\ge 0$, so  $\left(F^{\prime}, c'\right)$ is a valid action.

Note that $$\mathbb{E}_{F_0}\fkh{w_2\kh{y}}\ge \frac{ \mathbb{E}_{F_0}\left[w_{1}(y)\right]}{ \mathbb{E}_{F_0}\left[w_{1}(y)\right]-\kh{ \mathbb{E}_{F_1}\fkh{w_1\kh{y}}-c_1}}c_0\ge \frac{ \mathbb{E}_{F_0}\left[w_{1}(y)\right]}{ \mathbb{E}_{F_0}\left[w_{1}(y)\right]-\kh{ \mathbb{E}_{F_0}\fkh{w_1\kh{y}}-c_0}}c_0=  \mathbb{E}_{F_0}\left[w_{1}(y)\right],$$ so the denominator of $\lambda$ and $c'$ is positive. 

Moreover, 
\eqns{\mathbb{E}_{F_0}\fkh{w_2\kh{y}}-c_0&\ge \frac{{ \mathbb{E}_{F_1}\fkh{w_1\kh{y}}-c_1}}{ \mathbb{E}_{F_0}\left[w_{1}(y)\right]-\kh{ \mathbb{E}_{F_1}\fkh{w_1\kh{y}}-c_1}}c_0\\
&\ge \frac{ { \mathbb{E}_{F_1}\fkh{w_1\kh{y}}-c_1}}{ \mathbb{E}_{F_0}\left[w_{1}(y)\right]-\kh{ \mathbb{E}_{F_0}\fkh{w_1\kh{y}}-c_0}}c_0={ \mathbb{E}_{F_1}\fkh{w_1\kh{y}}-c_1},}
so the numerator of $\lambda$ is positive. 

The numerator of $c'$ is positive because 
\eqns{&\mathbb{E}_{F_0}\fkh{w_1\kh{y}}\kh{ \mathbb{E}_{F_0}\fkh{w_2\kh{y}}-c_0}\ge \mathbb{E}_{F_0}\fkh{w_2\kh{y}}\kh{ \mathbb{E}_{F_1}\fkh{w_1\kh{y}}-c_1}\\\quad\Leftrightarrow\quad &\mathbb{E}_{F_0}\fkh{w_2\kh{y}}\ge \frac{ \mathbb{E}_{F_0}\left[w_{1}(y)\right]}{ \mathbb{E}_{F_0}\left[w_{1}(y)\right]-\kh{ \mathbb{E}_{F_1}\fkh{w_1\kh{y}}-c_1}}c_0.}
Finally, 
\eqns{&\kh{ \mathbb{E}_{F_0}\fkh{w_2\kh{y}}-c_0}-\kh{ \mathbb{E}_{F_1}\fkh{w_1\kh{y}}-c_1}\le \mathbb{E}_{F_0}\fkh{w_2\kh{y}}- \mathbb{E}_{F_0}\fkh{w_1\kh{y}}\\\quad\Leftrightarrow\quad&\mathbb{E}_{F_0}\fkh{w_1\kh{y}}-c_0\le  \mathbb{E}_{F_1}\fkh{w_1\kh{y}}-c_1,} so  $\lambda$ is indeed smaller than $1$.

\paragraph{Step 2}$ {A}$ is compatible with $\kh{w_1,a_1}$. That is, agent 1 chooses $a_1$ in response to $w_1$.

Agent 1's payoff from $\kh{F',c'}$ is
$$\mathbb{E}_{F^{\prime}}\left[w_{1}(y)\right]-c'=\lambda \mathbb{E}_{F_0}\left[w_{1}(y)\right]-c'=\mathbb{E}_{F_1}\fkh{w_1\kh{y}}-c_1,$$
so he would prefer to take action $a_1=\kh{F_1,c_1}$ when $ {A}=A_0\cup  \left\{a_1,\left(F^{\prime}, c'\right)\right\}$.

\paragraph{Step 3} 
Agent 2 chooses $\left(F^{\prime}, c'\right)$ in response to $w_2$, resulting in the principal's payoff no more than $\Theta\kh{w_1,a_1}^2$.  

Agent 2's payoff from $\kh{F',c'}$ is
\eqns{\mathbb{E}_{F^{\prime}}\left[w_{2}(y)\right]-c'&=\lambda \mathbb{E}_{F_0}\left[w_{2}(y)\right]+(1-\lambda) w_{2}(0)-c'\\
&\ge\lambda \mathbb{E}_{F_0}\left[w_{2}(y)\right]-c'= \mathbb{E}_{F_0}\left[w_{2}(y)\right]-c_0.}
So he would prefer to take action $\kh{F',c'}$ when $ {A}=A_0\cup  \left\{a_1,\left(F^{\prime}, c'\right)\right\}$. 

This leaves the principal with a payoff of
\eqn{\mathbb{E}_{F'}\fkh{y-w_2\kh{y}}&=\lambda\mathbb{E}_{F_0}\fkh{y-w_2\kh{y}}+\kh{1-\lambda}\kh{0-w_2\kh{0}}\notag\\
&\le\lambda\mathbb{E}_{F_0}\fkh{y-w_2\kh{y}}=\frac{\kh{ \mathbb{E}_{F_0}\fkh{w_2\kh{y}}-c_0}-\kh{ \mathbb{E}_{F_1}\fkh{w_1\kh{y}}-c_1}}{\mathbb{E}_{F_0}\fkh{w_2\kh{y}}-\mathbb{E}_{F_0}\fkh{w_1\kh{y}}}\kh{\mathbb{E}_{F_0}\fkh{y}-\mathbb{E}_{F_0}\fkh{w_2\kh{y}}}\notag\\
&\le\kh{\sqrt{\mathbb{E}_{F_0}\fkh{y-w_1\kh{y}}}-\sqrt{g\kh{a_0\given w_1,a_1}}}^2,\label{eqn:A6prime}}
which is no more than $\Phi\kh{{w}_1, a_1}^2$, as desired. The last inequality \eqref{eqn:A6prime},
\eqns{&\frac{\kh{ \mathbb{E}_{F_0}\fkh{w_2\kh{y}}-c_0}-\kh{ \mathbb{E}_{F_1}\fkh{w_1\kh{y}}-c_1}}{\mathbb{E}_{F_0}\fkh{w_2\kh{y}}-\mathbb{E}_{F_0}\fkh{w_1\kh{y}}}\kh{\mathbb{E}_{F_0}\fkh{y}-\mathbb{E}_{F_0}\fkh{w_2\kh{y}}}\le\kh{\sqrt{\mathbb{E}_{F_0}\fkh{y-w_1\kh{y}}}-\sqrt{g\kh{a_0\given w_1,a_1}}}^2 \\
\Leftrightarrow\quad&\frac{\kh{\mathbb{E}_{F_0}\fkh{w_2\kh{y}}-\mathbb{E}_{F_0}\fkh{w_1\kh{y}}-\sqrt{\mathbb{E}_{F_0}\fkh{y-w_1\kh{y}}}\cdot \sqrt{g\kh{a_0\given w_1,a_1}}}^2}{\mathbb{E}_{F_0}\fkh{w_2\kh{y}}-\mathbb{E}_{F_0}\fkh{w_1\kh{y}}}\ge 0,}
which always holds. (Recall that $g\kh{a_0\given w_1,a_1}=\kh{\mathbb{E}_{F_1}\fkh{w_1\kh{y}}-c_1}-\kh{\mathbb{E}_{F_0}\fkh{w_1\kh{y}}-c_0}\ge 0$.)
\\ \\
Summing up the above three cases, we prove that the principal's payoff guarantee is not strictly higher than  $\kh{ \max \left\{\Theta\kh{w_1,a_1},\Phi\kh{a_1}\right\}}^2$ under any second-period contract $w_2$.

This completes the proof.
\end{proof}

\subsubsection{Proofs for Subsection \ref{subsubsec:period1}}\label{app:proofp1prime}
To prove Lemma \ref{lem:affineprime}, we start by establishing the following Lemma \ref{lem:A3prime}.

\begin{Lem}\label{lem:A3prime}
Suppose the known technology $ {A}_0$ satisfies {lower bound on marginal cost}.
If $\Theta\kh{w_1,a_1}\ge  \Phi\kh{a_1}$ and $a^*=\kh{F^*,c^*}\in A_0 $ attains the maximum in equation \eqref{eqn:modifw1}, then (i) $c^*\le c_0$, (ii) $\mathbb{E}_{F^*}\fkh{y}\le\mathbb{E}_{F_0}\fkh{y}$, and (iii) $\mathbb{E}_{F^*}\fkh{w_1\kh{y}}\le \mathbb{E}_{F^*}\fkh{\hat{w}_1\kh{y}}=s_1 \mathbb{E}_{F^*}\fkh{y}$, where $\hat{w}_1$ is defined by equation \eqref{eqn:affine}.
\end{Lem}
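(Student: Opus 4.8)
The plan is to summarize each action $a\in A_0$ by two scalars --- its principal value $\pi_a\equiv\mathbb{E}_{F_a}[y-w_1(y)]$ and its incentive gap $g_a\equiv g\kh{a\given w_1,a_1}$ --- and to exploit that the objective in \eqref{eqn:modifw1} is exactly $\sqrt{\pi_a}-\sqrt{g_a}$. Writing $Y_a\equiv\mathbb{E}_{F_a}[y]$, $W_a\equiv\mathbb{E}_{F_a}[w_1(y)]$ and abbreviating the data of $a_0,a^*$ by subscript/superscript $0,*$, the engine of the whole proof is the identity $\pi_a-g_a=(Y_a-c_a)-P_1$, where $P_1\equiv\mathbb{E}_{F_1}[w_1(y)]-c_1$ is common to all $a$; thus $\pi_a-g_a$ is the social surplus $Y_a-c_a$ up to a constant. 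First I would record three preliminary facts. (1) Since $a_0$ is agent $1$'s best response within $A_0$ under $w_1$, we have $W_0-c_0\ge W^*-c^*$, equivalently $g^*\ge g_0\ge 0$. (2) Because $a^*$ attains the maximum in \eqref{eqn:modifw1} and $\Phi_1\kh{w_1,a_1}\ge\Phi_2\kh{a_1}>0$ (the last inequality strict by the positive-surplus assumption on $A_0$, via $\Phi_2\ge\sqrt{\mathbb{E}_F[y]}-\sqrt c>0$), the value $\sqrt{\pi^*}-\sqrt{g^*}$ is finite and positive; this forces $\pi^*\ge 0$, i.e.\ $W^*\le Y^*$ --- already the ``$\mathbb{E}_{F^*}[w_1(y)]\le s_1\mathbb{E}_{F^*}[y]$'' half of (iii) reduces to one more step --- and moreover $\pi^*>g^*$. (3) Optimality of $a^*$ gives $\sqrt{\pi^*}-\sqrt{g^*}\ge\sqrt{\pi_0}-\sqrt{g_0}$.

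For (ii) I argue by contradiction from $Y^*>Y_0$. Lower bound on marginal cost (Definition \ref{def:lbmc}) gives $c^*-c_0\ge Y^*-Y_0$, hence $Y^*-c^*\le Y_0-c_0$, i.e.\ $\pi^*-g^*\le\pi_0-g_0$ by the identity above. Combining $g^*\ge g_0$ with the monotonicity of $(g,D)\mapsto\sqrt{g+D}-\sqrt{g}$ (decreasing in $g$, increasing in $D$, here with $D\equiv\pi-g>0$) yields $\sqrt{\pi^*}-\sqrt{g^*}\le\sqrt{\pi_0}-\sqrt{g_0}$, which together with fact (3) is an equality; strict monotonicity then forces $g^*=g_0$, $\pi^*=\pi_0$, and so $a_0$ also attains the maximum of $\Phi_1$. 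Since $a_0$ has $Y_0<Y^*$, this is where a tie-breaking convention enters (select, among $\Phi_1$-maximizers in $A_0$, one of minimal expected output, i.e.\ take $a^*=a_0$), delivering $Y^*\le Y_0$. This equality case is the main obstacle: the inequalities alone cannot rule out a distinct maximizer of larger output, because equality forces $c^*-c_0=Y^*-Y_0$, making $a^*$ and $a_0$ economically identical in both periods.

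For (i) I split on the output comparison now controlled by (ii). If $Y^*<Y_0$, lower bound on marginal cost gives directly $c_0-c^*\ge Y_0-Y^*>0$. If $Y^*=Y_0$, I suppose $c^*>c_0$: then $Y^*-c^*<Y_0-c_0$, so $\pi^*-g^*<\pi_0-g_0$, i.e.\ $D^*<D_0$ strictly (with $D^*>0$ by fact (2)); the same monotonicity, now strict, gives $\sqrt{\pi^*}-\sqrt{g^*}<\sqrt{\pi_0}-\sqrt{g_0}$, contradicting fact (3). Hence $c^*\le c_0$ in all cases.

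Finally, for the remaining part of (iii) I combine (i), (ii) and fact (1). From $W_0-c_0\ge W^*-c^*$ with $c_0\ge c^*$ (case $Y^*=Y_0$), or from lower bound on marginal cost (case $Y^*<Y_0$), I obtain $W_0-W^*\ge Y_0-Y^*\ge 0$. Then
\[
W_0 Y^*-W^* Y_0=Y^*(W_0-W^*)-W^*(Y_0-Y^*)\ge (Y^*-W^*)(Y_0-Y^*)\ge 0,
\]
using $W^*\le Y^*$ from fact (2) and $Y^*\le Y_0$ from (ii). Dividing by $Y_0>0$ (which is implicit in the definition of $\hat w_1$ through \eqref{eqn:affine}) gives $W^*\le s_1 Y^*=\mathbb{E}_{F^*}[\hat w_1(y)]$, completing the proof.
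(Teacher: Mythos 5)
Your proof is correct and is assembled from the same three ingredients as the paper's argument (the best-response inequality $\mathbb{E}_{F_0}\fkh{w_1\kh{y}}-c_0\ge \mathbb{E}_{F^*}\fkh{w_1\kh{y}}-c^*$, the optimality of $a^*$ in $\Phi_1$, and lower bound on marginal cost), but you organize them differently: the surplus identity $\pi_a-g_a=\kh{\mathbb{E}_{F_a}\fkh{y}-c_a}-P_1$ together with the monotonicity of $\kh{g,D}\mapsto\sqrt{g+D}-\sqrt{g}$ replaces the paper's two-case split on whether $\sqrt{\mathbb{E}_{F_0}\fkh{y-w_1\kh{y}}}\ge\sqrt{g_0}$, and your algebraic route to (iii) never divides by $\mathbb{E}_{F^*}\fkh{w_1\kh{y}}$, so the zero-wage case needs no separate treatment. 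More importantly, the equality configuration you isolate --- $g^*=g_0$, $\pi^*=\pi_0$, hence $c^*-c_0=\mathbb{E}_{F^*}\fkh{y}-\mathbb{E}_{F_0}\fkh{y}>0$ --- is not merely an obstacle to one proof technique but a genuine counterexample to the lemma as literally stated: take $Y=\hkh{0,1,2}$, $w_1\kh{1}=0.5$, $w_1\kh{2}=1.5$, $A_0=\hkh{\kh{\delta_1,0.1},\kh{\delta_2,1.1}}$ and $a_1=a_0=\kh{\delta_1,0.1}$; then both known actions attain $\Phi_1=\sqrt{0.5}$, which exceeds $\Phi_2\approx 0.684$, lower bound on marginal cost holds with equality, and yet the maximizer $a^*=\kh{\delta_2,1.1}$ violates all three conclusions. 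The paper's own proof breaks at exactly this point: its Case 1 asserts the strict inequality $\mathbb{E}_{F^*}\fkh{y-w_1\kh{y}}-\mathbb{E}_{F_0}\fkh{y-w_1\kh{y}}>g^*-g_0$, which does not follow when $g^*=g_0$ (and the intermediate quotient $\kh{g^*-g_0}/\kh{\sqrt{g^*}+\sqrt{g_0}}$ is $0/0$ when both gaps vanish, as in the example above). Your repair --- select among the $\Phi_1$-maximizers in $A_0$ one of minimal expected output --- is the right one, since the only downstream use, in the proof of Lemma \ref{lem:affineprime}, lets the prover choose which maximizer to work with; strictly speaking the lemma's statement should be weakened from ``any maximizer'' to ``some maximizer'' to match.
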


\begin{proof}[Proof of Lemma~\ref{lem:A3prime}]
Let $g_0\equiv g\kh{a_0\given w_1,a_1}=\kh{\mathbb{E}_{F_1}\fkh{w_1\kh{y}}-c_1}-\kh{\mathbb{E}_{F_0}\fkh{w_1\kh{y}}-c_0}\ge 0$, and $g^*\equiv g\kh{a^*\given w_1,a_1}=\kh{\mathbb{E}_{F_1}\fkh{w_1\kh{y}}-c_1}-\kh{\mathbb{E}_{F^*}\fkh{w_1\kh{y}}-c^*}\ge 0$. By assumption, we have
$${\mathbb{E}_{F_0}\fkh{w_1\kh{y}}-c_0}\ge{\mathbb{E}_{F^*}\fkh{w_1\kh{y}}-c^*}\quad\Rightarrow\quad g^*\ge g_0.$$
Note that 
\eqn{\sqrt{\mathbb{E}_{F^*}\fkh{y-w_1\kh{y}}}-\sqrt{g^*}=\Theta\kh{w_1,a_1}\ge\sqrt{\mathbb{E}_{F_0}\fkh{y-w_1\kh{y}}}-\sqrt{g_0}.\label{eqn:A7prime}}

We first argue that $c^*\le c_0$ must hold, otherwise there will be a contradiction to the assumption that $A_0$ satisfies lower bound on marginal cost.

Suppose not, i.e., $c^*> c_0$. Consider the following two cases.

\paragraph{Case 1.} $\sqrt{\mathbb{E}_{F_0}\fkh{y-w_1\kh{y}}}\ge \sqrt{g_0}$.

From equation \eqref{eqn:A7prime} we obtain 
\eqns{\frac{\kh{\mathbb{E}_{F^*}\fkh{y-w_1\kh{y}}}-\kh{\mathbb{E}_{F_0}\fkh{y-w_1\kh{y}}}}{\sqrt{\mathbb{E}_{F^*}\fkh{y-w_1\kh{y}}}+\sqrt{\mathbb{E}_{F_0}\fkh{y-w_1\kh{y}}}}&=\sqrt{\mathbb{E}_{F^*}\fkh{y-w_1\kh{y}}}-\sqrt{\mathbb{E}_{F_0}\fkh{y-w_1\kh{y}}}\\
&\ge \sqrt{g^*}-\sqrt{g_0}=\frac{\kh{\mathbb{E}_{F_0}\fkh{w_1\kh{y}}-c_0}-\kh{\mathbb{E}_{F^*}\fkh{w_1\kh{y}}-c^*}}{\sqrt{g^*}+\sqrt{g_0}}.}
Since $\sqrt{\mathbb{E}_{F^*}\fkh{y-w_1\kh{y}}}>\sqrt{g^*}$ and $\sqrt{\mathbb{E}_{F_0}\fkh{y-w_1\kh{y}}}\ge \sqrt{g_0}$, the above expression implies that 
\eqns{&\kh{\mathbb{E}_{F^*}\fkh{y-w_1\kh{y}}}-\kh{\mathbb{E}_{F_0}\fkh{y-w_1\kh{y}}}>\kh{\mathbb{E}_{F_0}\fkh{w_1\kh{y}}-c_0}-\kh{\mathbb{E}_{F^*}\fkh{w_1\kh{y}}-c^*}\\
\quad\Rightarrow\quad &\mathbb{E}_{F^*}\fkh{y}-\mathbb{E}_{F_0}\fkh{y}>c^*-c_0>0,}
a contradiction to the assumption that $A_0$ satisfies lower bound on marginal cost!

\paragraph{Case 2.} $\sqrt{\mathbb{E}_{F_0}\fkh{y-w_1\kh{y}}}<\sqrt{g_0}$.

We have
$$\mathbb{E}_{F_0}\fkh{y-w_1\kh{y}}<g_0=\kh{\mathbb{E}_{F_1}\fkh{w_1\kh{y}}-c_1}-\kh{\mathbb{E}_{F_0}\fkh{w_1\kh{y}}-c_0}\quad\Rightarrow\quad\mathbb{E}_{F_0}\fkh{y}-c_0< \mathbb{E}_{F_1}\fkh{w_1\kh{y}}-c_1.$$
Similarly, from $\Theta\kh{w_1,a_1}\ge  \Phi\kh{a_1}>0$, we have $\sqrt{\mathbb{E}_{F^*}\fkh{y-w_1\kh{y}}}-\sqrt{g^*}>0$ , and thus
$$\mathbb{E}_{F^*}\fkh{y-w_1\kh{y}}>g^*=\kh{\mathbb{E}_{F_1}\fkh{w_1\kh{y}}-c_1}-\kh{\mathbb{E}_{F^*}\fkh{w_1\kh{y}}-c^*}\quad\Rightarrow\quad\mathbb{E}_{F^*}\fkh{y}-c^*> \mathbb{E}_{F_1}\fkh{w_1\kh{y}}-c_1.$$
It follows that
$$\mathbb{E}_{F^*}\fkh{y}-c^*>\mathbb{E}_{F_0}\fkh{y}-c_0\quad\Rightarrow\quad\mathbb{E}_{F^*}\fkh{y}-\mathbb{E}_{F_0}\fkh{y}>c^*-c_0>0,$$
another contradiction to the assumption that $A_0$ satisfies lower bound on marginal cost!
\\ \\
Summing up the above two cases, we show that
$c^*\le c_0$. It follows from lower bound on marginal cost that $\mathbb{E}_{F^*}\fkh{y}\le\mathbb{E}_{F_0}\fkh{y}$. 

Moreover, ${\mathbb{E}_{F_0}\fkh{w_1\kh{y}}-c_0}\ge {\mathbb{E}_{F^*}\fkh{w_1\kh{y}}-c^*}$ implies that 
$$\mathbb{E}_{F_0}\fkh{w_1\kh{y}}-\mathbb{E}_{F^*}\fkh{w_1\kh{y}}\ge c_0-c^*\ge 0\quad\Rightarrow\quad\mathbb{E}_{F_0}\fkh{w_1\kh{y}}\ge\mathbb{E}_{F^*}\fkh{w_1\kh{y}}  .$$
Equation \eqref{eqn:A7prime} implies that
\eqns{\sqrt{\mathbb{E}_{F^*}\fkh{y-w_1\kh{y}}}-\sqrt{\mathbb{E}_{F_0}\fkh{y-w_1\kh{y}}}\ge\sqrt{g^*}-\sqrt{g_0}\ge 0\quad\Rightarrow\quad  \mathbb{E}_{F^*}\fkh{y-w_1\kh{y}}\ge\mathbb{E}_{F_0}\fkh{y-w_1\kh{y}}.}
Combining the above two inequalities, we have
\eqn{&\frac{\mathbb{E}_{F^*}\fkh{y-w_1\kh{y}}}{\mathbb{E}_{F^*}\fkh{w_1\kh{y}}}\ge\frac{\mathbb{E}_{F_0}\fkh{y-w_1\kh{y}}}{\mathbb{E}_{F_0}\fkh{w_1\kh{y}}}\notag\\
\Rightarrow\quad &\frac{\mathbb{E}_{F^*}\fkh{y}}{\mathbb{E}_{F^*}\fkh{w_1\kh{y}}}\ge\frac{\mathbb{E}_{F_0}\fkh{y}}{\mathbb{E}_{F_0}\fkh{w_1\kh{y}}}=\frac{1}{s_1}\label{eqn:A8prime}\\
\Rightarrow\quad&\notag\mathbb{E}_{F^*}\fkh{w_1\kh{y}}\le s_1 \mathbb{E}_{F^*}\fkh{y}, }
as desired. The equality in \eqref{eqn:A8prime} follows from the definition in  \eqref{eqn:affine}.
\end{proof}

\begin{proof}[Proof of Lemma~\ref{lem:affineprime}]Consider an arbitrary action $a_1=\kh{F_1,c_1}$ agent $1$ would take under contract $\hat{w}_1$. We need to show that the principal's interim payoff guarantee, $U\kh{\hat{w}_1\given  a_1}$, is at least $U\kh{w_{1}}$. Lemma  \ref{lem:secondprime} shows that the principal's optimal second-period payoff guarantee is
$$\hat V_{2}^*\kh{\hat{w}_1, a_1}=\kh{ \max \left\{\Theta\kh{\hat w_1,a_1},\Phi\kh{a_1}\right\}}^2,$$
where
\eqns{\Theta\kh{\hat{w}_1,a_1}&=\max_{a\in A_0\cup\hkh{a_1}}\hkh{\sqrt{\mathbb{E}_{F_{a}}\fkh{y-\hat{w}_1\kh{y}}}-\sqrt{{g\kh{a\given \hat{w}_1,a_1}}}},\\
\Phi\kh{a_1}&=\max_{a\in A_0\cup\hkh{a_1}}\hkh{\sqrt{\mathbb{E}_{F_{a}}[y]}-\sqrt{c_{a}}},}
and her interim payoff guarantee is 
\eqns{\hat U\kh{\hat{w}_1\given a_1}&=\mathbb{E}_{F_1}\fkh{y-\hat{w}_{1}(y)}+\beta\cdot \hat V_{2}^*\kh{\hat{w}_1, a_1}.}

It suffices to construct another action $a_1'$,  which may be  taken by agent $1$ under $w_1$ and some other technology, such that  $\hat U\kh{w_{1}\given  a_1'}\le \hat U\kh{\hat{w}_1\given a_1}$. By assumption, $a_0$ is agent 1's best response if $A=A_0$, so an action $a_1'$ may be taken by agent $1$ under $w_1$ if and only if the incentive gap with respect to $a_0$ is nonnegative, i.e., $g\kh{a_0\given {w}_1,a_1'}\ge 0$. Consider the following two cases.

\paragraph{Case 1.} $\mathbb{E}_{F_1}\fkh{y}\ge  \mathbb{E}_{F_0}\fkh{y}$.

Let $a_1'=a_0$. When agent 1 takes action $a_0$ in response to $w_1$, the principal's resulting payoff in the first period is 
$$\mathbb{E}_{F_0}\fkh{y-w_1(y)}=\kh{1-s_1}\mathbb{E}_{F_0}\fkh{y}\le\kh{1-s_1} \mathbb{E}_{F_1}\fkh{y}=\mathbb{E}_{F_1}\fkh{y-\hat{w}_{1}(y)} ,$$
so her payoff in the first period under $\kh{w_1\given a_0}$ is weakly lower than under $\kh{\hat{w}_1\given a_1}$. 

Moreover, it follows from Lemma \ref{lem:secondprime} that  the principal's optimal second-period payoff guarantee is
$$\hat V_{2}^*\kh{{w}_1, a_0}=\kh{ \max \left\{\Theta\kh{w_1,a_0},\Phi\kh{a_0}\right\}}^2.$$
We now show that $\hat  V_{2}^*\kh{{w}_1,  a_0}\le\hat  V_{2}^*\kh{\hat{w}_1, a_1}$, which is equivalent to 
$$\max \left\{\Theta\kh{w_1,a_0},\Phi\kh{a_0}\right\}\le \max \left\{\Theta\kh{w_1,a_1},\Phi\kh{a_1}\right\}. $$
Note that
\eqns{\Theta\kh{w_1,a_0}&=\max_{a\in A_0}\hkh{\sqrt{\mathbb{E}_{F_{a}}\fkh{y-w_1\kh{y}}}-\sqrt{{g\kh{a\given w_1,a_0}}}},\\
\Phi\kh{a_0}&=\max_{a\in A_0}\hkh{\sqrt{\mathbb{E}_{F_{a}}[y]}-\sqrt{c_{a}}}.}
By definition we have $0<\Phi\kh{a_0}\le\Phi\kh{a_1}$. Thus, it suffices to show that whenever $\Theta\kh{w_1,a_0}>\Phi\kh{a_0}$,  it holds that $\Theta\kh{w_1,a_0}\le \Theta\kh{\hat{w}_1,a_1}.$ 

Let $a^*=\kh{F^*,c^*}\in A_0$ attains the maximum in $\Theta\kh{w_1,a_0}$. It follows from Lemma \ref{lem:A3prime} that $\mathbb{E}_{F^*}\fkh{{y}}\le \mathbb{E}_{F_0}\fkh{y}\le \mathbb{E}_{F_1}\fkh{{y}}$ and $\mathbb{E}_{F^*}\fkh{w_1\kh{y}}\le s_1\mathbb{E}_{F^*}\fkh{y}$. 

We claim that
\eqns{\Theta\kh{w_1,a_0}=\sqrt{\mathbb{E}_{F^*}\fkh{y-w_1\kh{y}}}-\sqrt{g\kh{a^*\given w_1,a_0}}\le\sqrt{\mathbb{E}_{F_1}\fkh{y-\hat{w}_1\kh{y}}}\le \Theta\kh{\hat{w}_1,a_1}.}
must hold. Suppose not, then
\eqns{\sqrt{\mathbb{E}_{F^*}\fkh{y-w_1\kh{y}}}-\sqrt{g\kh{a^*\given w_1,a_0}}>\sqrt{\mathbb{E}_{F_1}\fkh{y-\hat{w}_1\kh{y}}},}
which implies that
\eqns{ \sqrt{\kh{1-s_1}\mathbb{E}_{F^*}\fkh{{y}}}\ge \sqrt{\mathbb{E}_{F^*}\fkh{y-w_1\kh{y}}}-\sqrt{g\kh{a^*\given w_1,a_0}}>\sqrt{\mathbb{E}_{F_1}\fkh{y-\hat{w}_1\kh{y}}}=\sqrt{\kh{1-s_1}\mathbb{E}_{F_1}\fkh{{y}}},}
a contradiction to $\mathbb{E}_{F^*}\fkh{{y}}\le\mathbb{E}_{F_1}\fkh{{y}}$!

Therefore, whenever $\Theta\kh{w_1,a_0}>\Phi\kh{a_0}$,  it holds that $\Theta\kh{w_1,a_0}\le \Theta\kh{\hat{w}_1,a_1},$ which implies $\hat  V_{2}^*\kh{{w}_1,  a_0}\le\hat  V_{2}^*\kh{\hat{w}_1, a_1}$. The principal's interim payoff guarantee is 
\eqns{\hat U\kh{{w}_1\given a_0}&=\mathbb{E}_{F_0}\fkh{y-{w}_{1}(y)}+\beta\cdot \hat V_{2}^*\kh{{w}_1,  a_0}\\
&\le \mathbb{E}_{F_1}\fkh{y-\hat{w}_{1}(y)}+\beta\cdot\hat  V_{2}^*\kh{\hat{w}_1, a_1}=\hat U\kh{\hat{w}_1\given  a_1},}
as desired.

\paragraph{Case 2.}$\mathbb{E}_{F_1}\fkh{y}< \mathbb{E}_{F_0}\fkh{y}$.

Let $\lambda=\mathbb{E}_{F_1}[y]/ \mathbb{E}_{F_0}[y]\in\fkh{0,1}$ and let $F_1'$ be the mixture $\lambda F_0+\kh{1-\lambda}\delta_0$. Note that $\mathbb{E}_{F_1'}\fkh{y}=\mathbb{E}_{F_1}[y]$. Consider $a_1'=\kh{F_1',c_1}$. For any action $a$, the corresponding incentive gap  with respect to $a$ is $$g\kh{a\given w_1,a_1'}=\kh{\mathbb{E}_{F_1'}\fkh{{w}_1\kh{y}}-c_1}-\kh{\mathbb{E}_{F_a}\fkh{{w}_1\kh{y}}-c_a}.$$
Note that \eqns{\mathbb{E}_{F_1'}\fkh{w_1\kh{y}}-c_1&=\lambda\mathbb{E}_{F_0}\fkh{w_1\kh{y}}-c_1=\lambda s_1\mathbb{E}_{F_0}\fkh{{y}}-c_1=s_1\mathbb{E}_{F_1}\fkh{{y}}-c_1=\mathbb{E}_{F_1}\fkh{\hat{w}_{1}\kh{y}}-c_1,}
and 
$$\mathbb{E}_{F_0}\fkh{w_{1}\kh{y}}-c_0=s_1 \mathbb{E}_{F_0}\fkh{y}-c_0=\mathbb{E}_{F_0}\fkh{\hat{w}_1\kh{y}}-c_0.$$
Thus, 
\eqns{g\kh{a_0\given w_1,a_1'}&=\kh{\mathbb{E}_{F_1'}\fkh{{w}_1\kh{y}}-c_1}-\kh{\mathbb{E}_{F_0}\fkh{{w}_1\kh{y}}-c_0}\\
&=\kh{\mathbb{E}_{F_1}\fkh{\hat{w}_{1}\kh{y}}-c_1}-\kh{\mathbb{E}_{F_0}\fkh{\hat{w}_{1}\kh{y}}-c_0}\\
&=g\kh{a_0\given \hat{w}_1,a_1}\ge 0,}
implying that $a_1'$ may be chosen by agent $1$ in response to $w_1$ under some technology.

When agent $1$ chooses action $a_1'$ in response, the principal's resulting payoff in the first period is 
$$\mathbb{E}_{F_1'}\fkh{y-w_1(y)}=\lambda\mathbb{E}_{F_0}\fkh{y-w_1(y)}=\lambda\kh{1-s_1}\mathbb{E}_{F_0}\fkh{y}=\kh{1-s_1} \mathbb{E}_{F_1}\fkh{y}=\mathbb{E}_{F_1}\fkh{y-\hat{w}_{1}(y)},$$
so her payoff in the first period under $\kh{w_1\given a_1'}$ and under $\kh{\hat{w}_1\given a_1}$ are exactly equal. 

Moreover,  it follows from Lemma \ref{lem:secondprime} that  the principal's optimal second-period payoff guarantee under $\kh{w_1\given a_1'}$ is
$$\hat V_{2}^*\kh{{w}_1, a_1'}=\kh{ \max \left\{\Theta\kh{w_1,a_1'},\Phi\kh{a_1'}\right\}}^2.$$
We now show that $\hat  V_{2}^*\kh{{w}_1,  a_1'}\le\hat  V_{2}^*\kh{\hat{w}_1, a_1'}$, which is equivalent to 
$$\max \left\{\Theta\kh{w_1,a_1'},\Phi\kh{a_1'}\right\}\le \max \left\{\Theta\kh{w_1,a_1},\Phi\kh{a_1}\right\}. $$
Note that
\eqns{\Theta\kh{w_1,a_1'}&=\max_{a\in A_0\cup\hkh{a_1'}}\hkh{\sqrt{\mathbb{E}_{F_{a}}\fkh{y-w_1\kh{y}}}-\sqrt{{g\kh{a\given w_1,a_1'}}}},\\
\Phi\kh{a_1'}&=\max_{a\in A_0\cup\hkh{a_1'}}\hkh{\sqrt{\mathbb{E}_{F_{a}}[y]}-\sqrt{c_{a}}}.}
From $\mathbb{E}_{F_1'}\fkh{y}=\mathbb{E}_{F_1}[y]$, it follows that $\Phi\kh{a_1'}=\Phi\kh{a_1}>0$. Thus, it suffices to show that whenever $\Theta\kh{w_1,a_1'}>\Phi\kh{a_1'}$,  it holds that $\Theta\kh{w_1,a_1'}\le \Theta\kh{\hat{w}_1,a_1}.$ 

Let $a^*=\kh{F^*,c^*}\in A_0\cup\hkh{a_1'}$ attains the maximum in $\Theta\kh{w_1,a_1'}$. 
\begin{enumerate}
\item If $a^*=a_1'$, then
\eqns{\Theta\kh{w_1,a_1'}&=\sqrt{\mathbb{E}_{F_1'}\fkh{y-w_1\kh{y}}}-\sqrt{{g\kh{a_1'\given w_1,a_1'}}}\\
&=\sqrt{\mathbb{E}_{F_1}\fkh{y-\hat{w}_1\kh{y}}}-\sqrt{g\kh{a_1\given\hat{w}_1,a_1}}\le\Theta\kh{\hat{w}_1,a_1},}
as desired.
\item  If $a^*\in A_0$, then it follows from Lemma \ref{lem:A3prime} that $\mathbb{E}_{F^*}\fkh{w_1\kh{y}}\le \mathbb{E}_{F^*}\fkh{\hat{w}_1\kh{y}}$. 

From $\Theta\kh{w_1,a_1'}>\Phi\kh{a_1'}>0$, we have $\Theta\kh{w_1,a_1'}=\sqrt{\mathbb{E}_{F^*}\fkh{y-w_1\kh{y}}}-\sqrt{g\kh{a^*\given w_1,a_1'}}>0$, and thus
\eqns{&\mathbb{E}_{F^*}\fkh{y-w_1\kh{y}}>g\kh{a^*\given w_1,a_1'}=\kh{\mathbb{E}_{F_1'}\fkh{{w}_1\kh{y}}-c_1}-\kh{\mathbb{E}_{F^*}\fkh{{w}_1\kh{y}}-c^*}\\
\Rightarrow\quad &\mathbb{E}_{F^*}\fkh{y}-c^*>{\mathbb{E}_{F_1'}\fkh{{w}_1\kh{y}}-c_1}={\mathbb{E}_{F_1}\fkh{\hat{w}_1\kh{y}}-c_1}\\
\Rightarrow\quad&\mathbb{E}_{F^*}\fkh{y-
\hat{w}_1\kh{y}}>\kh{\mathbb{E}_{F_1}\fkh{\hat{w}_1\kh{y}}-c_1}-\kh{\mathbb{E}_{F^*}\fkh{\hat{w}_1\kh{y}}-c^*}=g\kh{a^*\given \hat{w}_1,a_1}.}
We claim that
\eqns{\Theta\kh{w_1,a_1'}=\sqrt{\mathbb{E}_{F^*}\fkh{y-w_1\kh{y}}}-\sqrt{g\kh{a^*\given w_1,a_1'}}\le\sqrt{\mathbb{E}_{F^*}\fkh{y-\hat{w}_1\kh{y}}}-\sqrt{g\kh{a^*\given \hat{w}_1,a_1}}\le \Theta\kh{\hat{w}_1,a_1}.}
must hold. Suppose not, then
\eqn{&\sqrt{\mathbb{E}_{F^*}\fkh{y-w_1\kh{y}}}-\sqrt{g\kh{a^*\given w_1,a_1'}}\le\sqrt{\mathbb{E}_{F^*}\fkh{y-\hat{w}_1\kh{y}}}-\sqrt{g\kh{a^*\given \hat{w}_1,a_1}}\notag\\
\Leftrightarrow\quad &\sqrt{\mathbb{E}_{F^*}\fkh{y-w_1\kh{y}}}-\sqrt{\mathbb{E}_{F^*}\fkh{y-\hat{w}_1\kh{y}}}\le\sqrt{g\kh{a^*\given w_1,a_1'}}-\sqrt{g\kh{a^*\given \hat{w}_1,a_1}}\notag\\
\Leftrightarrow\quad &\frac{{\mathbb{E}_{F^*}\fkh{y-w_1\kh{y}}}-{\mathbb{E}_{F^*}\fkh{y-\hat{w}_1\kh{y}}}}{\sqrt{\mathbb{E}_{F^*}\fkh{y-w_1\kh{y}}}+\sqrt{\mathbb{E}_{F^*}\fkh{y-\hat{w}_1\kh{y}}}}\le\frac{{g\kh{a^*\given w_1,a_1'}}-{g\kh{a^*\given \hat{w}_1,a_1}}}{\sqrt{g\kh{a^*\given w_1,a_1'}}+\sqrt{g\kh{a^*\given \hat{w}_1,a_1}}}.\label{eqn:A9prime}}
Note that
\eqns{{\mathbb{E}_{F^*}\fkh{y-w_1\kh{y}}}-{\mathbb{E}_{F^*}\fkh{y-\hat{w}_1\kh{y}}}=\mathbb{E}_{F^*}\fkh{\hat{w}_1\kh{y}}-\mathbb{E}_{F^*}\fkh{{w}_1\kh{y}}\ge 0,}
and that 
\eqns{&g\kh{a^*\given w_1,a_1'}-g\kh{a^*\given \hat{w}_1,a_1}\\
=\,&\kh{\kh{\mathbb{E}_{F_1'}\fkh{{w}_1\kh{y}}-c_1}-\kh{\mathbb{E}_{F^*}\fkh{{w}_1\kh{y}}-c^*}}-\kh{\kh{\mathbb{E}_{F_1}\fkh{\hat{w}_1\kh{y}}-c_1}-\kh{\mathbb{E}_{F^*}\fkh{\hat{w}_1\kh{y}}-c^*}}\\
=\,&\mathbb{E}_{F^*}\fkh{\hat{w}_1\kh{y}}-\mathbb{E}_{F^*}\fkh{{w}_1\kh{y}}\ge 0.}
Therefore, inequality \eqref{eqn:A9prime} is equivalent to 
\eqns{\frac{\mathbb{E}_{F^*}\fkh{\hat{w}_1\kh{y}}-\mathbb{E}_{F^*}\fkh{{w}_1\kh{y}}}{\sqrt{\mathbb{E}_{F^*}\fkh{y-w_1\kh{y}}}+\sqrt{\mathbb{E}_{F^*}\fkh{y-\hat{w}_1\kh{y}}}}\le\frac{\mathbb{E}_{F^*}\fkh{\hat{w}_1\kh{y}}-\mathbb{E}_{F^*}\fkh{{w}_1\kh{y}}}{\sqrt{g\kh{a^*\given w_1,a_1'}}+\sqrt{g\kh{a^*\given \hat{w}_1,a_1}}},}
which is implied by $\mathbb{E}_{F^*}\fkh{y-w_1\kh{y}}>g\kh{a^*\given w_1,a_1'}$ and $\mathbb{E}_{F^*}\fkh{y-
\hat{w}_1\kh{y}}>g\kh{a^*\given \hat{w}_1,a_1}$.
\end{enumerate}
Therefore, whenever $\Theta\kh{w_1,a_1'}>\Phi\kh{a_1'}$,  it holds that $\Theta\kh{w_1,a_1'}\le \Theta\kh{\hat{w}_1,a_1'},$ which implies $\hat V_{2}^*\kh{{w}_1,  a_1'}\le\hat V_{2}^*\kh{\hat{w}_1, a_1}$. The principal's interim payoff guarantee is
\eqns{\hat U\kh{{w}_1\given a_1'}&=\mathbb{E}_{F_1'}\fkh{y-{w}_{1}(y)}+\beta\cdot\hat  V_{2}^*\kh{{w}_1,  a_1'}\\
&\le \mathbb{E}_{F_1}\fkh{y-\hat{w}_{1}(y)}+\beta\cdot \hat V_{2}^*\kh{\hat{w}_1, a_1}=\hat U\kh{\hat{w}_1\given  a_1},}
as desired.
\\ \\
This completes the proof.
\end{proof}

\begin{proof}[Proof of Lemma~\ref{lem:optlinearprime}]

We first reformulate program \eqref{eqn:progprime} as an equivalent maximization problem with continuous objective function and compact feasible region.  Slightly abusing notation, we use $\hat U\kh{s_1}$ instead of $\hat U\kh{w_1}$ to denote the infimum value of program \eqref{eqn:progprime}.

Plug $w_1\kh{y}=s_1 y$ into equation \eqref{eqn:modifw1}. We may rewrite $\Theta\kh{{w}_1, a_1}$ as
\eqns{{\Theta\kh{w_1,a_1}=\max_{a\in A_0\cup\hkh{a_1}}\hkh{\sqrt{\kh{1-s_1}\mathbb{E}_{F_{a}}\fkh{y}}-\sqrt{{g\kh{a\given w_1,a_1}}}}}.}
Similarly, for $a\in A_0\cup\hkh{a_1}$, 
$$g\kh{a\given w_1,a_1}=\kh{s_1\mathbb{E}_{F_1}[y]-c_1}-\kh{s_1\mathbb{E}_{F_a}[y]-c_a}\ge 0.$$

Note that both the objective and the constraints of program \eqref{eqn:progprime} depend on the choice variables $\kh{F_1,c_1}$ only through the value of $\kh{\mathbb{E}_{F_1}\fkh{y},c_1}$. Rewrite $\mathbb{E}_{F_1}\fkh{y}=x $ and $c_1=z$ with $x,z\ge 0$. Plugging into the original program \eqref{eqn:progprime}, we obtain an equivalent program
\eqn{\begin{split}\hat U\kh{s_1}=\inf_{{x,z}}\quad&{\kh{1-s_1}x+\beta\cdot \max\hkh{\theta\kh{x,z;s_1}, \phi\kh{x,z}}^2}\\
\text{ s.t. }\,\,\,\,& s_1x-z\ge\max_{a\in A_0\cup\hkh{\kh{\delta_0,0}}}\hkh{ s_1\mathbb{E}_{F_a}\fkh{y}-c_a},\quad x,z\ge 0,
\end{split}\label{eqn:progreformprime}}
where
\eqn{\theta\kh{x,z;s_1}&\equiv\max\hkh{\sqrt{\kh{1-s_1}x},\,\max_{a\in A_0}\hkh{\sqrt{\kh{1-s_1}\mathbb{E}_{F_{a}}\fkh{y}}-\sqrt{\kh{s_1x-z}-\kh{ s_1\mathbb{E}_{F_a}\fkh{y}-c_a}}}},\label{eqn:A9prime1}}
and $\phi$ is defined by equation \eqref{eqn:A10prime}.

Let $\overline{x}\equiv\max_{a\in A_0}\mathbb{E}_{F_{a}}[y]>0$, and $\overline{v}\equiv{\max_{a\in A_0}\hkh{\sqrt{\mathbb{E}_{F_{a}}[y]}-\sqrt{c_{a}}}}>0$. Suppose $$\kh{F_0,c_0}\in\argmax_{a\in A_0\cup\hkh{\kh{\delta_0,0}}}\hkh{ s_1\mathbb{E}_{F_a}\fkh{y}-c_a}.$$
Note that $\kh{x_0,z_0}=\kh{\mathbb{E}_{F_0}\fkh{y},c_0}$ is feasible in program \eqref{eqn:progreformprime} and leads to objective value $${\kh{1-s_1}x_0+\beta\cdot\max \left\{\theta\kh{ {x}_0,z_0;s_1},\phi\kh{{x}_0,z_0}\right\}^2}\le \kh{1-s_1}\overline{x}+\beta\cdot\max \left\{\sqrt{\kh{1-s_1}\overline{x}},\overline{v}\right\}^2.$$ 
If $x\ge \kh{1+\beta }\overline{x}$, then 
\eqns{\kh{1-s_1}x+\beta\cdot \max\hkh{\theta\kh{x,z;s_1}, \phi\kh{x,z}}^2&\ge \kh{1-s_1}\kh{1+\beta}\overline{x}+\beta\cdot \overline{v}^2\\
&= \kh{1-s_1}\overline{x}+\beta \kh{1-s_1}\overline{x}+\beta\cdot \overline{v}^2\\
&\ge \kh{1-s_1}+\beta\cdot\max \left\{\sqrt{\kh{1-s_1}\overline{x}},\overline{v}\right\}^2. }
Therefore, restricting $x\in\fkh{0, \kh{1+\beta }\overline{x}}$ will not change the infimum of program \eqref{eqn:progreformprime}. Moreover, 
$$s_1x-z\ge 0\quad\Rightarrow\quad z\le s_1x\le x,$$
so restricting $\kh{x,z}\in\fkh{0,\kh{1+\beta }\overline{x}}^2$ will not change the infimum of program \eqref{eqn:progreformprime}. 

Consider the following program
\eqn{\begin{split}\hat \Psi^*\kh{s_1}\equiv\sup_{{x,z}}\quad&\hat \Psi\kh{x,z;s_1}\equiv-\kh{\kh{1-s_1}x+\beta\cdot \max\hkh{\theta\kh{x,z;s_1}, \phi\kh{x,z}}^2}\\
\text{ s.t. }\,\,\,\,\,&\kh{x,z}\in\hat \Gamma{\kh{s_1}},
\end{split}\label{eqn:progreform2prime}}
where $\theta$ is defined by equation \eqref{eqn:A9prime1}, $\phi$ is defined by equation \eqref{eqn:A10prime}, and $\hat \Gamma$ is defined as follows:
\eqns{\hat \Gamma{\kh{s_1}}\equiv \hkh{\kh{x,z}\in\fkh{0,\kh{1+\beta }\overline{x}}^2:s_1x-z\ge\max_{a\in A_0\cup\hkh{\kh{\delta_0,0}}}\hkh{ s_1\mathbb{E}_{F_a}\fkh{y}-c_a}}.}
By definition, $\hat \Psi:\fkh{0,\kh{1+\beta }\overline{x}}^2\times \fkh{0,1}\to\mathbb{R}$ is a continuous function, and $\hat \Gamma:\fkh{0,1}\rightrightarrows\fkh{0,\kh{1+\beta }\overline{x}}^2$ is a compact-valued and nonempty-valued correspondence. Moreover, the infimum of program \eqref{eqn:progreformprime}, $\hat U\kh{s_1}$, is given by ${-\hat \Psi^*\kh{s_1}}$.

Note that for each $s_1$, $\hat \Gamma\kh{s_1}$ defines a half plane intersecting a square, and that the half plane shifts linearly in $s_1$. Thus,  $\hat \Gamma$ is both upper and lower hemicontinuous. It then follows from Berge's maximum theorem that $\hat \Psi^*$ is continuous, and $$\hat \Gamma^*\kh{s_1}\equiv\hkh{\kh{x,z}\in\hat \Gamma\kh{s_1}:\hat \Psi\kh{x,z;s_1}=\hat \Psi^*\kh{s_1}}$$
is upper hemicontinuous with nonempty and compact values. As a consequence, a solution to program \eqref{eqn:progreform2prime} exists for all $s_1$, and the supremum can be replaced by maximum.  

It follows that the infimum in program \eqref{eqn:progreformprime} and therefore the original program \eqref{eqn:progprime} can both be replaced by minimum, and the resulting minimum value $\hat U\kh{s_1}=-\hat \Psi^*\kh{s_1}$ is continuous in $s_1$. Hence, $\hat U\kh{s_1}$ achieves a maximum over $\fkh{0,1}$. This maximum is also the optimal guarantee over all linear contracts.
\end{proof}

\begin{proof}[Proof of Theorem \ref{prop:1prime}]
According to Lemma \ref{lem:optlinearprime}, among all linear first-period contracts, there exists an optimal one, call it $w_1^*$. If $w_1$ is  any other (nonlinear) first-period contract that outperforms  $w_1^*$, then by Lemma \ref{lem:affineprime}, there is a linear contract that in turn does at least as well as $w_1$. But this contradicts the fact that $w_1^*$ is an optimal linear contract. Therefore, $w_1^*$ is optimal among all first-period contracts.
\end{proof}

\newpage

\renewcommand{\theequation}{\thesection.\arabic{equation}}
\setcounter{equation}{0}
\renewcommand{\theLem}{\thesection.\arabic{Lem}}
\setcounter{Lem}{0}
\renewcommand{\theDef}{\thesection.\arabic{Def}}
\setcounter{Def}{0}
\renewcommand{\theProp}{\thesection.\arabic{Prop}}
\setcounter{Prop}{0}

\section{Optimal First-period Contract}\label{subsec:opt}
In this appendix, we examine the structure of the optimal linear first-period contract in our dynamic model, and compare it with the optimal static contract identified by \cite{Carroll15}. This requires an exact calculation of the overall payoff guarantee from an arbitrary linear first-period contract, which becomes complicated when the principal knows a general set $A_0$ of available actions. In particular, in response to a linear first-period contract $w\kh{y}=s_1y$, the optimal payoff that agent $1$ can obtain from \emph{known} actions, $\max_{a\in A_0}\hkh{s_1\mathbb{E}_{F_a}[y]-c_a}$, changes with respect to $s_1$ in an intractable manner. This payoff, however,  is a key component of the constraint in the programs that characterize the principal's overall payoff guarantee. For this reason, we focus on the case where the principal knows only one action $a_0=\kh{F_0,c_0}$ available.

We demonstrate that the principal's second-period payoff guarantee takes a simpler form in the case of advancing technology (equation \eqref{eqn:Phi}). It turns out that the principal's overall payoff guarantee is also easier to characterize in this situation. In the proof of Theorem \ref{prop:1grow}, we set up a program \eqref{eqn:proggrow} that characterizes the principal's overall payoff guarantee from any linear first-period contract. We explicitly solve the program \eqref{eqn:proggrow} for any first-period share $s_1$, and the resulting overall payoff guarantee ${U}$ is depicted in Figure \ref{fig:Uhatgrow}. From this calculation, we can show that the optimal first-period share $s_1^*$ exists and is unique. Moreover, in Figure \ref{fig:Uhatgrow}, the optimal first-period share is greater than $s_{0} \equiv \sqrt{c_{0} / \mathbb{E}_{F_{0}}[y]}$, the optimal static share in \cite{Carroll15}.
\begin{figure}[!htbp]\centering
\includegraphics[height=0.36\textheight]{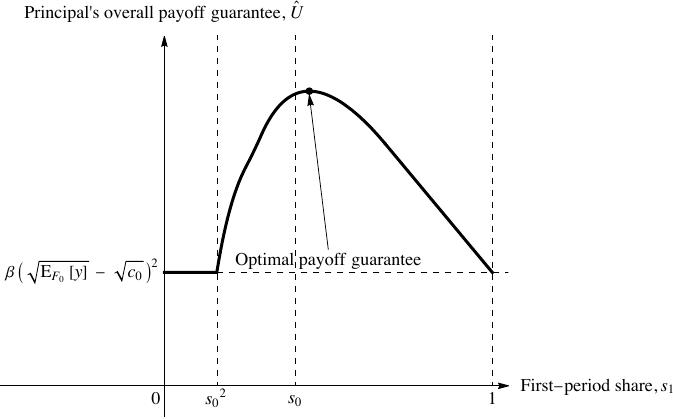}
\caption{Overall payoff guarantee in the case of advancing technology ($s_0=0.4$, $\beta=0.8$).}\label{fig:Uhatgrow}
\end{figure}

Proposition \ref{prop:exp} formally establishes this observation and exactly characterizes the optimal first-period share. It reveals an \textit{exploration effect} where the optimal first-period share offered to agent $1$ is always larger than the optimal static share $s_0$. Moreover, the exploration effect increases as the principal becomes more patient ($\beta$ increases), provided that $\beta<1$. When $\beta>1$, it starts to decrease, and vanishes as $\beta\to\infty$.
\begin{Prop}\label{prop:exp}
Suppose the principal knows only one available action $a_0=\kh{F_0,c_0}$, and let $s_{0} \equiv \sqrt{c_{0} / \mathbb{E}_{F_{0}}[y]}$ denote the optimal static share. In the case of advancing technology, the optimal first-period share $s_1^*$ is unique, and satisfies the following properties:
\begin{enumerate}
\item For all $\beta\in\kh{0,\infty}$, the optimal first-period share is larger than the optimal static share, i.e., $s_1^*>s_0$.
\item In both limiting cases $\beta\to 0$ and $\beta\to\infty$, $s_1^*$ approaches $s_0$.
\item $s_1^*$ is strictly increasing in $\beta$ if $\beta<1$, and is strictly decreasing if $\beta>1$.
\end{enumerate}
\end{Prop}
\begin{proof}[Proof of Proposition \ref{prop:exp}]
Available upon request.
\end{proof}
The pattern identified by Proposition \ref{prop:exp} is illustrated in Figure \ref{fig:alpha1snorm}.
\begin{figure}[!htbp]\centering
\includegraphics[height=0.3\textheight]{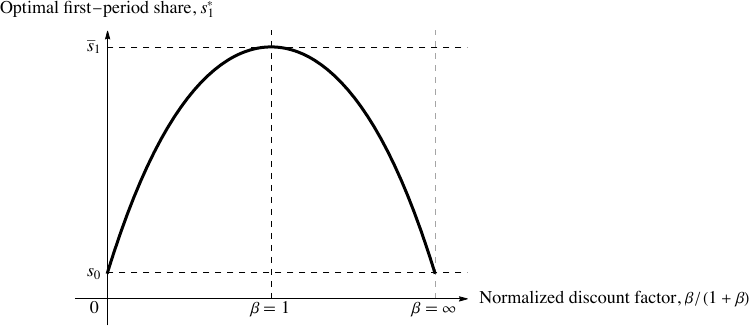}
\caption{The optimal first-period share $s_1^*$ in the case of advancing technology ($s_0=0.4$).}\label{fig:alpha1snorm}
\end{figure}
It is straightforward to understand the result that the dynamic model converges to the static model as the discount factor $\beta$ approaches 0. To get  intuition behind the opposite case, that is, when $\beta$ approaches infinity, the optimal first-period share $s_1^*$ approaches the optimal static share $s_0$ again, note that unlike in standard models where patience automatically leads to the option value of exploration, here the principal is concerned with the worst-case discovery. In the limiting case $\beta\to\infty$ where only the second period matters, there is no incentive for her to raise the first-period share $s_1$ from $s_0$, precisely because the worst-case technology always leaves the principal without any valuable discovery. The principal is thus essentially indifferent among any first-period contract in this limiting case, making the opportunity to explore in the first period completely useless to her.

In the case of constant technology, the principal adopts a more complex rule of updating (i.e., compatibility). Under all possible parameters choices, we aim to compute the exact solution to the analogous program \eqref{eqn:prog}, which characterizes  the overall payoff guarantee of any linear first-period contract $w_1\kh{y}=s_1 y$. Current results show that, for a range of parameter values (specifically, $\beta$ not too large), the resulting worst-case payoff guarantee $\hat U$ is a bell-shaped curve as depicted in Figure \ref{fig:UPalpha1}.  From this figure, the optimal first-period share appears to be unique, and smaller than the optimal static share $s_0$.
\begin{figure}[!htbp]\centering
\includegraphics[height=0.36\textheight]{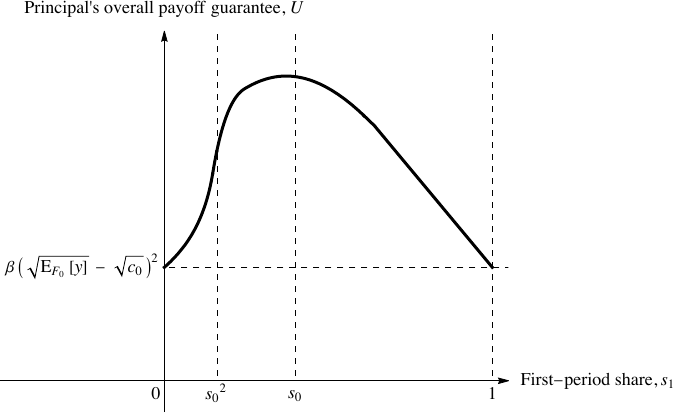}
\caption{Overall payoff guarantee in the case of constant technology ($s_0=0.4$, $\beta=0.8$).}\label{fig:UPalpha1}
\end{figure}

Now we explain why the principal chooses to lower the share offered to agent $1$ compared to the optimal static share in \cite{Carroll15}. Note that this result is different from the previous case of advancing technology due to the distinct rule of updating, thus resulting in a different optimal second-period payoff guarantee (equation \eqref{eqn:optsecond}). Within the parameter values we tried, the true worst-case technology $A$ is such that, after offering first-period contract $w_1$ and observing agent $1$'s selected action $a_1$, the principal optimally selects the second response among the four candidates of optimal second-period contracts, namely, a modified $w_1$ with compensation to agent $2$. Based on this observation, it won't be worst-case optimal for the principal to offer a strictly higher share compared to the optimal static share in the first period, in anticipation of an even higher share in the subsequent period. Instead, the principal benefits from reducing the share in the first period to hedge against the risk of increasing the share in the second period. 

We hope to finish the subsequent calculations to formally confirm this observation, in order to better understand the exploration effect in the case of constant technology. In particular, we are interested in whether the optimal first-period share $s_1^*$ approaches the optimal static share $s_0$ again as the discount factor $\beta$ approaches infinity.

\end{appendices}

\end{document}